\documentclass[10pt,journal,compsoc]{IEEEtran}
\ifCLASSOPTIONcompsoc
  \usepackage[nocompress]{cite}
\else
  \usepackage{cite}
\fi
\usepackage{ragged2e}
\usepackage{microtype}

\usepackage[utf8]{inputenc}
\input{mysymbol.sty}
\usepackage{amsmath}
\usepackage{amssymb}
\usepackage{amsthm}
\usepackage{bbm}
\usepackage{float,array}

\usepackage{mdframed}
\newmdtheoremenv{boxtheorem}{Theorem}

\usepackage{colortbl}

\usepackage{soul}
\usepackage[font=footnotesize,labelfont=sf,textfont=sf,up, up]{caption}
\usepackage{graphicx} 
\usepackage{hyperref, bigints}

\newtheorem{theorem}{Theorem}
\newtheorem{definition}{Definition}
\newtheorem{assumption}{Assumption}
\newtheorem{proposition}{Proposition}
\newtheorem{lemma}{Lemma}
\newtheorem{corollary}{Corollary}

\newtheorem{remark}{Remark}

\def\diag{\mbox{diag}}

\newcommand{\vertiii}[1]{{\left\vert\kern-0.25ex\left\vert\kern-0.25ex\left\vert#1 \right\vert\kern-0.25ex\right\vert\kern-0.25ex\right\vert}}
\newcommand{\leb}{L^2([0,1])}
\newcommand{\R}{\mathbb{R}}
\newcommand{\vf}{\varphi}
\newcommand{\vv}{\bbv}
\newcommand{\one}{\mathbf{1}}

\newcommand{\degvec}{\bbc^{\mathrm{d}}}
\newcommand{\degfun}{c^{\mathrm{d}}}
\newcommand{\eigvec}{\bbc^{\mathrm{e}}}
\newcommand{\eigfun}{c^{\mathrm{e}}}
\newcommand{\katvec}{\bbc^{\mathrm{k}}}
\newcommand{\katfun}{c^{\mathrm{k}}}

\newcommand{\prfun}{c^{\mathrm{pr}}}
\newcommand{\prob}{\textup{Pr}}
\newcommand{\pagevec}{\bbc^{\mathrm{p}}}

\begin{document}

\title{Centrality measures for graphons: Accounting for uncertainty in networks}

\author{Marco Avella-Medina, Francesca Parise, Michael T. Schaub, and Santiago Segarra,~\IEEEmembership{Member,~IEEE}\thanks{Authors are ordered alphabetically. All authors contributed equally.  M. Avella-Medina is with the Department of Statistics, Columbia University. F. Parise is with the Laboratory for Information and Decision Systems, MIT\@.  M. Schaub is with the Institute for Data, Systems, and Society, MIT and with the Department of Engineering Science, University of Oxford, UK\@. S. Segarra is with the Department of Electrical and Computer Engineering, Rice University. Emails: {\tt\footnotesize marco.avella@columbia.edu, \{parisef, mschaub\}@mit.edu, segarra@rice.edu}. Funding: This work was supported by the Swiss National Science Foundation grants P2EGP1\_168962, P300P1\_177746 (M. Avella-Medina) and P2EZP2\_168812, P300P2\_177805 (F. Parise); the European Union's Horizon 2020 research and innovation programme under the Marie Sklodowska-Curie grant agreement No 702410 (M. Schaub); the Spanish MINECO TEC2013-41604-R and the MIT IDSS Seed Fund Program (S. Segarra).}}
\IEEEtitleabstractindextext{%
\begin{abstract} 
    As relational datasets modeled as graphs keep increasing in size and their data-acquisition is permeated by uncertainty, graph-based analysis techniques can become computationally and conceptually challenging. 
    In particular, node centrality measures rely on the assumption that the graph is perfectly known --- a premise not necessarily fulfilled for large, uncertain networks. 
    Accordingly, centrality measures may fail to faithfully extract the importance of nodes in the presence of uncertainty. 
    To mitigate these problems, we suggest a statistical approach based on graphon theory: we introduce formal definitions of centrality measures for graphons and establish their connections to classical graph centrality measures. 
    A key advantage of this approach is that centrality measures defined at the modeling level of graphons are inherently robust to stochastic variations of specific graph realizations. 
    Using the theory of linear integral operators, we define degree, eigenvector, Katz and PageRank centrality functions for graphons and establish concentration inequalities demonstrating that graphon centrality functions arise naturally as limits of their counterparts defined on sequences of graphs of increasing size. 
    The same concentration inequalities also provide high-probability bounds between the graphon centrality functions and the centrality measures on any sampled graph, thereby establishing a measure of uncertainty of the measured centrality score.
\end{abstract}

\begin{IEEEkeywords}
Random graph theory, Networks, Graphons, Centrality measures, Stochastic block model 
\end{IEEEkeywords}}

\maketitle
\IEEEdisplaynontitleabstractindextext%
\IEEEpeerreviewmaketitle%

\IEEEraisesectionheading{\section{Introduction}\label{sec:introduction}}
\IEEEPARstart{M}{any} biological~\cite{Bu03}, social~\cite{Kleinberg99}, and economic~\cite{Garas2010} systems can be better understood when interpreted as networks, comprising a large number of individual components that interact with each other to generate a global behavior.
These networks can be aptly formalized by graphs, in which nodes denote individual entities, and edges represent pairwise interactions between those nodes.
Consequently, a surge of studies concerning the modeling, analysis, and design of networks have appeared in the literature, using graphs as modeling devices.

A fundamental task in network analysis is to identify salient features in the underlying system, such as key nodes or agents in the network. 
To identify such important agents, researchers have developed \emph{centrality measures} in various contexts~\cite{Friedkin1991, Freeman1977,Freeman1978,Bonacich1987,Borgatti2006a}, each of them capturing different aspects of node importance.
Prominent examples for the utility of centrality measures include the celebrated PageRank algorithm~\cite{Page99,Gleich2015}, employed in the search of relevant sites on the web, as well as the identification of influential agents in social networks to facilitate viral marketing campaigns~\cite{Kempe2003}.

A crucial assumption for the applicability of these centrality measures is that the observation of the underlying network is complete and noise free.
However, for many systems we might be unable to extract a complete and accurate graph-based representation, e.g., due to computational or measurement constraints,  errors in the observed data, or because the network itself might be changing over time.
For such reasons,  some recent approaches have considered the issue of robustness of centrality measures~\cite{Costenbader2003,Borgatti2006,Benzi2015,Segarra2016} and general network features~\cite{balachandran2013inference}, as well as their computation in dynamic graphs~\cite{Lerman2010,Pan2011,Grindrod2013}. The closest work to ours is~\cite{Dasaratha2017},  where convergence results are derived for eigenvector and Katz centralities in the context of random graphs generated from a stochastic block model.

As the size of the analyzed systems continues to grow, traditional tools for network analysis have been pushed to their limit. 
For example, systems such as the world wide web, the brain, or social networks can consist of billions of interconnected agents, leading to computational challenges and the irremediable emergence of uncertainty in the observations.
In this context, \textit{graphons} have been suggested as an alternative framework to analyze large networks~\cite{Borgs2017,Lovasz2012}.   
While graphons have been initially studied as limiting objects of large graphs~\cite{Lovasz2006,Borgs2008,Borgs2012}, they also provide a rich non-parametric modeling tool for networks of any size~\cite{Bickel2009,bickel2011method,airoldietal2013,Yang2014}.
In particular, graphons encapsulate a broad class of network models including the stochastic block model~\cite{Holland1983,Snijders1997}, random dot-product graphs~\cite{Young2007}, the infinite relational model~\cite{Kemp2006}, and others~\cite{Goldenberg2010}.
A testament of the practicality of graphons is their use in applied disciplines such as signal processing~\cite{Morency2017}, collaborative learning~\cite{Lee2017}, and control~\cite{Gao2017}.

In this work we aim at harnessing the additional flexibility provided by the graphon framework to suggest a statistical approach to  agents' centralities that inherently accounts for network uncertainty, as detailed next.

\subsection{Motivation}
 Most existing applications of network centrality measures follow the paradigm in Fig.~\ref{fig:motivation}a: a specific graph --- such as a social network with friendship connections --- is observed, and conclusions about the importance of each agent are then drawn based on this graph, e.g., which individuals have  more friends or which have the most influential connections.
Mathematically, these notions of importance  are encapsulated in a centrality measure that ranks the nodes according to the observed network structure.
For instance, the idea that importance derives from having the most friends is captured by degree centrality.
Since the centrality of any node is computed solely from the network structure~\cite{Friedkin1991, Freeman1977,Freeman1978,Bonacich1987,Borgatti2006a}, a crucial assumption hidden in this analysis is that the empirically observed network captures all the data we care about.

However, in many instances in which centrality measures are employed, this assumption is arguably not fulfilled: we typically do not observe the complete network at once.
Further, even those parts we observe contain measurement errors, such as false positive or false negative links, and other forms of uncertainty.
The key question is therefore how to identify crucial nodes via network-based centrality measures without having access to an accurate depiction of the `true' latent network.

One answer to this problem is to adopt a statistical inference-based viewpoint towards centrality measures, by assuming that the observed graph is a specific realization of an underlying stochastic generative process; see Fig.~\ref{fig:motivation}b. 
In this work, in particular, we use graphons to model such underlying generative process, because they provide a rich non-parametric statistical framework.
Further, it has been recently shown that graphons  can  be efficiently estimated from one (or multiple) noisy graph observations~\cite{airoldietal2013,zhangetal2017}. 
Our main contribution is to show that, based on the inferred graphon, one can  compute a latent centrality profile of the nodes that we term \emph{graphon centrality function}.
This graphon centrality may be seen as a fundamental measure of node importance, irrespective of the specific realization of the graph at hand.
This leads to a robust estimate of the centrality profiles of all nodes in the network. 
In fact, we provide high-probability bounds between the distance of such latent graphon centrality functions and the centrality profiles in any realized network, in terms of the network size.

\begin{figure}[tb!]
    \centering
    \includegraphics[width=\columnwidth]{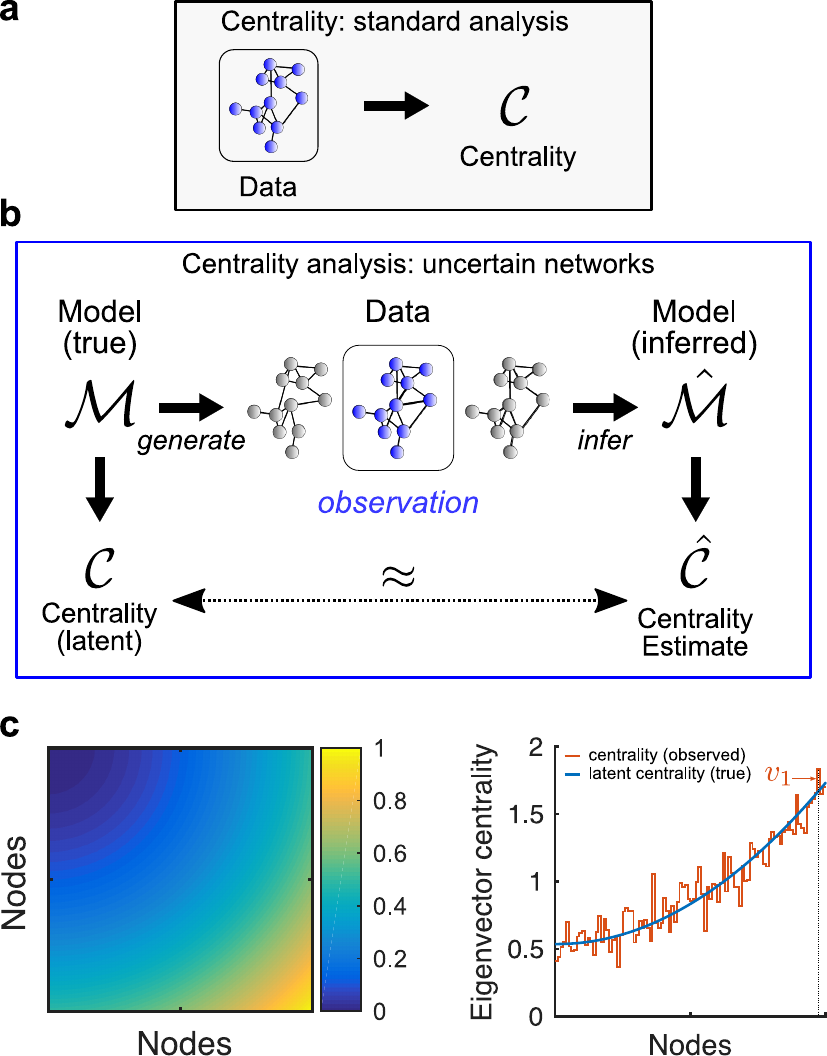}
    \caption{Schematic --- Network centrality analysis. \textbf{(a)} Classical centrality analysis computes a centrality measure purely based on the observed network. \textbf{(b)} If networks are subject to uncertainty, we may adopt a statistical perspective on centrality, by positing that the observed network is but one realization of a true, unobserved latent model. Inference of the model then would lead to a centrality estimate that accounts for the uncertainty in the data in a well-defined manner.
    \textbf{(c)} Illustrative example. 
    Left: A network of 100 nodes is generated according to a graphon model with a well-defined increasing connectivity pattern. Right: This graphon model defines a latent (expected) centrality for each node (blue curve). The centralities of a single realization of the model (red curve) will in general not be equivalent to the latent centrality, but deviate from it. Estimating the graphon-based centrality thus allows us to decompose the observed centrality into an expected centrality score (blue), and a fluctuation that is due to randomness.
}%
    \label{fig:motivation}
\end{figure}

To illustrate the dichotomy of the standard approach towards centrality and the one outlined here, let us consider the graphon in Fig.~\ref{fig:motivation}c. 
Graphons will be formally defined in Section~\ref{Ss:prelim_graphons}, but the fundamental feature is that it defines a random graph model from where graphs of any pre-specified size can be obtained. 
If we generate one of these graphs with $100$ nodes, we can apply the procedure in Fig.~\ref{fig:motivation}a to obtain a centrality value for each agent, as shown in the red curve in Fig.~\ref{fig:motivation}c.
In the standard paradigm we would then sort the centrality values to find the most central nodes, which in this case would correspond to the node marked as $v_1$ in Fig.~\ref{fig:motivation}c.
On the other hand, if we have access to the generative graphon model (or an estimate thereof), then we can compute the continuous graphon centrality function and compare the deviations from it in the specific graph realization; see blue and red curves in Fig.~\ref{fig:motivation}c. 

The result is that while $v_1$ is the most central node in the specific graph realization (see red curve), we would expect it to be less central within the model-based framework since the two nodes to its right have higher latent centrality (see blue curve).
Stated differently, in this specific realization, $v_1$ benefited from the random effects in terms of its centrality. 
If another random graph is drawn from the same graphon model, the rank of $v_1$ might change, e.g., node $v_1$ might become less central. 
Based on a centrality analysis akin to Fig.~\ref{fig:motivation}a, we would conclude that the centrality of this node decreased relative to other agents in the network. 
However, this difference is exclusively due to random variations and thus not statistically significant. 
The approach outlined in Fig.~\ref{fig:motivation}b and, in particular, centrality measures defined on graphons thus provide a statistical framework to analyze centralities in the presence of uncertainty, shielding us from making the wrong conclusion about the change in centrality of node $v_1$ if the network is subject to uncertainty.

A prerequisite to apply the  perspective outlined above is to have a consistent theory of centrality measures for graphons, with well-defined limiting behaviors and well-understood convergence rates.
Such a theory is developed in this paper, as we detail in the next section.

\subsection{Contributions and article structure}
Our contributions are listed below.\\
1) We develop a theoretical framework and definitions for centrality measures on graphons.
Specifically, using the existing spectral theory of linear integral operators, we define the degree, eigenvector, Katz and PageRank centrality functions (see Definition~\ref{def:graphons_centrality}).\\
2) We discuss and illustrate three different analytical approaches to compute such centrality functions (see Section~\ref{sec:computing_centrality}).\\
3) We derive concentration inequalities showing that our newly defined graphon centrality functions are natural limiting objects of centrality measures for finite graphs. These concentration inequalities improve the current state of the art and constitute the main technical results of this paper (see Theorems~\ref{Thm1} and~\ref{Thm2}).\\
4) We illustrate how such bounds can be used to quantify the distance between the latent graphon centrality function and the centrality measures of finite graphs sampled from the graphon.

The remainder of the article is structured as follows.
In Section~\ref{S:preliminaries} we review preliminaries regarding graphs, graph centralities, and graphons.
Subsequently, in Section~\ref{sec:centrality_definitions} we recall the definition of the graphon operator and use it to introduce centrality measures for graphons.
Section~\ref{sec:computing_centrality} discusses how centrality measures for graphons can be computed using different strategies, and provides some detailed numerical examples.
Thereafter, in Section~\ref{S:convergence}, we derive our main convergence results. Section~\ref{S:discussion} provides concluding remarks.
Appendix  A contains proofs omitted in the paper. Appendix B provided in the supplementary material presents some auxiliary results and discussions. 

\textit{Notation:} The entries of a matrix $\mathbf{X}$ and a (column) vector $\mathbf{x}$ are denoted by $X_{ij}$ and $x_i$, respectively; however, in some cases $[\bbX]_{ij}$ and $[\bbx]_i$ are used for clarity.
The notation $^T$ stands for transpose. 
$\diag(\bbx)$ is a diagonal matrix whose $i$th diagonal entry is $x_i$. $\lceil x \rceil$ denotes the ceiling function that returns the smallest integer larger than or equal to $x$.
Sets are represented by calligraphic capital letters, and $1_\ccalB(\cdot)$ denotes the indicator function over the set $\ccalB$. 
$\mathbf{0}$, $\mathbf{1}$, $\bbe_i$, and $\bbI$ refer to the all-zero vector, the all-one vector, the $i$-th canonical basis vector, and the identity matrix, respectively. 
The symbols $\vv$, $\vf$, and $\lambda$ are reserved for eigenvectors, eigenfunctions, and eigenvalues, respectively. 
Additional notation is provided at the beginning of Section \ref{sec:centrality_definitions}.

\section{Preliminaries}\label{S:preliminaries}

In Section~\ref{Ss:prelim_centrality} we introduce basic graph-theoretic concepts as well as the notion of node centrality measures for finite graphs, emphasizing the four measures studied throughout the paper. A brief introduction to graphons and their relation to random graph models is given in Section~\ref{Ss:prelim_graphons}.

\subsection{Graphs and centrality measures}\label{Ss:prelim_centrality}
An undirected and unweighted graph $\ccalG = (\ccalV, \ccalE)$ consists of a set $\ccalV$ of $N$ nodes or vertices and an edge set $\ccalE$ of unordered pairs of elements in $\ccalV$. An alternative representation of such a graph is through its adjacency matrix $\bbA \in \{0 ,1 \}^{N \times N}$, where $A_{ij}=A_{ji}=1$ if $(i,j) \in \ccalE$ and $A_{ij}=0$ otherwise. In this paper  we consider simple graphs (i.e., without self-loops), so that $A_{ii} = 0$ for all $i$.

Node centrality is a measure of the importance of a node within a graph. This importance is not based on the intrinsic nature of each node, but rather on the location that the nodes occupy within the graph. 
More formally, a centrality measure assigns a nonnegative centrality value to every node such that the higher the value, the more central the node is. 
The centrality ranking imposed on the node set $\ccalV$ is in general more relevant than the absolute centrality values. 
Here, we focus on four centrality measures, namely, the degree, eigenvector, Katz and PageRank centrality measures overviewed next; see~\cite{Borgatti2006a} for further details.

\textbf{\emph{Degree centrality}} is a local measure of the importance of a node within a graph. 
The degree centrality  $\degfun_i$ of a node $i$ is given by the number of nodes connected to $i$, that is,
\begin{equation}\label{E:def_degree_centrality}
\degvec := \bbA \one,
\end{equation}
where the vector $\degvec$ collects the values of $c^d_i$ for all $i \in \ccalV$. 

\textbf{\emph{Eigenvector centrality}}, just as degree centrality, depends on the neighborhood of each node. 
However, the centrality measure $\eigfun_i$ of a given node $i$ does not depend only on the number of neighbors, but also on how important those neighbors are. 
This recursive definition leads to an equation of the form $\bbA \eigvec = \lambda \eigvec$, i.e., the vector of centralities $\eigvec$ is an eigenvector of $\bbA$.  
Since $\bbA$ is symmetric, its eigenvalues are real and can be ordered as $\lambda_1 \ge \lambda_2 \geq \ldots \geq \lambda_N$.
The eigenvector centrality $\eigvec$ is then defined as the principal eigenvector $\bbv_1$, associated with $\lambda_1$:
\begin{equation}\label{E:def_eigenvector_centrality}
\eigvec := \sqrt{N} \, \bbv_1.
\end{equation}
For connected graphs, the Perron-Frobenius theorem  guarantees that $\lambda_1$ is a simple eigenvalue, and that there is a unique associated (normalized) eigenvector $\bbv_1$ with positive real entries.
As will become apparent later, the $\sqrt{N}$ normalization introduced in \eqref{E:def_eigenvector_centrality} facilitates the comparison of the eigenvector centrality on a graph to the corresponding centrality measure defined on a graphon.

\textbf{\emph{Katz centrality}} measures the importance of a node based on the number of immediate neighbors in the graph as well as the number of two-hop neighbors, three-hop neighbors, and so on. 
The effect of nodes further away is discounted at each step by a factor $\alpha > 0$. 
Accordingly, the vector of centralities is computed as $\katvec_\alpha = \one + (\alpha \bbA)^1 \one + (\alpha \bbA)^2 \one + \ldots$, where we add the number of k-hop neighbors weighted by $\alpha^k$.
By choosing $\alpha$ such that $0 < \alpha < 1/\lambda_{1}(\bbA)$, the above series converges and we can write the Katz centrality compactly as
\begin{equation}\label{E:def_katz_centrality}
\katvec_\alpha := (\bbI-\alpha \bbA)^{-1}\one.
\end{equation}
Notice that if $\alpha$ is close to zero, the relative weight given to neighbors further away decreases fast, and $\katvec_\alpha$ is driven mainly by the one-hop neighbors just like degree centrality. 
In contrast, if $\alpha$ is close to $1/\lambda_{1}(\bbA)$, the solution to \eqref{E:def_katz_centrality} is almost a scaled version of $\eigvec$. Intuitively, for intermediate values of $\alpha$, Katz centrality captures a hybrid notion of importance by combining elements from degree and eigenvector centralities.
We remark that Katz centrality is sometimes defined as $\katvec_\alpha - \mathbf{1}$. Since a constant shift does not alter the centrality ranking, we here use formula \eqref{E:def_katz_centrality}. We also note that Katz centrality is sometimes referred to as Bonacich centrality in the literature.

\textbf{\emph{PageRank}} measures the importance of a node in a recursive way based on the importance of the neighboring nodes (weighted by their degree). 
Mathematically,
the PageRank centrality of node is given by
\begin{equation}\label{E:def_page_centrality}
\pagevec_\beta := (1-\beta)(\bbI-\beta \bbA \bbD^{-1})^{-1}\one,
\end{equation}
where $0<\beta<1$ 
and $\bbD$ is the diagonal matrix of the degrees of the nodes.
Note that the above formula corresponds to the stationary distribution of a random `surfer' on a graph, who follows the links on the graph with probability $\beta$ and with probability $(1-\beta)$ jumps (`teleports') to a uniformly at random selected node in the graph.
See~\cite{Gleich2015} for further details on PageRank. 

\subsection{Graphons}\label{Ss:prelim_graphons}

A \emph{graphon} is the limit of a convergent sequence of graphs of increasing size, that preserves certain desirable features of the graphs contained in the sequence~\cite{Lovasz2006,Borgs2008,Borgs2012,Lovasz2012,Borgs2017,Orbanz2015,borgs2010moments,lovasz2015automorphism}.
Formally, a graphon is a measurable function $W:[0,1]^2 \to [0,1]$ that is symmetric $W(x, y) = W(y, x)$.
Intuitively, one can interpret the value $W(x,y)$ as the probability of existence of an edge between $x$ and $y$.
However, the `nodes' $x$ and $y$ no longer take values in a finite node set as in classical finite graphs but rather in the continuous interval $[0,1]$. 
Based on this intuition, graphons also provide a natural way of generating random graphs~\cite{Diaconis2007,Orbanz2015}, as introduced in the seminal paper~\cite{Lovasz2006} under the name $W$-random graphs.
In this paper we will make use of the following model, in which the symmetric adjacency matrix $\bbS^{(N)} \in \{0,1 \}^{N \times N}$ of a simple random graph of size $N$ constructed from a graphon is such that for all $i,j\in\{1,\ldots, N\}$
\begin{equation}\label{E:graph_from_graphon}
    \prob[S^{(N)}_{ij} = 1 | u_i, u_j] = \kappa_N W(u_i, u_j),
\end{equation}
where $u_i$ and $u_j$ are latent variables selected uniformly at random from $[0,1]$, and $\kappa_N$ is a constant regulating the sparsity of the graph (see also Definition \ref{D:sampled_graphs})\footnote{Throughout this paper we adopt the terminology of sparse graphs for graphs generated following \eqref{E:graph_from_graphon} with parameter $\kappa_N \to 0$ and  $N\kappa_N  \to \infty$ as $N \to \infty$, even though this does not imply a bounded degree. This terminology is consistent with common usage in the literature \cite{borgs2015private, gao2016optimal,kloppetal2017}. Note also that \cite{veitch2015class} proposed an interesting graph limit framework for graph sequences with bounded degree. }

This means that, when conditioned on the latent variables $(u_1, u_2, \ldots, u_N)$, the off-diagonal entries of the symmetric matrix $\bbS^{(N)}$ are independent Bernoulli random variables with success probability given by $\kappa_NW$. 
In this sense, when $ \kappa_N=1$, the constant graphon $W(x, y) = p$ gives rise to Erd\H{o}s-R\'enyi random graphs with edge probability $p$. 
Analogously, a piece-wise constant graphon gives rise to stochastic block models~\cite{Holland1983,Snijders1997}; for more details see Section \ref{sec:sbm}. 
Interestingly, it can  be shown that the distribution of any simple exchangeable random graph~\cite{Orbanz2015,Goldenberg2010} is characterized by a function $W$ as discussed above~\cite{Aldous1981,Hoover1979,Orbanz2015}.
Finally, observe that  for any measure preserving map $\pi:[0,1]\to [0,1]$, the graphons $W(x,y)$ and $W^\pi(x,y):=W(\pi(x),\pi(y))$ define the same probability distribution on random graphs. 
A precise characterization of the equivalence classes of graphons defining the same probability distribution can be found in \cite[Ch.~10]{Lovasz2012}.

\section{Extending centralities to graphons}
\label{sec:centrality_definitions}

In order to introduce centrality measures for graphons we first introduce a linear integral operator associated with a graphon and recall its spectral properties.
From here on, we denote by $\leb$ the Hilbert function space with inner product $\langle f_1,f_2\rangle:=\int_0^1 f_1(x)f_2(x) \mathrm{d}x$ for $f_1,f_2\in\leb$, and norm $\|f_1\|:=\sqrt{\langle f_1,f_1\rangle}$. 
The elements of $\leb$ are the equivalence classes of Lebesgue integrable functions $f:[0,1]\rightarrow \R$, that is, we identify two functions $f$ and $g$ with each other if they differ only on a set of measure zero (i.e., $f\equiv g \Leftrightarrow \| f - g \| = 0$).  $1_{[0,1]}$ is the identity function in $\leb$.  We use blackboard bold symbols (such as $\mathbb{L}$) to denote linear operators acting on $\leb$, with the exception of $\mathbb{N}$ and $\mathbb{R}$ that denote the sets of natural and real numbers.
The induced (operator) norm is defined as $\vertiii{\mathbb{L}}:= \sup_{f\in\leb \mbox{ s.t. } \|f\|=1} \|\mathbb{L}f\| $.

\subsection{The graphon integral operator and its properties}
Following \cite{Lovasz2012}, we introduce  a linear operator that is  fundamental to derive the notions of centrality for graphons.

\begin{definition}[Graphon operator]
    For a given graphon $W$, we define the associated graphon operator $\mathbb{W}$ as the linear integral operator $\mathbb{W}:\leb\to \leb$ 
$$ f(y) \to (\mathbb{W}f)(x)=\int_0^1W(x,y)f(y)\mathrm{d}y.$$
\end{definition}

From an operator theory perspective, the graphon $W$ is the integral kernel of the linear operator $\mathbb{W}$.
Given the key importance of $\mathbb{W}$, we review its spectral properties in the next definition and lemma.

\begin{definition}[Eigenvalues and eigenfunctions]\label{def:eigen}
    A complex number $\lambda$ is an eigenvalue of $\mathbb{W}$ if there exists a {nonzero} function $\vf \in\leb$, called the eigenfunction, such that
    \begin{equation}\label{E:def_eigenfunction}
    (\mathbb{W}\vf)(x) =\lambda \vf(x).
    \end{equation}
\end{definition}

It follows from the above definition that the eigenfunctions are only defined up to a rescaling parameter. 
Hence, from now on we assume all eigenfunctions are normalized such that $\|\vf\|=1.$

We next recall some known properties of the graphon operator.

\begin{lemma}
\label{lem:graphonoperator}
The graphon operator $\mathbb{W}$ has the following properties.\\[0.1cm]
1) $\mathbb{W}$ is self-adjoint, bounded, and continuous.\\[0.1cm]
    2) $\mathbb{W}$ is diagonalizable.
        Specifically, $\mathbb{W}$ has countably many eigenvalues, all of which are real and can be ordered as $\lambda_1 \ge \lambda_2 \ge \lambda_3 \ge \ldots$.
        Moreover, there exists an orthonormal basis for $\leb$ of eigenfunctions $\{\vf_i\}_{i=1}^\infty$.
        That is,  $(\mathbb{W}\vf_i)(x) =\lambda_i \vf_i(x),$ $\langle \vf_i,\vf_j\rangle=\delta_{i,j}$ for all $i, j$ and any function $f\in \leb$ can be decomposed as $f(x) =\sum_{i=1}^\infty \langle f, \vf_i\rangle \vf_i(x).$
        Consequently,
        $$(\mathbb{W}f)(x)=\sum_{i=1}^\infty \lambda_i  \langle  f, \vf_i\rangle  \vf_i(x).$$
        If the set of nonzero eigenvalues is infinite, then $0$ is its unique accumulation point.
        \\[0.1cm]
    3) Let  $\mathbb{W}^k$ denote $k$ consecutive applications of the operator $\mathbb{W}$.
        Then, for any $k\in\mathbb{N}$,
        $$(\mathbb{W}^kf)(x)=\sum_{i=1}^\infty \lambda_i^k  \langle  f, \vf_i\rangle \vf_i(x).$$\\[0.1cm]
    4) The maximum eigenvalue $\lambda_1$ is positive and there exists an associated eigenfunction $\vf_1$ which is positive, that is, $\vf_1(x)>0$ for all $x\in[0,1]$. Moreover, $\lambda_1= \vertiii{\mathbb{W}}$.

\end{lemma}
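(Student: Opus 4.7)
The overall strategy is to recognize $\mathbb{W}$ as a Hilbert–Schmidt operator with a symmetric, nonnegative, bounded kernel, and then invoke the standard spectral theory for compact self-adjoint operators on a separable Hilbert space, together with a Perron–Frobenius/Jentzsch-type theorem for positive integral operators to obtain part 4.

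For part 1, I would first observe that since $0 \le W \le 1$ on $[0,1]^2$, Cauchy–Schwarz gives, for any $f \in \leb$,
\[
|(\mathbb{W}f)(x)|^2 \le \int_0^1 W(x,y)^2 \, dy \cdot \|f\|^2 \le \|f\|^2,
\]
so $\vertiii{\mathbb{W}} \le 1$, establishing boundedness (and hence continuity, since on a Hilbert space these are equivalent for linear operators). Self-adjointness then follows from Fubini's theorem together with the symmetry $W(x,y) = W(y,x)$: the double integrals expressing $\langle \mathbb{W}f, g\rangle$ and $\langle f, \mathbb{W}g\rangle$ coincide after swapping the order of integration.

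For parts 2 and 3, the key observation is that $W \in L^2([0,1]^2)$ because $W$ is bounded on a set of finite measure. This makes $\mathbb{W}$ a Hilbert–Schmidt operator, hence compact. I would then invoke the Hilbert–Schmidt theorem (the spectral theorem for compact self-adjoint operators on a separable Hilbert space), which yields all the claimed conclusions of part 2: a countable sequence of real eigenvalues that can be ordered as $\lambda_1 \ge \lambda_2 \ge \cdots$, an orthonormal system of eigenfunctions $\{\vf_i\}$ that together with an orthonormal basis of $\ker \mathbb{W}$ forms an orthonormal basis of $\leb$, the spectral decomposition of $\mathbb{W}f$, and the fact that $0$ is the only possible accumulation point of the nonzero eigenvalues. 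Part 3 then reduces to a straightforward induction: assuming the formula for $\mathbb{W}^{k-1}f$, one applies $\mathbb{W}$ term by term, using $\mathbb{W}\vf_i = \lambda_i \vf_i$ and orthonormality, to obtain the expression for $\mathbb{W}^k f$.

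The most delicate step is part 4, which I expect to be the main obstacle. Here I would invoke a Jentzsch-type theorem for positive integral operators: because $W \ge 0$, the operator $\mathbb{W}$ is positivity-preserving (if $f \ge 0$ a.e.\ then $\mathbb{W}f \ge 0$ a.e.), and combined with compactness this forces the extremal eigenvalue to be attained by a nonnegative eigenfunction. For self-adjoint operators one has $\vertiii{\mathbb{W}} = \sup_i |\lambda_i|$, and positivity preservation ensures that the supremum is achieved by a positive eigenvalue, yielding $\lambda_1 = \vertiii{\mathbb{W}}$ and, assuming $\mathbb{W}\ne 0$, $\lambda_1 > 0$. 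The subtle point is the strict positivity of $\vf_1$, which requires a mild irreducibility/connectivity hypothesis on $W$ to preclude a graphon that decomposes into independent blocks; in the absence of such a hypothesis one only obtains a nonnegative principal eigenfunction, strictly positive on the support of the irreducible component of $W$.
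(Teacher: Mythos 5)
Your proposal is correct and follows essentially the same route as the paper, which simply cites \cite{Borgs2008} for parts 1--3 and the Krein--Rutman theorem for part 4; your Hilbert--Schmidt/compactness argument for parts 1--3 and your Jentzsch-type argument for part 4 are exactly the content behind those citations (the paper even reproduces the relevant spectral theorem, the compactness of $L^2$-kernel integral operators, and Krein--Rutman in its Appendix C).

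One remark in your part 4 deserves emphasis because it identifies a real imprecision that the paper glosses over. The Krein--Rutman theorem, as the paper cites it, only yields a nonzero eigenfunction in the cone of nonnegative functions, i.e.\ $\vf_1 \ge 0$, not the strict positivity $\vf_1(x) > 0$ for all $x \in [0,1]$ asserted in the lemma. Strict positivity genuinely requires an irreducibility-type condition on $W$ (ruling out, e.g., a block-diagonal SBM graphon with two disconnected communities, for which the principal eigenfunction vanishes on one block), exactly as you say. So your caveat is not a weakness of your write-up but a correct observation that the lemma as stated is slightly stronger than what its cited justification delivers; the rest of the paper only ever uses nonnegativity and simplicity of $\lambda_1$ (the latter being imposed separately as Assumption 2), so nothing downstream breaks. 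The only other small point to watch is the degenerate case $\mathbb{W}=0$, which you flag and which the paper implicitly excludes; for $\mathbb{W}\neq 0$ your positivity argument $\sup_{\|f\|=1}\langle \mathbb{W}f,f\rangle \ge \langle \mathbb{W}|f|,|f|\rangle \ge |\langle \mathbb{W}f,f\rangle|$ correctly gives $\lambda_1 = \vertiii{\mathbb{W}} > 0$.
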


Points 1 to 3 of the lemma above can be found in \cite{Borgs2008}, while part 4) follows from the Krein-Rutman theorem \cite[Theorem 19.2]{deimling1985}, upon noticing that the graphon operator $\mathbb{W}$ is positive with respect to the cone $K$ defined by the set of nonnegative functions in $\leb$.

\subsection{Definitions of centrality measures for graphons}\label{Ss:definitions_centralities}
We define centrality measures for graphons based on the graphon operator introduced in the previous section.
These definitions closely parallel the construction of centrality measures in finite graphs; see Section~\ref{Ss:prelim_centrality}.
The main difference is that the linear operator defining the respective centralities is an infinite dimensional operator, rather than a finite dimensional matrix.

\begin{definition}[Centrality measures for graphons]
    \label{def:graphons_centrality}
    Given a graphon $W$ and its associated operator $\mathbb{W}$, we define the following centrality functions:
    \begin{mylist}
   \item  \textbf{1) Degree centrality:} 
            We define $\degfun: [0,1] \to \reals_+$ as
            \begin{equation}\label{E:def_degree_centrality_graphons}
            \begin{aligned}
                c^\mathrm{d}(x)&\textstyle :=(\mathbb{W} 1_{[0,1]})(x)=\int_0^1 W(x,y)\mathrm{d}y.
                \end{aligned}
            \end{equation}
            
       \item \textbf{2) Eigenvector centrality:} 
            For $\mathbb{W}$ with a simple largest eigenvalue $\lambda_1$, let $\vf_1$ ($\|\vf_1\|=1$) be the associated positive eigenfunction.
            The eigenvector centrality function $\eigfun: [0,1] \to \reals_+$ is               %
            \begin{equation}\label{E:def_eigenvector_centrality_graphons}
                c^\mathrm{e}(x) := \vf_1(x). 
            \end{equation}

       \item\textbf{3) Katz centrality:}
            Consider the operator $\mathbb{M}_\alpha$ where
            $(\mathbb{M}_\alpha f)(x):= f(x)-\alpha(\mathbb{W} f)(x)$.
            For any $0<\alpha <1/\vertiii{\mathbb{W}}$, we define the Katz centrality function $\katfun_\alpha: [0,1] \to \reals_+$ as
            \begin{equation}
                c^\mathrm{k}_\alpha(x):=\left(\mathbb{M}_\alpha^{-1}1_{[0,1]}\right)(x).
                \label{eq:katz}
            \end{equation}

             \item \textbf{4) PageRank centrality:} 
            Consider the operator 
            $$(\mathbb{L}_\beta f)(x)=f(x)-\beta \int_0^1W(x,y) f(y) (c^d(y))^\dagger \mathrm{d}y,$$
            where $(c^d(y))^\dagger=(c^d(y))^{-1}$ if $c^d(y)\neq0$ and $(c^d(y))^\dagger= 0$ if $c^d(y)=0$.
            For any $0<\beta <1$, we define $c^\mathrm{pr}_\beta: [0,1] \to \reals_+$ as
            \begin{equation}\label{E:def_pagerank_centrality_graphons}
            \begin{aligned}
                c^\mathrm{pr}_\beta(x)&\textstyle :=(1-\beta) (\mathbb{L}_\beta^{-1} 1_{[0,1]})(x).
                \end{aligned}
            \end{equation}
            Note that $c^d(y)=\int_0^1 W(y, z)\mathrm{d}z = 0$ implies $W(x, y)=0$ almost everywhere.
            
    \end{mylist}

\end{definition}

\begin{remark}\label{R:katz_centrality}
The Katz centrality function is well defined, since for $0<\alpha < 1/ \vertiii{\mathbb{W}}$ the operator $\mathbb{M}_\alpha$ is invertible \cite[Theorem 2.2]{bezandry2011almost}. 
Moreover, denoting the identity operator by $\mathbb{I}$, it follows that $\mathbb{M}_\alpha = \mathbb{I} - \alpha \mathbb{W}$. Hence, by using a Neumann series representation and the properties of the higher order powers of $\mathbb W$ we obtain the equivalent representation
\begin{align*}(\mathbb{M}^{-1}_\alpha f)(x)&\textstyle=((\mathbb{I}-\alpha\mathbb{W})^{-1} f)(x)= \sum_{k=0}^\infty\alpha^k(\mathbb{W}^k f)(x)\\&\textstyle=f(x)+\sum_{k=1}^\infty \alpha^k \sum_{i=1}^\infty\lambda^k_i\langle \vf_i,f\rangle \vf_i(x)\\
&\textstyle=  f(x) + \sum_{i=1}^\infty\frac{\alpha\lambda_i}{1-\alpha\lambda_i}\langle \vf_i,f\rangle \vf_i(x),
\end{align*}
where we used that $|\lambda_i|< \vertiii{\mathbb{W}}$ for all $i$.
Using an analogous series representation it can be shown that PageRank is well defined.
Note also that eigenvector centrality is well-defined by Lemma~\ref{lem:graphonoperator}, part 4).
\end{remark}
Since a graphon describes the limit of an infinite dimensional graph, there is a subtle difference in the semantics of the centrality measure compared to the finite graph setting.
Specifically, in the classical setting the network consists of a finite number of nodes and thus for a graph of $N$ nodes we obtain an $N$-dimensional vector with one centrality value per node.
In the graphon setting, we may think of each real $x \in [0,1]$ as corresponding to one of infinitely many nodes, and thus the centrality measure is described by a function.

\section{Computing centralities on graphons}
\label{sec:computing_centrality}

We illustrate how to compute centrality measures for graphons by studying three examples in detail. 
The graphons we consider are ordered by increasing complexity of their respective eigenspaces and by the generality of the methods used in the computation of the centralities.

\subsection{Stochastic block model graphons}
\label{sec:sbm}
\begin{figure*}[tb!]
	\begin{center}
        \includegraphics{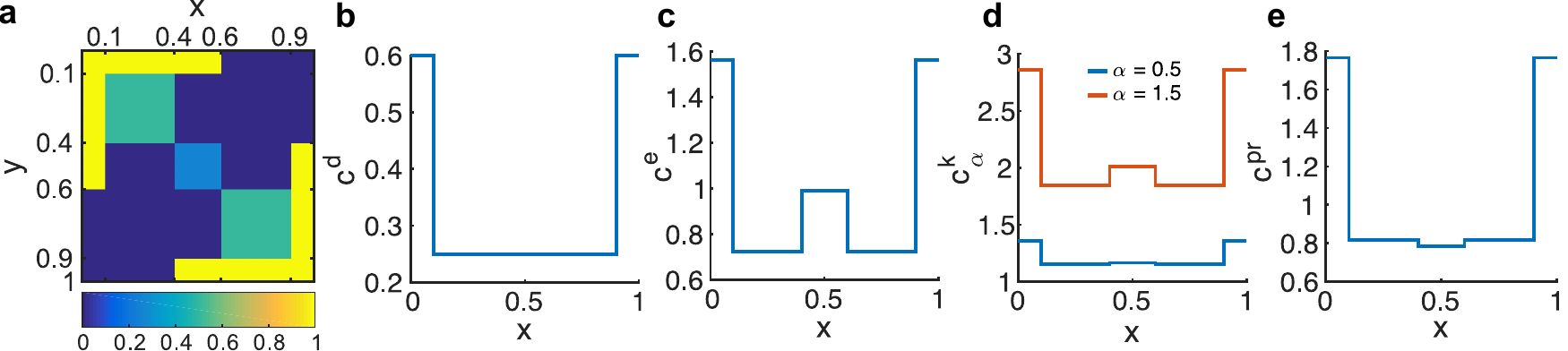}
	\end{center}
	\vspace{-0.3cm}
	\caption{Illustrative example of a graphon with stochastic block model structure. \textbf{(a)} Graphon $W_{\mathrm{SBM}}$ with block model structure as described in the text. \textbf{(b-d)} Degree, eigenvector, Katz, and PageRank centralities for the graphon depicted in (a).}
	\label{fig:example_graphon3}
	\vspace{-0.3cm}
\end{figure*}
We consider a class of piecewise constant graphons that may be seen as the equivalent of a stochastic block model (SBM). Such graphons play an important role in practice, as they enable us to approximate more complicated graphons in a `stepwise' fashion. 
This approximation idea has been exploited to estimate graphons from finite data \cite{airoldietal2013, Latouche2016, wolfeandolhede2013}. In fact, optimal statistical rates of convergence can be achieved over smooth graphon classes \cite{gaoetal2015, kloppetal2017}.
The SBM graphon is defined as follows
\begin{equation}\label{E:def_sbm_graphon}
W_{\mathrm{SBM}}(x,y) :=  \sum_{i=1}^m\sum_{j=1}^mP_{ij} 1_{\mathcal{B}_i}(x)1_{\mathcal{B}_j}(y) ,
\end{equation}
where $P_{ij}\in [0,1]$, $P_{ij}=P_{ji}$, $\cup_{i=1}^m\mathcal{B}_i=[0,1]$ and $\mathcal{B}_i\cap \mathcal{B}_j=\emptyset$ for $i\neq j$.
We define the following $m$ dimensional vector of indicator functions 
\begin{equation}\label{eq:one_vec}
\mathbf{1}(x) := [1_{\mathcal{B}_1}(x), \ldots, 1_{\mathcal{B}_m}(x)]^T,
\end{equation} 
enabling us to compactly rewrite the graphon in \eqref{E:def_sbm_graphon} as
\begin{equation}\label{E:sbm_graphon}
    W_{\mathrm{SBM}}(x,y) = \mathbf{1}(x)^T \mathbf{P} \mathbf{1}(y).
\end{equation}
We also define the following auxiliary matrices.
\begin{definition}
Let us define the {effective measure matrix} $\bbQ_{\mathrm{SBM}} \in \reals^{m \times m}$ and the {effective connectivity matrix} $\bbE_{\mathrm{SBM}} \in \reals^{m \times m}$ for SBM graphons as follows
\begin{equation}\label{E:def_matrix_Q_E}
 \mathbf{Q}_{\mathrm{SBM}} := \int_0^1 \mathbf{1}(x) \mathbf{1}(x)^T \mathrm{d}x, \qquad   \mathbf{E}_{\mathrm{SBM}} := \mathbf{PQ_{\mathrm{SBM}}}.
\end{equation}
\end{definition}
Notice that $\bbQ_{\mathrm{SBM}}$ is a diagonal matrix with entries collecting the sizes of each block. 
Similarly, the matrix $\bbE_{\mathrm{SBM}}$ is obtained by weighting the probabilities in $\bbP$ by the sizes of the different blocks. 
Hence, the effective connectivity from block $\ccalB_i$ to two blocks $\ccalB_j$ and $\ccalB_k$ may be equal even if the latter block $\ccalB_k$ has twice the size ($Q_{kk} = 2 Q_{jj}$), provided that it has half the probability of edge appearance ($2 P_{ik} = P_{ij}$). 
Notice also  that the matrix $\bbE_{\mathrm{SBM}}$ need not be symmetric. 
As will be seen in Section~\ref{Ss:class_finite_rank_graphons}, the definitions in \eqref{E:def_matrix_Q_E} are  specific examples of more general constructions.

The following lemma relates the spectral properties of $\mathbf{E}_{\mathrm{SBM}}$ to those of the operator $\mathbb{W}_{\mathrm{SBM}}$ induced by $W_{\mathrm{SBM}}$.
We do not prove this lemma since it is a special case of Lemma~\ref{lem:finitesum}, introduced in Section~\ref{Ss:class_finite_rank_graphons} and shown in  Appendix A.

\begin{lemma}
\label{lem:sbm}
Let $\lambda_i$ and $\vv_i$ denote the eigenvalues and eigenvectors of  $\mathbf{E}_{\mathrm{SBM}}$ in \eqref{E:def_matrix_Q_E}, respectively. Then, all the nonzero eigenvalues of $\mathbb{W}_{\mathrm{SBM}}$ are given by $\lambda_i$ and the associated eigenfunctions are of the form $\vf_i(x) = \mathbf{1}(x)^T\vv_i$.
\end{lemma}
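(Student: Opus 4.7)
The plan is to leverage the finite-rank structure of the SBM graphon to reduce the spectral problem for the infinite-dimensional operator $\mathbb{W}_{\mathrm{SBM}}$ to a standard $m \times m$ matrix eigenvalue problem for $\mathbf{E}_{\mathrm{SBM}}$.

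First, I would use the compact representation $W_{\mathrm{SBM}}(x,y) = \mathbf{1}(x)^T \mathbf{P} \mathbf{1}(y)$ to compute the action of $\mathbb{W}_{\mathrm{SBM}}$ on an arbitrary $f \in L^2([0,1])$:
\[
(\mathbb{W}_{\mathrm{SBM}} f)(x) = \mathbf{1}(x)^T \mathbf{P} \int_0^1 \mathbf{1}(y) f(y)\,\mathrm{d}y.
\]
The key observation is that the image of $\mathbb{W}_{\mathrm{SBM}}$ lies in the finite-dimensional subspace $\mathcal{S} := \mathrm{span}\{1_{\mathcal{B}_1},\ldots, 1_{\mathcal{B}_m}\}$. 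Consequently, any eigenfunction associated with a \emph{nonzero} eigenvalue $\lambda$ must itself lie in $\mathcal{S}$, because $\varphi = \lambda^{-1}(\mathbb{W}_{\mathrm{SBM}} \varphi)$.

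Next, I would parametrize such an eigenfunction as $\varphi(x) = \mathbf{1}(x)^T \mathbf{v}$ for some $\mathbf{v} \in \mathbb{R}^m$ and substitute into the eigenvalue equation $\mathbb{W}_{\mathrm{SBM}} \varphi = \lambda \varphi$. Using the identity $\int_0^1 \mathbf{1}(y) \mathbf{1}(y)^T \mathrm{d}y = \mathbf{Q}_{\mathrm{SBM}}$ from \eqref{E:def_matrix_Q_E}, the equation becomes
\[
\mathbf{1}(x)^T \mathbf{P} \mathbf{Q}_{\mathrm{SBM}} \mathbf{v} \;=\; \lambda\, \mathbf{1}(x)^T \mathbf{v} \qquad \text{for a.e. } x \in [0,1].
\]
Since the blocks $\{\mathcal{B}_i\}$ are disjoint with positive measure (assuming non-degenerate blocks), the indicator functions $\{1_{\mathcal{B}_i}\}_{i=1}^m$ are linearly independent in $L^2([0,1])$, so I can equate the coefficients of each indicator to obtain the finite-dimensional eigenvalue equation $\mathbf{E}_{\mathrm{SBM}} \mathbf{v} = \lambda \mathbf{v}$. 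Reversing the chain of implications shows that, conversely, every eigenpair $(\lambda,\mathbf{v})$ of $\mathbf{E}_{\mathrm{SBM}}$ yields an eigenfunction $\varphi(x) = \mathbf{1}(x)^T \mathbf{v}$ of $\mathbb{W}_{\mathrm{SBM}}$ with the same eigenvalue.

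The only subtle point I would be careful about is consistency with Lemma~\ref{lem:graphonoperator}: although $\mathbf{E}_{\mathrm{SBM}} = \mathbf{P}\mathbf{Q}_{\mathrm{SBM}}$ is not symmetric in general, it is similar to the symmetric matrix $\mathbf{Q}_{\mathrm{SBM}}^{1/2} \mathbf{P}\, \mathbf{Q}_{\mathrm{SBM}}^{1/2}$ (since $\mathbf{Q}_{\mathrm{SBM}}$ is positive diagonal on the support of non-degenerate blocks), and hence its eigenvalues are real, as required by the self-adjointness of $\mathbb{W}_{\mathrm{SBM}}$. This is the main point at which I would pause, but it is a mild check rather than a genuine obstacle. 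Degenerate blocks of measure zero can be removed without affecting $\mathbb{W}_{\mathrm{SBM}}$, so there is no loss of generality in assuming $\mathbf{Q}_{\mathrm{SBM}}$ is strictly positive diagonal.
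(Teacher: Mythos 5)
Your proof is correct and follows essentially the same route as the paper, which does not prove Lemma~\ref{lem:sbm} directly but obtains it as the special case $g_i = 1_{\ccalB_i}$, $h_i = \sum_j P_{ij}1_{\ccalB_j}$ of Lemma~\ref{lem:finitesum}: in both arguments the key step is that any eigenfunction with nonzero eigenvalue lies in the range of the operator, i.e.\ in $\mathrm{span}\{1_{\ccalB_1},\ldots,1_{\ccalB_m}\}$, which reduces the spectral problem to $\bbE_{\mathrm{SBM}}\vv=\lambda\vv$. The only cosmetic difference is that you equate coefficients via linear independence of the disjoint indicators, whereas the paper (working with general finite-rank graphons, whose $g_i$ need not be independent) defines the coefficients $v_i=\lambda^{-1}\int h_i(y)\vf(y)\,\mathrm{d}y$ and substitutes back; your added remark on realness of the spectrum of $\bbP\bbQ_{\mathrm{SBM}}$ is a harmless extra check.
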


Using the result above, we can compute the centrality measures for stochastic block model graphons based on the effective connectivity matrix.

\begin{proposition}[Centrality measures for SBM graphons]\label{Prop:centrality_sb} Let $\lambda_i$ and $\vv_i$ denote the eigenvalues and eigenvectors of  $\mathbf{E}_{\mathrm{SBM}}$ in \eqref{E:def_matrix_Q_E}, respectively, and define the diagonal matrix $\bbD_{\bbE} := \mathrm{diag}(\bbE_{\mathrm{SBM}} \mathbf{1})$. 
The centrality functions $\degfun$, $\eigfun$, $\katfun_\alpha$, and $\prfun_\beta$ of the graphon $W_{\mathrm{SBM}}$  can be computed as follows 
\begin{align}\label{E:centralities_sbm}
\degfun(x) & = {\mathbf{1}(x)}^{T} \mathbf{E}_{\mathrm{SBM}} \mathbf{1} ,\ \  \eigfun(x) = \frac{{\mathbf{1}(x)}^{T} \vv_1}{\sqrt{{\vv_1}{^{T}} \bbQ_{\mathrm{SBM}} \vv_1}} , \\ \katfun_\alpha(x) &= {\mathbf{1}(x)}^{T} (\bbI-\alpha \mathbf{E}_{\mathrm{SBM}})^{-1} \mathbf{1}, \nonumber\\
\prfun_\beta (x) & = (1-\beta) \mathbf{1}(x)^T(\bbI - \beta \bbE_{\mathrm{SBM}} \bbD_{\bbE}^{-1})^{-1} \mathbf{1}. \nonumber
\end{align}
\end{proposition}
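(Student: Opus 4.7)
The plan is to exploit the rank-$m$ structure of the kernel $W_{\mathrm{SBM}}(x,y) = \mathbf{1}(x)^T \bbP \mathbf{1}(y)$ to reduce every infinite-dimensional operator equation on $\leb$ to finite-dimensional linear algebra on $\R^m$. The algebraic backbone is the identity $\int_0^1 \mathbf{1}(y)\mathbf{1}(y)^T\mathrm{d}y = \bbQ_{\mathrm{SBM}}$, which holds since the $\{\mathcal{B}_i\}$ partition $[0,1]$, together with the a.e.\ identity $\mathbf{1}(x)^T\mathbf{1} = 1_{[0,1]}(x)$. A direct consequence, which I will invoke repeatedly, is that $\mathbb{W}$ preserves the $m$-dimensional subspace $\mathcal{V}_m := \{\mathbf{1}(\cdot)^T\bba : \bba \in \R^m\}$ and acts on it by $\mathbf{1}(\cdot)^T\bba \mapsto \mathbf{1}(\cdot)^T \bbE_{\mathrm{SBM}}\bba$.

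For degree centrality I would simply integrate $c^\mathrm{d}(x)=\mathbf{1}(x)^T \bbP \int_0^1 \mathbf{1}(y)\mathrm{d}y$, use $\int_0^1 \mathbf{1}(y)\mathrm{d}y = \bbQ_{\mathrm{SBM}}\mathbf{1}$, and collect. For eigenvector centrality I would invoke Lemma~\ref{lem:sbm} to express the principal eigenfunction as $\vf_1(x) = \mathbf{1}(x)^T\vv_1$, with positivity of $\vv_1$ following from the Perron--Frobenius theorem applied to the nonnegative matrix $\bbE_{\mathrm{SBM}}$ (consistent with Lemma~\ref{lem:graphonoperator}, part~4). The stated formula then follows from the $L^2$-normalization $\|\vf_1\|^2 = \int_0^1 (\mathbf{1}(x)^T\vv_1)^2\mathrm{d}x = \vv_1^T\bbQ_{\mathrm{SBM}}\vv_1$.

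For Katz centrality I would use the Neumann series from Remark~\ref{R:katz_centrality}, $c^\mathrm{k}_\alpha = \sum_{k\ge 0}\alpha^k \mathbb{W}^k 1_{[0,1]}$. A one-line induction on $k$ using the backbone identity and the invariance of $\mathcal{V}_m$ yields $\mathbb{W}^k 1_{[0,1]} = \mathbf{1}(\cdot)^T\bbE_{\mathrm{SBM}}^k \mathbf{1}$ for every $k\ge 0$; summing then produces $c^\mathrm{k}_\alpha(x) = \mathbf{1}(x)^T(\bbI - \alpha\bbE_{\mathrm{SBM}})^{-1}\mathbf{1}$. The matrix inverse exists because $\alpha < 1/\vertiii{\mathbb{W}} = 1/\lambda_1 = 1/\rho(\bbE_{\mathrm{SBM}})$ by Lemma~\ref{lem:graphonoperator}, part~4 and Lemma~\ref{lem:sbm}.

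The PageRank identity is the step I expect to require the most care. The unlocking observation is that $(c^\mathrm{d}(y))^\dagger \mathbf{1}(y) = \bbD_{\bbE}^{-1}\mathbf{1}(y)$ almost everywhere on $\{c^\mathrm{d}>0\}$, because at any $y\in\mathcal{B}_i$ only the $i$-th component of $\mathbf{1}(y)$ is active and $c^\mathrm{d}(y) = (\bbD_{\bbE})_{ii}$. Using this, for $f=\mathbf{1}(\cdot)^T\bba \in \mathcal{V}_m$ the kernel integral collapses to $\bbP\,\bbD_{\bbE}^{-1}\bbQ_{\mathrm{SBM}}\bba$, and since $\bbQ_{\mathrm{SBM}}$ and $\bbD_{\bbE}$ are diagonal they commute, yielding $\bbP\bbQ_{\mathrm{SBM}}\bbD_{\bbE}^{-1} = \bbE_{\mathrm{SBM}}\bbD_{\bbE}^{-1}$. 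Hence $\mathbb{L}_\beta$ restricted to $\mathcal{V}_m$ acts as left-multiplication by $\bbI - \beta\bbE_{\mathrm{SBM}}\bbD_{\bbE}^{-1}$. Applying the same Neumann-series induction as in the Katz case, justified by the analogous series representation of PageRank mentioned at the end of Remark~\ref{R:katz_centrality}, and invoking $1_{[0,1]}=\mathbf{1}(\cdot)^T\mathbf{1}$ one last time, closes the argument.
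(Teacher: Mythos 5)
Your proof is correct, but it is organized differently from the paper's. The paper obtains Proposition~\ref{Prop:centrality_sb} as a two-line corollary of Proposition~\ref{Prop:centrality_fr}: it instantiates the finite-rank formulas with $g_i(x)=1_{\mathcal{B}_i}(x)$, $h_i(y)=\sum_j P_{ij}1_{\mathcal{B}_j}(y)$, verifies the identifications $\bbh=\bbE_{\mathrm{SBM}}\mathbf{1}$, $\bbQ=\bbQ_{\mathrm{SBM}}$, $\bbE=\bbE_{\mathrm{SBM}}$, $\bbh_\mathrm{nor}=\bbE_{\mathrm{SBM}}\bbD_{\bbE}^{-1}\mathbf{1}$, $\bbE_\mathrm{nor}=\bbE_{\mathrm{SBM}}\bbD_{\bbE}^{-1}$ componentwise, and then absorbs the constant in $1+\alpha\,\mathbf{1}(x)^T(\bbI-\alpha\bbE_{\mathrm{SBM}})^{-1}\bbE_{\mathrm{SBM}}\mathbf{1}$ into the stated resolvent form using $1=\mathbf{1}(x)^T\mathbf{1}$. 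You instead prove the SBM case from scratch via the invariant subspace $\mathcal{V}_m$ of block-constant functions, on which $\mathbb{W}$ (resp.\ the PageRank kernel) acts as $\bbE_{\mathrm{SBM}}$ (resp.\ $\bbE_{\mathrm{SBM}}\bbD_{\bbE}^{-1}$); the computational core (induction on operator powers, Neumann series, the $L^2$ normalization $\vv_1^T\bbQ_{\mathrm{SBM}}\vv_1$) is the same as in the paper's proof of the FR proposition, just instantiated directly. Your route is self-contained and makes the PageRank reduction more transparent --- the identity $(c^{\mathrm{d}}(y))^\dagger\mathbf{1}(y)=\bbD_{\bbE}^{-1}\mathbf{1}(y)$ together with the commutation of the diagonal matrices $\bbQ_{\mathrm{SBM}}$ and $\bbD_{\bbE}$ is exactly the content of the paper's verification that $\bbh_\mathrm{nor}=\bbE_{\mathrm{SBM}}\bbD_{\bbE}^{-1}\mathbf{1}$ --- while the paper's route avoids repeating the induction because it is needed for the finite-rank class anyway. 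One minor caveat: Perron--Frobenius applied to the nonnegative matrix $\bbE_{\mathrm{SBM}}$ only gives a \emph{nonnegative} principal eigenvector without an irreducibility hypothesis; strict positivity of $\vf_1$ is really supplied by Lemma~\ref{lem:graphonoperator}, part~4 (and is presupposed in Definition~\ref{def:graphons_centrality}), so this does not affect the validity of your argument.
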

We next illustrate this result with an example. Its proof is given in Appendix A. 

\subsubsection{Example of a stochastic block model graphon}
\label{Sss:sbm_example}
Consider the stochastic block model graphon $W_{\mathrm{SBM}}$ depicted in Fig.~\ref{fig:example_graphon3}-(a), with corresponding symmetric matrix $\bbP$ [cf.~\eqref{E:def_sbm_graphon}] as in \eqref{eq:PQE}.  Let us define the vector of indicator functions specific to this graphon $\mathbf{1}(x) := [1_{\ccalB_1}(x), \ldots, 1_{ \ccalB_5}(x)]^T$, where the blocks coincide with those in Fig.~\ref{fig:example_graphon3}-(a), that is, $\ccalB_1= [0, 0.1)$, $\ccalB_2= [0.1, 0.4)$, $\ccalB_3= [0.4, 0.6)$, $\ccalB_4= [0.6, 0.9)$, and $\ccalB_5= [0.9,1]$. 
To apply Proposition~\ref{Prop:centrality_sb} we need to compute the effective measure and effective connectivity matrices [cf.~\eqref{E:def_matrix_Q_E}], which for our example are given by
\begin{subequations}\label{eq:PQE} 
    \begin{align}
\bbQ_{\mathrm{SBM}}  \!=\! \diag\begin{pmatrix}
0.1  \\
0.3  \\
0.2  \\
0.3 \\
0.1 \\
\end{pmatrix}\!, \; \bbP\! = 
\begin{bmatrix}
1 & 1 & 1 & 0 & 0 \\
1 & 0.5 & 0 & 0 & 0 \\
1 & 0 & 0.25 & 0 & 1 \\
0 & 0 & 0 & 0.5 & 1 \\
0 & 0 & 1 & 1 & 1 \\
\end{bmatrix} \!\!, 
\end{align}
\begin{align}
\bbE_{\mathrm{SBM}} \!=
\begin{bmatrix}
0.1 & 0.3 & 0.2 & 0 & 0 \\
0.1 & 0.15 & 0 & 0 & 0 \\
0.1 & 0 & 0.05 & 0 & 0.1 \\
0 & 0 & 0 & 0.15 & 0.1 \\
0 & 0 & 0.2 & 0.3 & 0.1 \\
\end{bmatrix}
\!\!.
\end{align}
\end{subequations}
The principal eigenvector of $\bbE_{\mathrm{SBM}}$ is given by $\bbv_1 \approx [0.59, 0.28, 0.38, 0.28, 0.59]^T$. Furthermore, from \eqref{E:centralities_sbm} we can compute the graphon centrality functions to obtain
\begin{align*}
\degfun(x)  & = \mathbf{1}(x) ^T [0.6, 0.25, 0.25, 0.25, 0.6]^T,\\[0.05cm]
\eigfun(x)  & \approx \mathbf{1}(x)^T [1.56, 0.72, 0.99, 0.72, 1.56]^T,\\[0.05cm]
\katfun_\alpha(x)  & \approx 
\begin{cases}
\mathbf{1}(x)^T [1.36, 1.15, 1.16, 1.15, 1.36]^T \quad \text{if $\alpha = 0.5$}, \\
\mathbf{1}(x)^T [2.86, 1.84, 2.01, 1.84, 2.86]^T  \quad \text{if $\alpha = 1.5$},
\end{cases}  \\ 
\prfun_{0.85}(x)  & \approx \mathbf{1}(x) ^T [1.77, 0.82, 0.78, 0.82, 1.77]^T,
\end{align*}
where for illustration purposes we have evaluated the Katz centrality for two specific choices of $\alpha$, and we have set $\beta = 0.85$ for the PageRank centrality. 
These four centrality functions are depicted in Fig.~\ref{fig:example_graphon3}(b)-(e). 

\begin{figure*}[tb!]
	\begin{center}
		\includegraphics[]{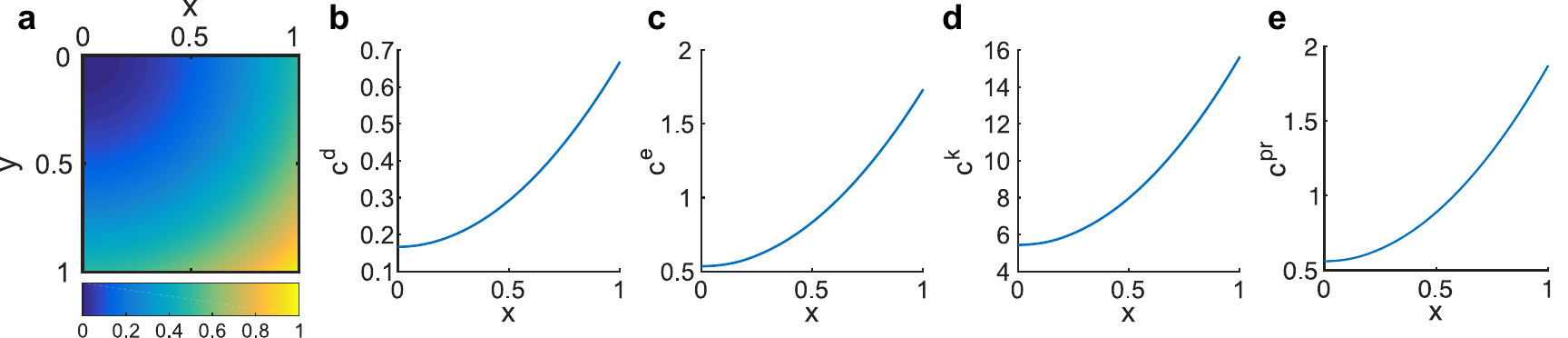}
	\end{center}
	\vspace{-0.3cm}
	\caption{Illustrative example of a graphon with finite rank. \textbf{(a)} The graphon $W_{\mathrm{FR}} = (x^2 +y^2)/2$ is decomposable into a finite number of components, inducing a finite-rank graphon operator. \textbf{(b-d)} Degree, eigenvector, Katz, and PageRank centralities for the graphon depicted in (a).}
	\label{fig:example_graphon2}
		\vspace{-0.3cm}
\end{figure*}

Note that these functions are piecewise constant according to the block partition $\{\mathcal{B}_i\}_{i=1}^m$. Moreover,
as expected from the functional form of $W_{\mathrm{SBM}}$ in Fig.~\ref{fig:example_graphon3}-(a), blocks $\ccalB_1$ and $\ccalB_5$ are the most central as measured by any of the four studied centralities. 
Regarding the remaining three blocks, degree centrality deems them as equally important whereas eigenvector centrality considers $\ccalB_3$ to be more important than $\ccalB_2$ and $\ccalB_4$. 
To understand this discrepancy, notice that in any finite realization of the graphon $W_{\mathrm{SBM}}$, most of the edges from a node in block $\ccalB_3$ will go to nodes in $\ccalB_1$ and $\ccalB_5$, which are the most central ones. 
On the other hand, for nodes in blocks $\ccalB_2$ and $\ccalB_4$, most of the edges will be contained within their own block. 
Hence, even though nodes corresponding to blocks $\ccalB_2$, $\ccalB_3$, and $\ccalB_4$ have the same expected number of neighbors -- thus, same degree centrality -- the neighbors of nodes in $\ccalB_3$ tend to be more central, entailing a higher eigenvector centrality. 
As expected, an intermediate situation occurs with Katz centrality, whose form is  closer to degree centrality for lower values of $\alpha$ (cf. $\alpha = 0.5$) and closer to eigenvector centrality for larger values of this parameter (cf. $\alpha = 1.5$).
For the case of PageRank, on the other hand, block $\ccalB_3$ is deemed as less central than $\ccalB_2$ and $\ccalB_4$. This can be attributed to the larger size of these latter blocks. Indeed, the classical PageRank centrality measure is partially driven by size~\cite{Gleich2015}.

\subsection{Finite-rank graphons}\label{Ss:class_finite_rank_graphons}

We now consider a class of finite-rank (FR) graphons that can be written as a finite sum of products of integrable functions.  Specifically, we consider graphons of the form
\begin{equation}\label{E:def_finite_rank_graphon}
W_{\mathrm{FR}}(x,y) :=\sum_{i=1}^m g_i(x) h_i(y) = \bbg(x)^T \bbh(y), 
\end{equation}
where $m\in\mathbb{N}$ and we have defined the vectors of functions $\bbg(x) = [g_1(x), \ldots, g_m(x)]^T$ and $\bbh(y) = [h_1(y), \ldots, h_m(y)]^T$. 
Observe that $\bbg(x)$ and $\bbh(y)$ must be chosen so that $W_{\mathrm{FR}}$ is symmetric, and $W_{\mathrm{FR}}(x,y) \in [0,1]$ for all $(x,y)\in[0,1]^2$.
Based on $\bbg(x)$ and $\bbh(y)$ we can define the generalizations of $\bbQ_{\mathrm{SBM}}$ and $\bbE_{\mathrm{SBM}}$ introduced in Section~\ref{sec:sbm}, for this class of finite-rank graphons.

\begin{definition}\label{D:def_matrices_Q_E}
	The effective measure matrix $\bbQ$ and the effective connectivity matrix $\bbE$ for a finite-rank graphon $W_{\mathrm{FR}}$ as defined in \eqref{E:def_finite_rank_graphon} are given by
	\begin{equation}\label{E:def_matrix_Q_E_FR}
	\mathbf{Q} := \int_0^1 \bbg(x) \bbg(x)^T \mathrm{d}x,\quad    \mathbf{E} := \int_0^1 \bbh(x) \bbg(x)^T \mathrm{d}x.
	\end{equation}
\end{definition}

The stochastic block model graphon operator introduced in \eqref{E:def_sbm_graphon} is a special case of the class of operators in~\eqref{E:def_finite_rank_graphon}. More precisely, we recover the SBM graphon by choosing $g_i(x) = 1_{\mathcal{B}_i}(x) $ and $h_i(y) = \sum_{j=1}^m P_{ij} 1_{\mathcal{B}_j}(y)$ for $i=1,\dots,m$. The matrices defined in \eqref{E:def_matrix_Q_E} are recovered when specializing Definition~\ref{D:def_matrices_Q_E} to this choice of $g_i(x)$ and $h_i(y)$. We may now relate the eigenfunctions of the FR graphon with the spectral properties  of $\bbE$, as explained in the following lemma.

\begin{lemma}
\label{lem:finitesum}
Let $\lambda_i$ and $\vv_i$ denote the eigenvalues and eigenvectors of  $\mathbf{E}$ in \eqref{E:def_matrix_Q_E_FR}, respectively. Then, all the nonzero eigenvalues of $\mathbb{W}_{\mathrm{FR}}$, the operator associated with \eqref{E:def_finite_rank_graphon}, are given by $\lambda_i$ and the associated eigenfunctions are of the form $\vf_i(x) = \bbg(x)^T \vv_i$.
\end{lemma}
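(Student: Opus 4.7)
\begin{myproof}[of Lemma~\ref{lem:finitesum} (plan)]
The plan is to prove the lemma by showing a two-way correspondence between nonzero eigenvalues of the finite-dimensional matrix $\mathbf{E}$ and nonzero eigenvalues of the infinite-dimensional operator $\mathbb{W}_{\mathrm{FR}}$. The key structural observation driving the argument is that any function of the form $\bbg(x)^T\bbv$ is mapped by $\mathbb{W}_{\mathrm{FR}}$ to another function of the same form, since
\begin{equation*}
(\mathbb{W}_{\mathrm{FR}}(\bbg^T\bbv))(x) = \int_0^1 \bbg(x)^T\bbh(y)\bbg(y)^T\bbv\,\mathrm{d}y = \bbg(x)^T \mathbf{E}\bbv,
\end{equation*}
where the last equality uses the definition of $\mathbf{E}$ in \eqref{E:def_matrix_Q_E_FR}. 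This computation is the engine of the proof and both directions will fall out of it.

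First I would handle the easy direction. Fix an eigenpair $(\lambda_i,\bbv_i)$ of $\mathbf{E}$ with $\lambda_i\neq 0$ and set $\vf_i(x):=\bbg(x)^T\bbv_i$. Applying the identity above with $\bbv=\bbv_i$ immediately yields $(\mathbb{W}_{\mathrm{FR}}\vf_i)(x)=\bbg(x)^T\mathbf{E}\bbv_i=\lambda_i\bbg(x)^T\bbv_i=\lambda_i\vf_i(x)$, so $\vf_i$ is an eigenfunction of $\mathbb{W}_{\mathrm{FR}}$ with eigenvalue $\lambda_i$, provided $\vf_i\not\equiv 0$. To check this non-triviality I would argue by contradiction: if $\vf_i\equiv 0$ then $\bbE\bbv_i=\int_0^1\bbh(y)\vf_i(y)\,\mathrm{d}y=\mathbf{0}$, contradicting $\lambda_i\bbv_i=\mathbf{E}\bbv_i\neq \mathbf{0}$.

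For the reverse direction, suppose $\lambda\neq 0$ is an eigenvalue of $\mathbb{W}_{\mathrm{FR}}$ with a nonzero eigenfunction $\vf\in\leb$. The eigenvalue equation reads
\begin{equation*}
\lambda\vf(x)=(\mathbb{W}_{\mathrm{FR}}\vf)(x)=\bbg(x)^T\bbu, \qquad \bbu:=\int_0^1\bbh(y)\vf(y)\,\mathrm{d}y\in\mathbb{R}^m.
\end{equation*}
Since $\lambda\neq 0$, this forces $\vf(x)=\bbg(x)^T(\bbu/\lambda)$, i.e., $\vf$ lies in the span of $g_1,\dots,g_m$. Substituting this expression back into the definition of $\bbu$ gives $\bbu=\int_0^1\bbh(y)\bbg(y)^T(\bbu/\lambda)\,\mathrm{d}y=\mathbf{E}\bbu/\lambda$, so $\mathbf{E}\bbu=\lambda\bbu$. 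Because $\vf\not\equiv 0$ we have $\bbu\neq\mathbf{0}$ (otherwise $\vf\equiv 0$ by the previous display), so $\bbu$ is a genuine eigenvector of $\mathbf{E}$ with eigenvalue $\lambda$, and $\vf(x)=\bbg(x)^T(\bbu/\lambda)$ has the claimed form.

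I don't expect any serious obstacle: the argument is essentially a reduction to a finite-dimensional problem enabled by the separable structure of $W_{\mathrm{FR}}$. The only delicate points are (i) verifying that $\vf_i$ constructed from $\bbv_i$ is not the zero function (handled by exploiting $\lambda_i\neq 0$) and (ii) matching eigenvectors consistently between the two directions, which is resolved by noting that $\bbu/\lambda$ is itself an eigenvector of $\mathbf{E}$ with eigenvalue $\lambda$.
\end{myproof}
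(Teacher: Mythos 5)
Your proposal is correct and follows essentially the same route as the paper's proof: the same key identity $(\mathbb{W}_{\mathrm{FR}}(\bbg^T\bbv))(x)=\bbg(x)^T\mathbf{E}\bbv$ drives both directions, and your reverse direction recovers the paper's argument up to the harmless renormalization $\bbv=\bbu/\lambda$. The only difference is that you explicitly verify $\vf_i\not\equiv 0$ in the forward direction, a small point the paper leaves implicit, and your contradiction argument for it is sound.
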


Lemma \ref{lem:finitesum} is proven in Appendix A and shows  that the graphon in \eqref{E:def_finite_rank_graphon} is of finite rank since it has at most $m$ non-zero eigenvalues.
Notice that Lemma~\ref{lem:sbm} follows from Lemma~\ref{lem:finitesum} when specializing the finite rank operator to the SBM case as explained after Definition~\ref{D:def_matrices_Q_E}. Moreover, we can leverage the result in Lemma~\ref{lem:finitesum} to find closed-form expressions for the centrality functions of FR graphons. To write these expressions compactly, we define the vectors of integrated functions $\bbg := \int_0^1 \bbg(y) \mathrm{d}y$ and $\bbh := \int_0^1 \bbh(y) \mathrm{d}y$, as well as the following normalized versions of $\bbh$ and $\bbE$
\begin{equation}\label{E:additional_def_FR_pr}
\bbh_\mathrm{nor} = \int_0^1 \frac{\bbh(y)}{\bbg^T\bbh(y)} \mathrm{d}y, \,\, \bbE_\mathrm{nor} = \int_0^1 \frac{\bbh(y) \bbg(y)^T}{\bbg^T\bbh(y)} \mathrm{d}y.
\end{equation}
With this notation in place, we can establish the following result, which is proven in Appendix A.
\begin{proposition}[Centrality measures for FR graphons]\label{Prop:centrality_fr}
	Let $\vv_1$ be the principal eigenvector  of $\mathbf{E}$ in \eqref{E:def_matrix_Q_E_FR}. 
    Then, the centrality functions $\degfun$, $\eigfun$, $\katfun_\alpha$, and $\prfun_\beta$ of the graphon $W_{\mathrm{FR}}$ can be computed as follows
	\begin{align}\label{E:centralities_fr}
	\degfun(x) = & \bbg(x)^{T}  \bbh, \quad \eigfun(x) = \frac{\bbg(x)^{T} \vv_1}{\sqrt{\vv_1{^{T}} \bbQ \vv_1}} , \\ 
	\katfun_\alpha(x) &= 1 + \alpha \bbg(x)^T \left( \bbI - \alpha \bbE \right)^{-1}\bbh, \nonumber \\
	\prfun_\beta(x) &= (1-\beta) (1+\beta \bbg(x)^T (\bbI - \beta \bbE_\mathrm{nor})^{-1} \bbh_\mathrm{nor}).    \nonumber
	\end{align}
\end{proposition}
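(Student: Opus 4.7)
The plan is to verify each of the four identities separately, treating degree and eigenvector centrality as direct consequences of the structure of the finite-rank kernel, and then handling Katz and PageRank via Neumann series expansions that reduce the action of the infinite-dimensional operator to matrix algebra on $\mathbb{R}^m$.

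For degree centrality I would simply pull $\bbg(x)$ out of the integral: $c^\mathrm{d}(x) = \int_0^1 \bbg(x)^T \bbh(y)\,\mathrm{d}y = \bbg(x)^T \bbh$. For eigenvector centrality I would invoke Lemma~\ref{lem:finitesum} to conclude that the principal eigenfunction has the form $\vf_1(x) = c\,\bbg(x)^T \vv_1$ for some scalar $c$. The normalization constraint $\|\vf_1\|=1$ then gives $c^{-2} = \int_0^1 (\bbg(x)^T \vv_1)^2\,\mathrm{d}x = \vv_1^T \bbQ \vv_1$, which yields the stated expression.

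For Katz centrality, I would use the Neumann series representation from Remark~\ref{R:katz_centrality}, valid because $\alpha < 1/\vertiii{\mathbb{W}}$. The key step is an induction showing that $(\mathbb{W}^k 1_{[0,1]})(x) = \bbg(x)^T \bbE^{k-1} \bbh$ for all $k\geq 1$: the base case $k=1$ follows by factoring, and the inductive step uses that
\begin{equation*}
\mathbb{W}(\bbg(x)^T \bbE^{k-1}\bbh) = \bbg(x)^T \!\left(\int_0^1 \bbh(y)\bbg(y)^T \mathrm{d}y\right)\! \bbE^{k-1}\bbh = \bbg(x)^T \bbE^{k}\bbh.
\end{equation*}
Summing the series then gives $c^\mathrm{k}_\alpha(x) = 1 + \alpha \bbg(x)^T (\bbI - \alpha\bbE)^{-1}\bbh$; the matrix inverse exists because the nonzero eigenvalues of $\bbE$ coincide with those of $\mathbb{W}$ by Lemma~\ref{lem:finitesum}, so $\alpha\|\bbE\|_{\mathrm{spec}} < 1$.

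The PageRank derivation proceeds by the same template applied to the reweighted operator $\tilde{\mathbb{W}}f(x) := \int_0^1 W(x,y)f(y)(c^\mathrm{d}(y))^{\dagger}\mathrm{d}y$. Using the symmetry identity $\bbg(y)^T\bbh = \bbg^T\bbh(y)$ (which follows from $W(x,y)=W(y,x)$), the denominator $c^\mathrm{d}(y)$ can be written as $\bbg^T\bbh(y)$ so that
\begin{equation*}
\tilde{\mathbb{W}}1_{[0,1]}(x) = \bbg(x)^T \bbh_\mathrm{nor}, \qquad \tilde{\mathbb{W}}^k 1_{[0,1]}(x) = \bbg(x)^T \bbE_\mathrm{nor}^{k-1}\bbh_\mathrm{nor}
\end{equation*}
by the same induction as above. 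Summing and multiplying by $(1-\beta)$ yields the claimed formula. The main obstacle is verifying convergence of the Neumann series for PageRank: this requires $\beta$ times the spectral radius of $\bbE_\mathrm{nor}$ to be below $1$. I would argue this by observing that $\tilde{\mathbb{W}}$ preserves total mass ($\int_0^1 \tilde{\mathbb{W}}f\,\mathrm{d}x = \int_0^1 f\,\mathrm{d}x$, the continuous analogue of column-stochasticity of $\bbA\bbD^{-1}$), so its spectral radius is exactly $1$, and by the finite-rank analogue of Lemma~\ref{lem:finitesum} applied to $\tilde{\mathbb{W}}$ the same bound transfers to $\bbE_\mathrm{nor}$, making $\beta\bbE_\mathrm{nor}$ contractive for $\beta<1$.
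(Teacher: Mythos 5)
Your proposal is correct and follows essentially the same route as the paper: direct integration for degree, Lemma~\ref{lem:finitesum} plus the normalization computation for eigenvector, and the induction $(\mathbb{W}^k 1_{[0,1]})(x)=\bbg(x)^T\bbE^{k-1}\bbh$ fed into the Neumann series for Katz and (with the reweighted operator and $\bbE_\mathrm{nor}$, $\bbh_\mathrm{nor}$) for PageRank. The only difference is that you spell out the PageRank well-definedness argument (the symmetry identity $c^\mathrm{d}(y)=\bbg^T\bbh(y)$ and the mass-preservation/spectral-radius bound for $\bbE_\mathrm{nor}$), which the paper leaves implicit in Remark~\ref{R:katz_centrality} and its proof sketch.
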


In the next subsection we illustrate the use of Proposition~\ref{Prop:centrality_fr} for the computation of graphon centralities.
 
\subsubsection{Example of a finite-rank graphon}
\label{Sss:fr_example}

Consider the FR graphon given by
\begin{equation*}
W_{\mathrm{FR}}(x,y)=(x^2+y^2)/2,
\end{equation*}
and illustrated in Fig.~\ref{fig:example_graphon2}-(a). Notice that this FR graphon can be written in the canonical form \eqref{E:def_finite_rank_graphon} by defining the vectors $\bbg(x) = [x^2, \, 1/2]^T$ and $\bbh(y) = [1/2, \, y^2]^T$. From \eqref{E:def_matrix_Q_E_FR} we then compute the relevant matrices $\bbQ$ and $\bbE$, as well as the relevant normalized quantities, to obtain
\begin{align*}
\bbQ = 
\begin{bmatrix}
1/5 & 1/6 \\
1/6 & 1/4 
\end{bmatrix}, 
\quad
\bbE = 
\begin{bmatrix}
1/6 & 1/4 \\
1/5 & 1/6 
\end{bmatrix}, 
\quad
\bbg = 
\begin{bmatrix}
1/3 \\
1/2 
\end{bmatrix}, \\
\bbh = 
\begin{bmatrix}
1/2 \\
1/3 
\end{bmatrix},\quad 
\bbh_\mathrm{nor} \approx 
\begin{bmatrix}
1.81 \\
0.79 
\end{bmatrix}, \quad
\bbE_\mathrm{nor} \approx
\begin{bmatrix}
0.40 & 0.91 \\
0.40 & 0.40 
\end{bmatrix}.
\end{align*}
\begin{figure*}[tb!]
	\begin{center}
		\includegraphics{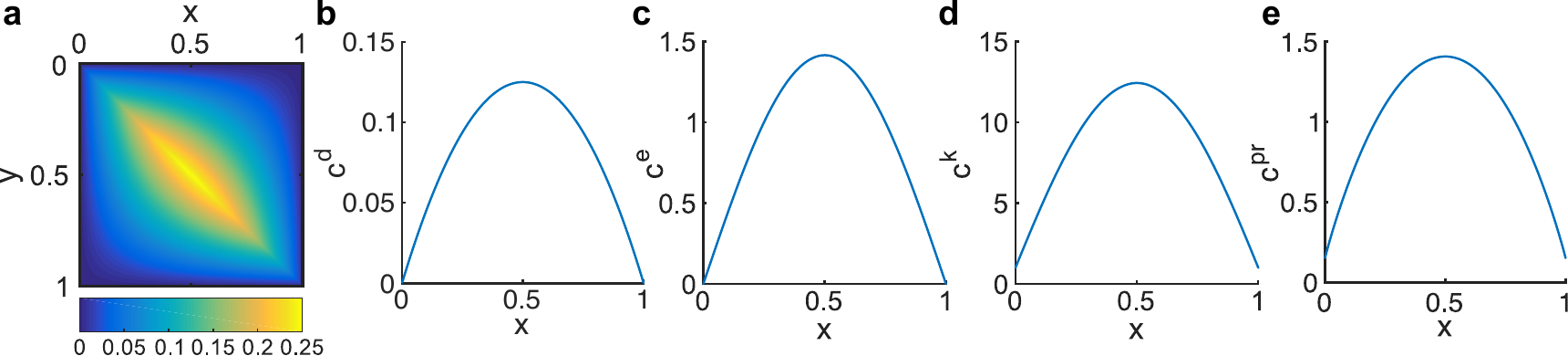}
	\end{center}
	\vspace{-0.3cm}
	\caption{Illustrative example of a general smooth graphon. \textbf{(a)} The graphon $W_{\mathrm{G}}$ induces an operator that has countably infinite number of nonzero eigenvalues and corresponding eigenfunctions. \textbf{(b-d)} Degree, eigenvector, Katz, and PageRank centralities 
    for the graphon depicted in (a).}
	\label{fig:example_graphon1}
		\vspace{-0.3cm}
\end{figure*}
A simple computation reveals that the principal eigenvector of $\bbE$ is $\vv_1 = [\sqrt{10}/3, \,\, 2 \sqrt{2}/3]^T$. 
We now leverage the result in Proposition~\ref{Prop:centrality_fr} to obtain
\begin{align*} 
c^d(x)  &=  [x^2, \, 1/2] \,\, 
\begin{bmatrix}
1/2 \\
1/3 
\end{bmatrix}
= \frac{x^2}{2}+\frac{1}{6},\\ 
c^e(x)  &= \frac{3}{2} \sqrt{\frac{3}{3+\sqrt{5}}}  [x^2, \, 1/2]  \begin{bmatrix}
\sqrt{10}/3 \\
2 \sqrt{2}/3 
\end{bmatrix}
\! \! \approx \! 1.07 \, x^2 + 0.54
,\\
c^k_\alpha(x)  &= [x^2, \, 1/2] \!
\left(\!
\begin{bmatrix}
1 & 0 \\
0 & 1 
\end{bmatrix}
\!-\! \alpha 
\begin{bmatrix}
1/6 & 1/4 \\
1/5 & 1/6 
\end{bmatrix}
\right)^{-1}\hspace{-0.4cm} \alpha 
\begin{bmatrix}
1/2 \\
1/3 
\end{bmatrix}
\!\!+\! 1 \\
& \approx 
10.19 \, x^2 + 5.44,\\
\prfun_{0.85}(x) &\approx 1.31 x^2 + 0.56,
\end{align*}
where we have set $\beta = 0.85$ in the PageRank centrality. Moreover, we have evaluated the Katz centrality for $\alpha = 0.9 /\lambda_1$, where $\lambda_1$ is the largest eigenvalue of $\bbE$. The four centrality functions are depicted in Fig.~\ref{fig:example_graphon2}-(b) through (e). 
As anticipated from the form of $W_{\mathrm{FR}}$, there is a simple monotonicity in the centrality ranking for all the measures considered. More precisely, highest centrality values are located close to $1$ in the interval $[0,1]$, whereas low centralities are localized close to $0$. Unlike in the example of the stochastic block model in Section~\ref{Sss:sbm_example}, all centralities here have the same functional form of a quadratic term with a constant offset.

\subsection{General smooth graphons}

In general, a graphon $W$ need not induce a finite-rank operator as in the preceding Sections~\ref{sec:sbm} and \ref{Ss:class_finite_rank_graphons}. However, as shown in Lemma \ref{lem:graphonoperator}, a graphon always induces a diagonalizable operator with countably many nonzero eigenvalues. In most cases, obtaining the degree centrality function is immediate since it entails the computation of an integral [cf.~\eqref{E:def_degree_centrality_graphons}]. On the other hand, for eigenvector, Katz, and PageRank centralities that depend on the spectral decomposition of $\mathbb{W}$, there is no universal technique available. Nonetheless, a procedure that has shown to be useful in practice to obtain the eigenfunctions $\vf$ and corresponding eigenvalues $\lambda$ of smooth graphons is to solve a set of differential equations  obtained by successive differentiation, when possible, of the eigenfunction equation in \eqref{E:def_eigenfunction}, that is, by considering
\begin{equation}\label{E:differential_equations}
\frac{\mathrm{d}^k}{\mathrm{d} x^k}\int_0^1W(x,y)\vf(y)\mathrm{d}y = \lambda \frac{\mathrm{d}^k \vf(x) }{\mathrm{d} x^k},
\end{equation}
for $k\in\mathbb{N}$. 
In the following section we illustrate this technique on a specific smooth graphon that does not belong to the finite-rank class.

\subsubsection{Example of a general smooth graphon}

Consider the graphon $W_{\mathrm{G}}$ depicted in Fig.~\ref{fig:example_graphon1}-(a) and with the following functional form 
\begin{equation*}
W_{\mathrm{G}}(x,y) = \min(x, y) [1 - \max(x,y)].
\end{equation*}
Specializing the differential equations in \eqref{E:differential_equations} for graphon $W_{\mathrm{G}}$ we obtain
\begin{equation}\label{E:differential_equations_graphon_G}
\frac{\mathrm{d}^k}{\mathrm{d} x^k}\!\! \left[\! (1-x)\!\!\! \int_0^x\!\! y\vf(y)\mathrm{d}y\!+\!x\!\!\int_x^1\!\!\!(1-y)\vf(y)\mathrm{d}y \right] \!=\! \lambda \frac{\mathrm{d}^k \vf(x) }{\mathrm{d} x^k}.
\end{equation}
First notice that without differentiating (i.e. for $k=0$) we can determine the boundary conditions $\vf(0) = \vf(1) = 0$. Moreover, by computing the second derivatives in \eqref{E:differential_equations_graphon_G}, we obtain that $-\vf(x) = \lambda \vf''(x)$. From the solution of this differential equation subject to the boundary conditions it follows that the eigenvalues and eigenfunctions of the operator $\mathbb{W}_{\mathrm{G}}$ are
\begin{equation}\label{E:eigenfunctions_graphon_G}
\lambda_n=\frac{1}{\pi^2n^2} \ \mbox{ and } \ \vf_n(x)= \sqrt{2} \sin(n\pi x) \quad \text{for } n \in\mathbb{N}.
\end{equation}

Notice that $\mathbb{W}_{\mathrm{G}}$ has an infinite --- but countable --- number of nonzero eigenvalues, with an accumulation point at zero.  Thus, $\mathbb{W}_{\mathrm{G}}$ cannot be written in the canonical form for finite-rank graphons \eqref{E:def_finite_rank_graphon}. Nevertheless, having obtained the eigenfunctions we can still compute the centrality measures for $W_{\mathrm{G}}$. For degree centrality, a simple integration gives us
$$ c^d(x) = (1-x)\int_0^x y \, \mathrm{d}y + x \int_x^1 (1-y)\,  \mathrm{d}y = \frac{x (1-x)}{2}. $$
From \eqref{E:eigenfunctions_graphon_G} it follows that the principal eigenfunction is achieved when $n=1$. Thus, the eigenvector centrality function [cf.~\eqref{E:def_eigenvector_centrality_graphons}] is given by
$$ c^e(x) =  \sqrt{2} \sin(\pi x).$$
Finally, for the Katz centrality we leverage Remark~\ref{R:katz_centrality} and the eigenfunction expressions in \eqref{E:eigenfunctions_graphon_G} to obtain
\begin{align*}
c^k_\alpha(x) &= 1+ \sum_{k=1}^\infty \alpha^k \sum_{n=1}^\infty \left( \frac{1}{n^2\pi^2} \right)^k \frac{1-(-1)^n}{\pi n} 2  \sin(n\pi x) \\
 &= 1 +  \sum_{n=1}^\infty \frac{2\alpha }{n^2\pi^2 - \alpha } \frac{1-(-1)^n}{\pi n}  \sin(n\pi x),
\end{align*}
which is guaranteed to converge as long as $\alpha < 1/\lambda_1 = \pi^2$. 
We plot these three centrality functions in Fig.~\ref{fig:example_graphon1}-(b) through (d), where we selected $\alpha = 0.9 \pi^2$ for the Katz centrality. Also, in Fig.~\ref{fig:example_graphon1}-(e) we plot an approximation to the PageRank centrality function $\prfun_\beta$ obtained by solving numerically the integral in its definition.
According to all centralities, the most important nodes within this graphon are those located in the center of the interval $[0,1]$, in line with our intuition. 
Likewise, nodes at the boundary have low centrality values.
Note that while the ranking according to all centrality functions is consistent, unlike in the example in Section~\ref{Sss:fr_example}, here there are some subtle differences in the functional forms. In particular, degree centrality is again a quadratic function whereas the eigenvector and Katz centralities are of sinusoidal form.

\section{Convergence of centrality measures}\label{S:convergence}

In this section we derive concentration inequalities relating the newly defined graphon centrality functions with  standard centrality measures. To this end, we start by noting that  while graphon centralities are  functions, standard centrality measures are vectors. To be able to compare such objects, we first show that there is a one to one relation between any finite graph with adjacency matrix $\bbA$ and a suitably constructed stochastic block model  graphon $W_{\textup{SBM} \mid \bbA}$. Consequently,  any centrality measure of $\bbA$ is in one to one relation with the corresponding  centrality function of  $W_{\textup{SBM} \mid \bbA}$.
To this end, for each $N \in \mathbb{N}$, we define a partition of $[0,1]$ into the  intervals $\mathcal{B}^N_i$, where $\mathcal{B}^N_i=[(i-1)/N, i/N)$ for $i=1,\ldots,N-1$, and $\mathcal{B}^N_N=[(N-1)/N, 1]$.
We denote the associated indicator-function vector by $\mathbf{1}_N(x) := [1_{\mathcal{B}^N_1}(x), \ldots, 1_{\mathcal{B}^N_N}(x)]^T,$ consistently with \eqref{eq:one_vec}.
\begin{lemma}\label{lem:equivalence}
For any adjacency matrix $\bbA\in\{0,1\}^{N\times N}$ define the corresponding  stochastic block model  graphon  as
$$W_{\textup{SBM} \mid \bbA}(x,y)=\sum_{i=1}^N\sum_{j=1}^N A_{ij} 1_{\mathcal{B}^N_i}(x)1_{\mathcal{B}^N_j}(x).$$
 Then the centrality function $c_N(x)$ corresponding to the graphon $W_{\textup{SBM} \mid \bbA}$ is given by
\begin{align*}
    c_N(x)= {\mathbf{1}_N(x)}^{T} \bbc_{A}.
\end{align*}
where $\bbc_{A}$ is  the centrality measures of the graph with rescaled adjacency matrix $\frac{1}{N}\mathbf{A}$.
\end{lemma}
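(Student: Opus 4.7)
The plan is to apply Proposition~\ref{Prop:centrality_sb} directly, after recognizing the graphon $W_{\textup{SBM} \mid \bbA}$ as an instance of the SBM graphon family (cf.~\eqref{E:def_sbm_graphon}) with probability matrix $\bbP = \bbA$, with $m=N$ blocks, and with the equal-sized partition $\{\mathcal{B}^N_i\}_{i=1}^N$ each of Lebesgue measure $1/N$. Under this identification, $\mathbf{1}(x)$ in \eqref{eq:one_vec} coincides with $\mathbf{1}_N(x)$, so the effective measure and connectivity matrices of Definition~\eqref{E:def_matrix_Q_E} become
\begin{equation*}
\bbQ_{\textup{SBM}} = \tfrac{1}{N}\bbI, \qquad \bbE_{\textup{SBM}} = \bbP\bbQ_{\textup{SBM}} = \tfrac{1}{N}\bbA.
\end{equation*}
Hence $\bbE_{\textup{SBM}}$ is exactly the rescaled adjacency matrix that appears in the statement, which is the key observation driving the proof.

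The next step is to match each of the four closed-form expressions in \eqref{E:centralities_sbm} with the classical centrality formulas of Section~\ref{Ss:prelim_centrality} applied to $\frac{1}{N}\bbA$. For degree centrality, $c^d(x) = \mathbf{1}_N(x)^T \bbE_{\textup{SBM}} \mathbf{1} = \mathbf{1}_N(x)^T(\tfrac{1}{N}\bbA\mathbf{1})$, and $\tfrac{1}{N}\bbA\mathbf{1}$ is the degree centrality vector of $\tfrac{1}{N}\bbA$ by \eqref{E:def_degree_centrality}. For Katz centrality, $c^k_\alpha(x) = \mathbf{1}_N(x)^T(\bbI - \alpha\tfrac{1}{N}\bbA)^{-1}\mathbf{1}$, matching \eqref{E:def_katz_centrality} applied to $\tfrac{1}{N}\bbA$. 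For eigenvector centrality, I will use that the eigenvectors of $\tfrac{1}{N}\bbA$ and of $\bbA$ coincide, and that $\bbv_1^T \bbQ_{\textup{SBM}} \bbv_1 = \tfrac{1}{N}$ since $\|\bbv_1\|=1$; therefore $c^e(x) = \mathbf{1}_N(x)^T(\sqrt{N}\bbv_1)$, which by \eqref{E:def_eigenvector_centrality} equals $\mathbf{1}_N(x)^T \bbc^e_{\frac{1}{N}\bbA}$.

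For PageRank, the verification requires a short algebraic step: since $\bbD_{\bbE}=\diag(\bbE_{\textup{SBM}}\mathbf{1}) = \tfrac{1}{N}\bbD$ where $\bbD=\diag(\bbA\mathbf{1})$, the product $\bbE_{\textup{SBM}}\bbD_{\bbE}^{-1}$ simplifies to $\bbA\bbD^{-1}$, which equals $(\tfrac{1}{N}\bbA)(\tfrac{1}{N}\bbD)^{-1}$. Thus $c^{\textup{pr}}_\beta(x) = (1-\beta)\mathbf{1}_N(x)^T(\bbI - \beta \bbA\bbD^{-1})^{-1}\mathbf{1}$, matching \eqref{E:def_page_centrality} applied to $\tfrac{1}{N}\bbA$ (and illustrating that PageRank is invariant under uniform rescaling of $\bbA$). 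In every case the centrality function is of the form $c_N(x) = \mathbf{1}_N(x)^T \bbc_A$ with $\bbc_A$ the corresponding centrality vector of $\tfrac{1}{N}\bbA$, completing the proof.

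No substantial obstacle is anticipated; the argument is essentially a bookkeeping exercise. The only place that requires some care is the eigenvector case, where one must correctly handle the $\sqrt{N}$ normalization of Definition~\ref{def:graphons_centrality} together with the block-size-induced factor $\tfrac{1}{N}$ coming from $\bbQ_{\textup{SBM}}$; the two combine cleanly to recover \eqref{E:def_eigenvector_centrality}.
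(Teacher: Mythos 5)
Your proposal is correct and follows essentially the same route as the paper: the paper's proof likewise identifies $W_{\textup{SBM}\mid\bbA}$ as an SBM graphon with $m=N$ uniform blocks, observes that $\bbQ_{\textup{SBM}}=\frac{1}{N}\bbI$ (hence $\bbE_{\textup{SBM}}=\frac{1}{N}\bbA$), and invokes Proposition~\ref{Prop:centrality_sb}. Your version merely spells out the per-measure bookkeeping (including the $\sqrt{N}$ normalization for the eigenvector case and the rescaling invariance of PageRank) that the paper leaves implicit.
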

\begin{proof}
The graphon $W_{\textup{SBM} \mid \bbA}$ has the stochastic block model structure described in Section \ref{sec:sbm}, with $N$ uniform blocks. By selecting $m=N$ and $\mathcal{B}_i=\mathcal{B}_i^N$ for each $i\in\{1,\ldots,N\}$, we obtain $\mathbf{Q}_\textup{SBM}= \frac{1}{N} \bbI$. Consequently, the formulas in Proposition~\ref{Prop:centrality_sb} simplify as given in the statement of this lemma.
\end{proof}
\begin{remark}
    Note that the scaling factor $\frac 1N$ does not affect the centrality rankings of the nodes in $\bbA$, but only the magnitude of the centrality measures. This re-scaling   is needed to avoid   diverging centrality measures for  graphs of increasing size. Observe further that $ c_N(x)$ is the piecewise-constant function corresponding to the vector $\bbc_{A}$. We finally remark that graphons of the type $W_{\textup{SBM} \mid \bbA}$ have appeared before in the literature using a different notation~\cite{Borgs2008}.
\end{remark}
By using the previous lemma, we can compare centralities of graphons and graphs by working in the function space. Using this equivalence, we demonstrate that the previously defined graphon centrality functions are not only defined analogously to the centrality measures on finite graphs, but also emerge as the limit  of those centrality measures for a sequence of graphs of increasing size. 
Stated differently, just like the graphon provides an appropriate limiting object for a growing sequence of finite graphs, the graphon centrality functions can be seen as the appropriate limiting objects of the finite centrality measures as the size of the graphs tends to infinity.
In this sense, the centralities presented here may be seen as a proper generalization of the finite setting, just like a graphon provides a generalized framework for finite graphs.
Most importantly, we show that the distance (in $L_2$ norm) between the graphon centrality function and the step-wise constant function associated with the centrality vector of any graph sampled from the graphon can be bounded, with high probability, in terms of the sampled graph size $N$. 

\begin{definition}[Sampled graphon]\label{D:sampled_graphons}
    Given a graphon $W$ and a size $N\in\mathbb{N}$ fix the latent variables $\{u_i\}_{i=1}^N$ by choosing either:
    \begin{mylist}
    \item[-] `deterministic latent variables': $ u_i=\frac iN$.
    \item[-] `stochastic latent variables': $ u_i= U_{(i)}$ where $U_{(i)}$ is the $i$-th order statistic of $N$ random samples from $Unif[0,1]$.
    \end{mylist}
    Utilizing such latent variables  construct
        \begin{mylist}
        \item[-] the `probability' matrix $\bbP^{(N)}\in[0,1]^{N\times N}$ 
            \begin{align*}
                P^{(N)}_{ij} :=\textstyle  W\left(u_i,u_j\right) \ \text{for all} \ i,j\in\{1,\ldots,N\}.
            \end{align*}
        \item[-]  the sampled graphon
            \begin{align*}
                W_N(x,y)& :=\textstyle \sum_{i=1}^N\sum_{j=1}^N P^{(N)}_{ij}1_{\mathcal{B}^N_i}(x)1_{\mathcal{B}^N_j}(y).
            \end{align*}
        \item[-] the operator $\mathbb W_N$ of the sampled graphon
            \begin{align*}(\mathbb{W}_Nf)(x)
                                            &:=\textstyle \sum_{j=1}^N P^{(N)}_{ij} \int_{\mathcal{B}^N_j} f(y)\mathrm{d}y \mbox{ for any } x\in\mathcal{B}^N_i
            \end{align*}
    \end{mylist}
\end{definition}

The sampled graphon $W_N$ obtained when working with deterministic latent variables can intuitively be seen as an approximation of the graphon $W$ by using a stochastic block model graphon with $N$ blocks, as the one described in Section~\ref{sec:sbm}, and is useful to study graphon centrality functions as limit of graph centrality measures. 
On the other hand, the sampled graphon $W_N$ obtained when working with stochastic latent variables is useful as an intermediate step to analyze the relation between the graphon centrality function and the centrality measure of graphs sampled from the graphon according to the following procedure.

\begin{definition}[Sampled graph]\label{D:sampled_graphs}
    Given the `probability' matrix $\bbP^{(N)}$ of a sampled graphon we define
    \begin{mylist}
        \item[-] the sampled matrix $\bbS^{(N)}\in\{0,1\}^{N\times N}$ as the adjacency matrix of a symmetric (random) graph obtained by taking $N$ isolated vertices $i\in\{1,\ldots,N\}$, and adding undirected edges between vertices $i$ and $j$ at random with probability $\kappa_NP^{(N)}_{ij}$ for all $i > j$.
            Note that $\mathbb{E}[\bbS^{(N)}]= \kappa_N\bbP^{(N)}$. 
        \item[-] the associated (random) linear operator
            \begin{align}
                (\mathbb{S}_Nf)(x):=\textstyle \sum_{j=1}^N S^{(N)}_{ij} \int_{\mathcal{B}^N_j} f(y)\mathrm{d}y \mbox{ for any } x\in\mathcal{B}^N_i,
            \end{align}
            and its associated (random) graphon:
            \begin{equation}
                S_N(x,y)=\mathbf{1}_N(x)^T\mathbf S^{(N)}\mathbf{1}_N(y).
            \end{equation}
    \end{mylist}
\end{definition}

In general, we denote the centrality functions associated with the sampled graphon operator $\mathbb{W}_N$ by $c_N(x)$ whereas the centrality functions associated with the operator $\kappa_N^{-1}\mathbb{S}_N$ are denoted by $\hat{c}_N(x)$. 
Note that thanks to Lemma \ref{lem:equivalence}  such centrality functions are in one to one correspondence with the centrality measures of the finite graphs $\bbP^{(N)}$ and $\kappa_N^{-1}\bbS^{(N)}$, respectively. 
Consequently, studying the relation between $c(x)$, $c_N(x)$ and  $\hat{c}_N(x)$ allows us to relate the graphon centrality function with the centrality measure of graphs sampled from the graphon. 
We note that previous works on graph limit theory imply the convergence of $\mathbb{W}_N$ and $\kappa_N^{-1}\mathbb{S}_N$ to the graphon operator  $\mathbb{W}$ \cite{Borgs2012, Szegedy2011,Lovasz2012}. 
Intuitively,  convergence of $c_N(x)$ and  $\hat{c}_N(x)$ to $c(x)$ then follows from continuity of spectral properties of the graphon operator.  
In the following, we make this argument precise and more importantly we provide a more refined analysis by establishing sharp rates of convergence of the sampled graphs, under the following smoothness assumption on $W$. A more detailed discussion of related convergence results can be found in  Appendix B (see supplementary material).

\begin{assumption}[Piecewise Lipschitz graphon]
\label{lipschitz}
There exists a constant $L$ and a sequence of non-overlapping intervals  $\mathcal{I}_k=[\alpha_{k-1},\alpha_k)$ defined by $0=\alpha_0< \dots < \alpha_{K+1} =1$, for a (finite) $K\in\mathbb{N}$, such that for any $k,l$, any set $\mathcal{I}_{kl}=\mathcal{I}_k\times \mathcal{I}_l$ and pairs $(x,y)\in \mathcal{I}_{kl}$, $(x',y')\in \mathcal{I}_{kl}$ we have that 
$$|W(x,y)-W(x',y')|\leq L(|x-x'|+|y-y'|).$$
\end{assumption}
This assumption has also been used in the context of graphon estimation \cite{airoldietal2013,gaoetal2015} and is typically  fulfilled for most of the graphons of interest.

%

\begin{boxtheorem}[Convergence of graphon operators]\label{Thm1}
For a graphon fulfilling Assumption \ref{lipschitz}, it holds with probability $1-\delta'$ that:
\begin{equation}\label{eq:convergenceWN}
\begin{aligned}
    \vertiii{\mathbb{W}_N-\mathbb{W}}&:= \sup_{\|f\|=1} \|\mathbb{W}_Nf-\mathbb{W}f\| \\&  \leq2 \sqrt{({L^2}-K^2)d_N^2 + {K}{d_N}} =: \rho(N),
\end{aligned}
\end{equation}
where $\delta'=0$ and $d_N=\frac1N$ in the case of deterministic latent variables and $\delta'=\delta\in(Ne^{-N/5},e^{-1})$ and ${d_N:=\frac1N +\sqrt{\frac{8\log(N/\delta)}{(N+1)}}}$ in the case of stochastic latent variables.
Moreover, if $N$ is large enough, as specified in Lemma \ref{lemma:max_degree}, then with probability at least $1-\delta-\delta'$
\begin{equation}\label{eq:convergenceSN}
    \vertiii{ \kappa_N^{-1}\mathbb{S}_N-\mathbb{W} }\leq \sqrt{\frac{4\kappa_N^{-1}\log(2N/\delta)}{N}}+\rho(N). 
\end{equation}
In particular, if $\kappa_N = \frac{1}{N^{\tau}}$ with $\tau\in[0,1)$, then:
\begin{equation}\label{eq:convergenceSN2}
   \lim_{N\rightarrow \infty} \vertiii{ \kappa_N^{-1}\mathbb{S}_N-\mathbb{W} }= 0,  \mbox{ almost surely.}
\end{equation}
\end{boxtheorem}

\begin{remark}
    To control the error induced by the random sampling, we derive a concentration inequality for uniform order statistics, as reported next,  that can be of independent interest. 
As this result is not central to the discussion of our paper, we relegate its proof to  Appendix B.
In addition, we make use of a lower bound on the maximum expected degree as reported in Lemma~\ref{lemma:max_degree} and proven in  Appendix B.
\end{remark}

\begin{proposition}
 \label{concentration_all}
Let $U_{(i)}$ be the order statistics of $N$ points sampled from a standard uniform distribution.
Suppose that $N\geq 20$ and $\delta\in(Ne^{-N/5},e^{-1})$. With probability at least $1-\delta$
 $$\Big|U_{(i)}-\frac{i}{N+1}\Big|\leq \sqrt{\frac{8\log(N/\delta)}{(N+1)}} \quad \mbox{for all }  i.$$
 \end{proposition}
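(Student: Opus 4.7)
The plan is to combine a pointwise Chernoff-type bound for each order statistic with a union bound over $i \in \{1,\dots,N\}$. The pointwise bound rests on the classical distributional identity
\[
    \prob[U_{(i)} \leq t] = \prob[\mathrm{Bin}(N,t) \geq i],
\]
which translates one-sided deviations of $U_{(i)}$ about its mean $i/(N+1)$ into one-sided binomial deviations about $Nt$. Using this identity, the concentration of an awkward Beta random variable is reduced to the concentration of a sum of i.i.d.\ Bernoulli variables, for which standard machinery applies.

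For the upper tail I would set $t = i/(N+1)+\epsilon$, so that $\prob[U_{(i)} > t] = \prob[\mathrm{Bin}(N,t) \leq Nt - s]$ with deviation $s = N\epsilon + (N+1-i)/(N+1) \geq N\epsilon$. Applying Hoeffding's inequality to the binomial yields $\prob[U_{(i)} > i/(N+1)+\epsilon] \leq \exp(-2s^2/N) \leq \exp(-2N\epsilon^2)$. A symmetric argument handles the lower tail by writing $\prob[U_{(i)} < i/(N+1)-\epsilon] = \prob[\mathrm{Bin}(N,i/(N+1)-\epsilon) \geq i]$. Combining the two directions gives $\prob[|U_{(i)} - i/(N+1)| > \epsilon] \leq 2\exp(-2N\epsilon^2)$ for every fixed $i$.

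A union bound over $i = 1,\dots,N$ then produces
\[
    \prob\bigl[\max_{1 \leq i \leq N} |U_{(i)} - i/(N+1)| > \epsilon \bigr] \leq 2N\exp(-2N\epsilon^2),
\]
and equating the right-hand side with $\delta$ yields $\epsilon = \sqrt{\log(2N/\delta)/(2N)}$. It remains to verify that this $\epsilon$ is dominated by the claimed $\sqrt{8\log(N/\delta)/(N+1)}$ throughout the stated range: after clearing denominators this reduces to $(N+1)\log 2 \leq (15N-1)\log(N/\delta)$, which is easily satisfied whenever $N \geq 20$ and $\delta < e^{-1}$.

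The main obstacle is purely bookkeeping rather than a deep analytical hurdle: one must carefully manage the mismatch between the mean $i/(N+1)$ of $U_{(i)}$ and the denominator $N$ that naturally comes out of Hoeffding's inequality, and verify that the loose constant $8$ absorbs the $\log 2$ slack incurred by the union bound. The lower bound $\delta > N e^{-N/5}$ in the hypothesis enters only to ensure that the claimed $\epsilon$ is sufficiently below $1$ for the resulting inequality to be informative — note that the trivial bound $|U_{(i)} - i/(N+1)| \leq 1$ always holds — so the role of that condition is to restrict to a regime in which the concentration statement carries content.
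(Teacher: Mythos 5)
Your proof is correct, but it takes a genuinely different route from the paper's. The paper proves the pointwise bound (its Lemma on $|U_{(i)}-i/(N+1)|$) by controlling the moment generating function of the order statistic through the spacings $\Delta_i = U_{(i)}-U_{(i-1)}$, invoking an exponential Efron--Stein inequality of Boucheron and Thomas together with the increasing hazard rate of the uniform distribution, explicitly computing the moments $\mathbb{E}[\Delta_i^k]$ via the Beta function, and then optimizing a Chernoff bound; the union bound over $i$ is the same final step in both arguments. You instead reduce everything to the duality $\prob(U_{(i)}\leq t)=\prob(\mathrm{Bin}(N,t)\geq i)$ and apply Hoeffding's inequality, which is elementary and, as far as I can check, entirely sound: the deviation bookkeeping $s = N\epsilon + (N+1-i)/(N+1)\geq N\epsilon$ is right, the two tails give $2\exp(-2N\epsilon^2)$, and the comparison $(N+1)\log 2 \leq (15N-1)\log(N/\delta)$ holds (in fact for all $N\geq 1$ once $\delta<e^{-1}$). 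Two things your route buys: a sharper constant ($\epsilon=\sqrt{\log(2N/\delta)/(2N)}$, roughly a factor of $4$ inside the square root better than the stated bound, which you then pad up to $8$), and no need for the hypothesis $\delta>Ne^{-N/5}$ — in the paper that lower bound is a genuine technical requirement of the MGF argument (it guarantees the optimizing Chernoff parameter satisfies $\lambda<0.35N$), not merely an informativeness condition as you speculate, so your proof actually establishes the statement under weaker assumptions. The paper's heavier machinery is self-justified by the authors' remark that the spacings-based concentration result may be of independent interest, but for the purposes of Theorem 1 your argument suffices.
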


\begin{lemma}[]\label{lemma:max_degree}
If  $N$ is such that
\begin{subequations}\label{degree_condition}
\begin{equation}
    2d_N<\Delta^{(\alpha)}_{\textup{MIN}}:=\min_{k\in \{1,\ldots,K+1\}} (\alpha_k-\alpha_{k-1}),
\end{equation}
\begin{equation}
\frac1N \log\left(\frac{2N}{\delta} \right)+d_N(2K+3L) <C^d:=\max_x c^d(x)
\end{equation}
\end{subequations}
then $C^d_N:=\max_i( \sum_{j=1}^N P^{(N)}_{ij}   )\ge \frac49 \log(\frac{2N}{\delta})$.
\end{lemma}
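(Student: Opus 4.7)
Let $x^* \in \arg\max_x c^d(x)$, so $c^d(x^*) = C^d$, and let $\mathcal{I}_{k^*}$ be the piece of the partition $\{[\alpha_{k-1},\alpha_k)\}$ containing $x^*$; by condition (a), its length exceeds $2 d_N$. I will first exhibit an index $i^*$ satisfying $u_{i^*} \in \mathcal{I}_{k^*}$ and $|u_{i^*} - x^*| \leq 2 d_N$. In the deterministic case $u_i = i/N$, consecutive latent variables are exactly $d_N$ apart, so such an $i^*$ exists even when $x^*$ sits near an endpoint of $\mathcal{I}_{k^*}$. In the stochastic case, on the event of Proposition~\ref{concentration_all} (probability $\geq 1-\delta$) we have $|U_{(i)} - i/(N+1)| \leq d_N - 1/N$, hence consecutive $u_i$'s are within $2d_N$ of each other and of their regular-grid counterparts, and a counting argument on the regular grid $\{i/(N+1)\}$ yields the desired $i^*$.

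Since $u_{i^*}$ and $x^*$ lie in the same $x$-piece $\mathcal{I}_{k^*}$, Assumption~\ref{lipschitz} gives $|W(u_{i^*},y) - W(x^*,y)| \leq L|u_{i^*}-x^*| \leq 2Ld_N$ for every $y\in[0,1]$, and integrating in $y$ yields $|c^d(u_{i^*}) - C^d| \leq 2 L d_N$. For the Riemann-sum estimate, note that $f(y) := W(u_{i^*},y)$ is piecewise Lipschitz in $y$ with at most $K$ jumps, so its total variation is bounded by $V(f) \leq L+K$. Letting $\hat F_N(t) := N^{-1}|\{j:u_j\leq t\}|$ be the empirical CDF of $\{u_j\}_{j=1}^N$ and $F(t)=t$ the uniform CDF on $[0,1]$, Riemann--Stieltjes integration by parts (using $(\hat F_N - F)(0)=(\hat F_N - F)(1)=0$) yields
\begin{equation*}
    \Bigl|\tfrac{1}{N}\sum_{j=1}^N f(u_j) - \int_0^1 f(y)\,\mathrm{d}y\Bigr| = \Bigl|\int f\,\mathrm{d}(\hat F_N - F)\Bigr| \leq V(f)\,\sup_{t\in[0,1]}|\hat F_N(t) - t|.
\end{equation*}
The supremum is at most $d_N$ (trivially in the deterministic case; via Proposition~\ref{concentration_all} in the stochastic case, since the extremes of $|\hat F_N - F|$ on each gap between order statistics reduce to bounding $|i/N - U_{(i)}|$ and $|i/N - U_{(i+1)}|$, both of which are $\leq d_N$). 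Therefore $\bigl|\tfrac1N\sum_j W(u_{i^*},u_j) - c^d(u_{i^*})\bigr| \leq (L+K)d_N$.

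Chaining the two estimates, $\tfrac{1}{N} C^d_N \geq \tfrac{1}{N}\sum_j P^{(N)}_{i^* j} \geq c^d(u_{i^*}) - (L+K)d_N \geq C^d - (3L + K) d_N \geq C^d - (3L + 2K) d_N$. Multiplying by $N$ and invoking condition (b), one obtains $C^d_N \geq N C^d - N(3L + 2K) d_N > \log(2N/\delta) \geq \tfrac{4}{9}\log(2N/\delta)$, as required.

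\textbf{Main obstacle.} The delicate point is the first step: guaranteeing an $i^*$ whose $u_{i^*}$ sits in the same piece $\mathcal{I}_{k^*}$ as $x^*$ even when $x^*$ lies close to a boundary $\alpha_k$; this is exactly what forces the factor of two in the hypothesis $2d_N < \Delta^{(\alpha)}_{\mathrm{MIN}}$, and in the stochastic case it relies on Proposition~\ref{concentration_all} to bound the maximum gap between consecutive order statistics. The bounded-variation framing of the Riemann error then avoids an explicit cell-by-cell bookkeeping around the $K$ discontinuities of $W(u_{i^*},\cdot)$, which would otherwise inflate the $K$-dependent constants above $2K$ and would not be absorbed by condition~(b).
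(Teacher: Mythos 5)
Your proof is correct and reaches the paper's bound $\tfrac1N C^d_N \ge C^d - (3L+2K)d_N$ (in fact the slightly better constant $3L+K$), but the key quantitative step is carried out with a different tool. The paper works entirely through the sampled graphon: it writes $\tfrac1N C^d_N = \max_x \int_0^1 W_N(x,y)\,\mathrm{d}y$, restricts the maximum to the set $\mathcal{C}_N^c$ of points at distance more than $d_N$ from every discontinuity (condition (a) guarantees this set meets the Lipschitz piece of the maximizer within $d_N$, costing $Ld_N$), and then bounds $\int_0^1 |W_N(x,y)-W(x,y)|\,\mathrm{d}y$ by splitting the $y$-integral into $y\in\mathcal{C}_N^c$, where the pointwise Lipschitz bound $2Ld_N$ applies, and $y\in\mathcal{C}_N$, of measure at most $2Kd_N$, where only $|D|\le 1$ is used — giving $2(L+K)d_N$. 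You instead pick a latent variable $u_{i^*}$ in the same piece as the maximizer (costing $2Ld_N$) and control the row sum $\tfrac1N\sum_j W(u_{i^*},u_j)$ against $c^d(u_{i^*})$ by a one-dimensional Koksma-type inequality: integration by parts against the empirical CDF bounds the error by the total variation $V(W(u_{i^*},\cdot))\le L+K$ times the star discrepancy of $\{u_j\}$, which is at most $d_N$ in both sampling regimes (the stochastic case again via Proposition~\ref{concentration_all}). Your discrepancy argument buys a marginally sharper Riemann-sum error $(L+K)d_N$ and avoids the explicit good/bad decomposition in $y$; the paper's decomposition has the advantage of reusing verbatim the machinery ($\mathcal{C}_N$, $D(x,y)$, the area bound) already set up for Theorem~\ref{Thm1}. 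The one place your write-up is terser than it should be is the existence of $i^*$ in the stochastic case: the "counting argument" does go through — the grid $\{i/(N+1)\}$ must meet $\mathcal{I}_{k^*}$ shrunk by $d_N - 1/N$ on each side, because condition (a) leaves that shrunk interval with length exceeding $2/N > 1/(N+1)$ — but this is exactly where condition (a) is consumed, so it deserves an explicit line of justification.
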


\noindent \textit{Proof of Theorem \ref{Thm1}:} \\
We prove the three statements sequentially.\\[0.2cm]
\textit{Proof of\textbf{~\eqref{eq:convergenceWN}}.}
First of all note that by definition, for any $(x,y) \in \ccalB_i^N \times \ccalB_j^N$ it holds
$W_N(x,y)=W(u_i,u_j),$
but it is not necessarily true that $(u_i,u_j) \in \ccalB_i^N \times \ccalB_j^N$. 
Let us define $k_i,k_j\in\{1,\dots,K+1\}$ such that   the point $(u_i,u_j)$ belongs to the Lipschitz block $\mathcal{I}_{k_ik_j}$, as defined in Assumption \ref{lipschitz} and illustrated in Fig.~\ref{fig:sets}.
We define as $ \mathcal S_{ij}$ the subset of points in  $\ccalB_i^N \times \ccalB_j^N$ that belong to the same Lipschitz block $\mathcal{I}_{k_ik_j}$ as $(u_i,u_j)$. Mathematically,
\begin{equation*}
    \mathcal S_{ij} = \{(x,y) \in \ccalB_i^N \times \ccalB_j^N |  \left(u_{i},u_{j}\right)\! \!\in \mathcal{I}_{k_ik_j} \text{ and } (x,y)\! \in \mathcal{I}_{k_ik_j}  \textstyle \}.
\end{equation*}

\begin{figure}
\begin{center}
    \includegraphics[width=\columnwidth]{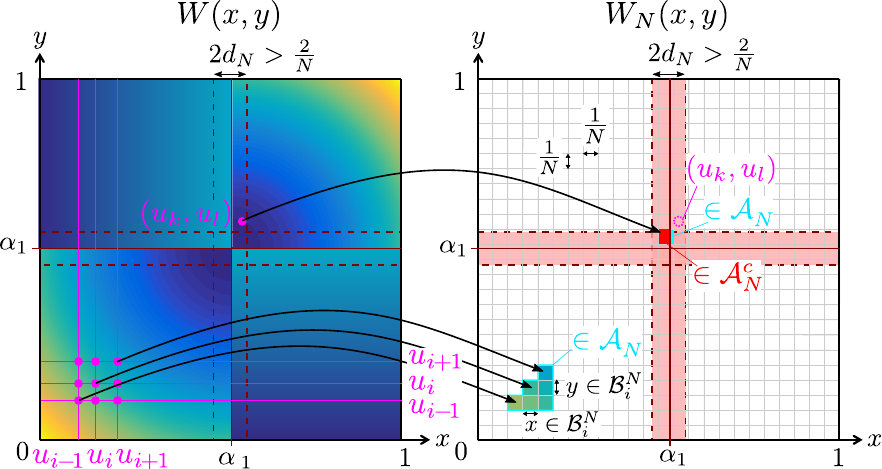}
\end{center}
\vspace{-0.3cm}
\caption{ Schematic for the sets  $\mathcal A_N$ and  $\mathcal A_N^c$ in the case of stochastic latent variables for  $K=1$, threshold $\alpha_1$ and $4$ Lipschitz blocks. 
The plot on the left shows the original graphon $W(x,y)$, the $4$ Lipschitz blocks and some representative latent variables. The plot on the right shows the sampled graphon $W_N(x,y)$ which is a piecewise constant graphon with uniform square blocks of side $\frac1N$. The $\frac1N$-grid is illustrated in gray. The constant value in each block $ \ccalB_i^N \times \ccalB_j^N$ corresponds to the value in the original graph sampled at the point $(u_i,u_j)$ (as illustrated by the arrows). 
The set $\ccalB_i^N \times \ccalB_i^N$ in the bottom left is an example where all the points $(x,y)\in\ccalB_i^N \times \ccalB_i^N$ belong to the same Lipschitz block as their corresponding sample, which is $(u_i,u_i)$, so that  $\ccalB_i^N \times \ccalB_i^N= \mathcal S_{ii}$ and therein $|D(x,y)|\le 2Ld_N$.  The set $\ccalB_k^N \times \ccalB_l^N$ instead is one of the problematic ones since part of its points (cyan) belong to the
same Lipschitz block as $(u_k,u_l)$ and are thus in $\mathcal{S}_{kl}$ but part of its points (red) do not and therefore belong to $\mathcal A_N^c$. Note that by construction with probability $1-\delta'$ all such problematic points are contained in the  set $\mathcal{D}_N$ (which is illustrated in light red). This figure also illustrates that in general $\mathcal{D}_N$ is  a strict superset of $\mathcal A_N^c$ (hence the bound in \eqref{eq:area} is conservative). 
}
\label{fig:sets}
	\vspace{-0.3cm}
\end{figure}

In the following, we partition the set $[0,1]^2$ into the set $\mathcal A_N:=\cup_{ij}  \mathcal S_{ij}$ and its complement $\mathcal A_N^c:=[0,1]^2\backslash \mathcal A_N$. 
In words, $\mathcal A_N$ is the set of points, for which  $(x,y)$  and its corresponding sample $(u_i,u_j)$  belong to the same Lipschitz block.  
We now prove that, with probability $1-\delta'$, $\mathcal A_N^c$ has area 
\begin{equation}\label{eq:area}
    \textstyle   \text{Area}( \mathcal A_N^c)\le \text{Area}(\mathcal D_N) = 4Kd_N-4K^2d_N^2.
\end{equation} 

To prove the above, we define the set $\mathcal D_N$ by constructing a stripe of width $2d_N$ centered at each discontinuity of the graphon, as specified in Assumption \ref{lipschitz}. 
This guarantees that any point in $[0,1]^2\backslash\mathcal D_N$ has distance more than $d_N$ component-wise from a discontinuity. 

Note that in the case of deterministic latent variables for any $i\in\{1,\ldots,N\}$ and any $x\in \ccalB_i^N$ it holds by construction that $|x-u_i| = |x-\frac iN|\le d_N = \frac 1N $ and similarly $|y-u_j|\le d_N$. 
In the case of stochastic latent variables, Proposition~\ref{concentration_all} guarantees that  with probability at least $1-\delta$ for any $i,j\in\{1,\ldots,N\}$ and any $x\in \ccalB_i^N,y\in \ccalB_j^N$   it holds $|x-u_i|=|x-U_{(i)}|\le d_N$, $|y-u_j|\le d_N$. 
In both cases, with probability $1-\delta'$,  all the points in $[0,1]^2\backslash \mathcal D_N$ are less than $d_N$ close to their sample $(u_i,u_j)$ and more than $d_N$ far from any discontinuity (component-wise) hence they surely belong to $\mathcal A_N$. 

Consequently, with probability $1-\delta'$ we have $ \mathcal A_N^c \subseteq \mathcal D_N$.
Each stripe in $\mathcal D_N$ has width $2d_N$, length $1$, and there are $2K$ stripes in total. 
Formula \eqref{eq:area} is then immediate by noticing that  multiplying $2d_N$ times $2K$ counts twice the  $K^2$ intersections between horizontal and vertical stripes.

Consider now any $f\in\leb$ such that $\|f\|=1$. Let $D(x,y):= W_N(x,y)-W(x,y)$ and note that $|D(x,y)|\le1$. Then we get 
\begin{align}
    &\textstyle\|\mathbb{W}_Nf-\mathbb{W}f\|^2=\int_0^1 (\mathbb{W}_Nf-\mathbb{W}f)^2(x) \mathrm{d}x\nonumber \\
                                   &\textstyle= \int_0^1 \left(\int_0^1 D(x,y) f(y) \mathrm{d}y \right)^2  \mathrm{d}x \nonumber\\
                                   &\textstyle\le \int_0^1 \left(\int_0^1 D(x,y)^2 \mathrm{d}y\right) \left(\int_0^1 f^2(y) \mathrm{d}y\right) \mathrm{d}x\label{eq:p1} \\&\textstyle=  \int_0^1 \left(\int_0^1 D(x,y)^2 \mathrm{d}y\right)\|f\|^2  \mathrm{d}x 
                                   \textstyle= \int_0^1 \int_0^1 D(x,y)^2 \mathrm{d}y \mathrm{d}x \nonumber\\
                                   &\textstyle= \iint_{\mathcal A_N}D(x,y)^2\mathrm{d}y\mathrm{d}x   + \iint_{\mathcal A_N^c}D(x,y)^2\mathrm{d}y\mathrm{d}x. \label{eq:p4}
\end{align}
Expression~\eqref{eq:p1} follows from the Cauchy-Schwarz inequality; we used $\|f\|=1$ and, in the last equation, we split the interval $[0,1]^2$ into the sets $\mathcal A_N$ and $\mathcal A_N^c$, as described above and illustrated in Fig.~\ref{fig:sets}.

We can now bound both terms in \eqref{eq:p4}. 
For the first term, note that for all the points $(x,y)$ in $\mathcal A_N$ the corresponding sample $(u_i,u_j)$ belongs to the same Lipschitz block and is at most $d_N$ apart (component-wise). 
Consequently, for these points $|D(x,y)| \le 2 L d_N.$ Overall, we get
$$\textstyle \iint_{\mathcal A_N}D(x,y)^2\mathrm{d}y\mathrm{d}x \le 4L^2d_N^2  \iint_{\mathcal A_N} 1 \,\,  \mathrm{d}y\, \mathrm{d}x  \le 4L^2d_N^2 .$$
For the second term in \eqref{eq:p4}, we use \eqref{eq:area} and the fact that $|D(x,y)|\le 1$ to get
\begin{align*}\textstyle \iint_{\mathcal A_N^c}D(x,y)^2\mathrm{d}y\mathrm{d}x &\textstyle \le  \iint_{\mathcal A_N^c} 1 \,\, \mathrm{d}y \, \mathrm{d}x = \text{Area} (\mathcal A_N^c).
\end{align*}
Substituting these two terms into \eqref{eq:p4} yields 
$$\|\mathbb{W}_Nf-\mathbb{W}f\|^2\le ( 4L^2d_N^2 + 4Kd_N-4K^2d_N^2).$$
Since this  bound holds for all functions $f$ with unit norm, we recover \eqref{eq:convergenceWN}.  \\[0.2cm]

\textit{Proof of \eqref{eq:convergenceSN}.}
 From the triangle inequality we get that
\begin{equation}
\label{eq:B}
\vertiii{ \kappa_N^{-1}\mathbb{S}_N - \mathbb{W}}\leq \vertiii{\kappa_N^{-1}\mathbb{S}_N-\mathbb{W}_N }+ \vertiii{\mathbb{W}_N-\mathbb{W}}.
\end{equation}

We have already bounded the second term on the right hand side of \eqref{eq:B}, so we now concentrate  on the first term.

The operator $\kappa_N^{-1}\mathbb{S}_N-\mathbb{W}_N$ can be seen as the graphon operator of an SBM  graphon with matrix $\kappa_N^{-1}\mathbf{S}^{(N)}-\mathbf{P}^{(N)}$. By Lemma~\ref{lem:sbm} we then have that its eigenvalues coincide with the eigenvalues of the corresponding  $\mathbf{E}_{\mathrm{SBM}}$ matrix which is  $\frac 1N (\kappa_N^{-1}\mathbf{S}^{(N)}-\mathbf{P}^{(N)})$ since in this case $\mathbf{Q}_{\mathrm{SBM}}=\frac1N I_N$ (given that all the intervals $\mathcal{B}^N_i$ have length $\frac1N$).\footnote{Note that Lemma~\ref{lem:sbm} is formulated for graphon operators (i.e. linear integral operators with \emph{nonnegative} kernels). An identical proof shows  that the  result holds also if the kernel assumes negative values.} Consequently,  
\begin{align*}
&\vertiii{\kappa_N^{-1}\mathbb{S}_N-\mathbb{W}_N } = \lambda_\mathrm{max}(\kappa_N^{-1}\mathbb{S}_N-\mathbb{W}_N)\\&=\frac 1N \lambda_\mathrm{max}(\kappa_N^{-1}\mathbf{S}^{(N)}-\mathbf{P}^{(N)}) = \frac{1}{N}\|\kappa_N^{-1}\mathbf{S}^{(N)}-\mathbf{P}^{(N)}\|.
\end{align*}
Hence, to bound the norm of the difference between a random SBM  graphon operator $\kappa_N^{-1}\mathbb{S}_N$ based on the graphon $\kappa_N^{-1}S = \kappa_N^{-1}\mathbf{1}(x)^T \mathbf{S}^{(N)} \mathbf{1}(y)$, with $S^{(N)}_{ij} = {Ber}(\kappa_NP^{(N)}_{ij})$, and its expectation $\mathbb{W}_N$ defined via the graphon $W_N = \mathbf{1}(x)^T \mathbf{P}^{(N)} \mathbf{1}(y)$, we can employ matrix concentration inequalities. Specifically, we use \cite[Theorem 1]{chung2011spectra} in order to bound the deviations of $\|\mathbf{S}^{(N)}-\kappa_N\mathbf{P}^{(N)}\|$.

By Lemma \ref{lemma:max_degree}, for $N$ large enough, the maximum expected degree $\kappa_NC^d_N:=\kappa_N\max_i( \sum_{j=1}^N P^{(N)}_{ij}   )$ of the random graph represented by $\mathbf{S}^{(N)}$ grows at least as $\frac49 \kappa_N\log(\frac{2N}{\delta})$.
Consequently, all the conditions of \cite[Theorem 1]{chung2011spectra} 
are met and we get that with probability $1-\delta -\delta'$ 
\begin{align*}
&\vertiii{\kappa_N^{-1}\mathbb{S}_N-\mathbb{W}_N } = \frac{\kappa_N^{-1}}{N}\|\mathbf{S}^{(N)}-\kappa_N\mathbf{P}^{(N)}\|\\
&\le   \frac{\kappa_N^{-1}}{N}\sqrt{4\kappa_NC^d_N \log(2N/\delta)} \le   \sqrt{\frac{4  \kappa_N^{-1}\log(2N/\delta)}{N}},
\end{align*}
where we used that  $C^d_N\le N$ since each element in $\mathbf{P}^{(N)}$ belongs to $[0,1]$. 

\textit{Proof of \eqref{eq:convergenceSN2}.}
 We finally show that \eqref{eq:convergenceSN} implies almost sure convergence. We start by restating \eqref{eq:convergenceSN} as
\begin{equation}\label{eq:prob}
\textstyle \textup{Pr}\left[ \vertiii{ \kappa_N^{-1}\mathbb{S}_N-\mathbb{W} } \le  \sqrt{\frac{4\kappa_N^{-1}\log(2N/\delta)}{N}} +\rho(N)\right]\ge 1-2\delta.
\end{equation}
Further, pick any $\gamma>0$ and define the infinite sequence of events
 $$\mathcal{E}_N:=\left\{ \vertiii{ \kappa_N^{-1}\mathbb{S}_N-\mathbb{W} } \ge  \gamma +\rho(N)\right\},$$
 for each $N\ge 1$. 
 From \eqref{eq:prob} it follows that
  $\textup{Pr}\left[ \mathcal{E}_N \right]\le 4N \, \textup{exp}\left({-\kappa_NN\gamma^2}/{4}\right).$
  Consequently, if $\kappa_N = \frac{1}{N^{\tau}}$ with $\tau\in[0,1)$, then:
    $$\textstyle \sum_{N=1}^\infty \textup{Pr}\left[ \mathcal{E}_N \right]\le 4 \sum_{N=1}^\infty N\textup{exp}\left(\frac{-\kappa_NN\gamma^2}{4}\right) <\infty$$
and by the Borel-Cantelli lemma there exists  a positive integer $N_\gamma$ such that for all $N\ge N_\gamma$, the complement of $\mathcal{E}_N$, i.e., 
$ \vertiii{ \kappa_N^{-1}\mathbb{S}_N-\mathbb{W} } \le  \gamma +\rho(N) $, holds almost surely. 
To see that $ \vertiii{ \kappa_N^{-1}\mathbb{S}_N-\mathbb{W} } \rightarrow 0 $ almost surely we follow the ensuing argument.
For any given deterministic sequence $\{a_N\}_{N=1}^\infty$ the fact that  for each $\gamma>0$ there is a positive integer $N_\gamma$ such that for all $N\ge N_\gamma$,
$ |a_N| \le  \gamma +\rho(N)$
implies that $a_N\rightarrow 0$.  
In fact for all $\epsilon>0$, if we set $\gamma=\epsilon/2$ and $N_\epsilon:=\max\{N_\gamma, N_\rho\}$ (where $N_\rho$ is the smallest $N$ such that $\rho(N)\le\epsilon/2 $) then  we get that for all $N>N_\epsilon$, $ \vertiii{ \kappa_N^{-1}\mathbb{S}_N-\mathbb{W} }  \le \epsilon$.
Hence,  we can conclude that 
$ \vertiii{ \kappa_N^{-1}\mathbb{S}_N-\mathbb{W} } \rightarrow 0 $ almost surely.
\hfill{$\square$}

The previous theorem provides us with convergence rates for the graphon operators.
Based on these we are able to show a similar convergence result for centrality measures of graphons with a simple dominant eigenvalue.

\begin{assumption}[Simple dominant eigenvalue]\label{bound}
Let  the eigenvalues of $\mathbb{W}$ be ordered such that $\lambda_1 \ge \lambda_2 \ge \lambda_3 \ge \ldots$ and assume that $\lambda_1 > \lambda_2$. 
\end{assumption}

We note that in most empirical studies degeneracy of the dominant eigenvalue is not observed, justifying the above assumption.
A noteworthy exception in which a non-unique dominant eigenvalue may arise is if the graph consists of multiple components.
In this case, however, one can treat each component separately. 

For the proof in case of PageRank we will make the following additional assumption on the graphon.
\begin{assumption}[Minimal degree assumption]\label{ass:degree}
There exists {$\eta>0$} such that  $W(x,y)\ge \eta$ for all  $x,y\in [0,1]$. 
\end{assumption}
Note that while this assumption is not fulfilled, e.g., for a SBM graphon with a block of zero connection probability, it can be further relaxed to accommodate such cases as well.
However, to simplify the proof and avoid additional technical details, we invoke Assumption~\ref{ass:degree} in the following theorem.

\begin{boxtheorem}(Convergence of centrality measures)\label{Thm2}
The following statements hold:

\begin{mylist}
    \item{1)} For any $N\!>\!1$, the centrality functions $c_N(x)$ and $\hat c_N(x)$ corresponding to the operators $\mathbb{W}_N$ and $\kappa_N^{-1}\mathbb{S}_N$, respectively, are in one to one relation with the centrality measures $\bbc_{\bar P^{(N)}}\in\R^N$, $\bbc_{\bar S^{(N)}}\in\R^N$ of the graphs with rescaled adjacency matrices $\bar{\mathbf{P}}^{(N)}:=\frac{1}{N}\mathbf{P}^{(N)}$ and $\bar{\mathbf{S}}^{(N)}:=\frac{1}{N \kappa_N}\mathbf{S}^{(N)}$, via the formula\footnote{Note that $\bar{\mathbf{P}}^{(N)}, \bar{\mathbf{S}}^{(N)}$ belong to $\R_{\ge0}^{N\times N}$  as opposed to $\{0,1\}^{N\times N}$. Nonetheless, the definitions of centrality measures given in Section \ref{Ss:prelim_centrality} can be extended to the continuous interval case in a straightforward manner.}
\begin{align*}
   c_N(x)= {\mathbf{1}_N(x)}^{T} \bbc_{\bar P^{(N)}},  \quad
    \hat c_N(x)= {\mathbf{1}_N(x)}^{T} \bbc_{\bar S^{(N)}}.
\end{align*}
\item{2)} Under  Assumptions \ref{lipschitz}, \ref{bound} and (for PageRank) \ref{ass:degree},  and $N$ sufficiently large, with probability at least $1-\delta'$
$$  \|c_N- c\|\leq C \rho(N) $$
for some constant $C$ and $\rho(N)$, $\delta'$ defined as in Theorem~\ref{Thm1}.
\item{3)} Under  Assumptions \ref{lipschitz}, \ref{bound} and (for PageRank) \ref{ass:degree},  and $N$ sufficiently large, with probability at least $1-2\delta$
\begin{equation*}
\textstyle \|\hat c_N - c\|\leq  C' \bigg(\sqrt{\frac{4\kappa_N^{-1}\log(2N/\delta)}{N}}+\rho(N)\bigg), 
\end{equation*}
for some constant $C'$.
\item{4)} Under  Assumptions \ref{lipschitz}, \ref{bound} and (for PageRank) \ref{ass:degree}, if $\kappa_N = \frac{1}{N^{\tau}}$ with $\tau\in[0,1)$, then:
\begin{equation*}
    \lim_{N\rightarrow \infty} \|\hat c_N - c\|=0, \mbox{ almost surely}.
\end{equation*}
\end{mylist}
\end{boxtheorem}
\begin{proof}
1) Follows immediately from Lemma \ref{lem:equivalence} since  $\mathbb{W}_N$ and $\kappa_N^{-1}\mathbb{S}_N$ are the operators of the graphons corresponding to $\bbP^{(N)}$ and  $\kappa_N^{-1}\bbS^{(N)}$, respectively. 

2) We showed in Theorem~\ref{Thm1} that, under Assumption~\ref{lipschitz}, $\vertiii{\mathbb{W}_N-\mathbb{W}} \leq \rho(N)$ with probability $1-\delta'$. This fact can be exploited to prove convergence of the centrality measures $c_N$ to $c$.  All the subsequent statements hold with probability $1-\delta'$.

\textit{For degree centrality:} $c_N(x)=(\mathbb{W}_N1_{[0,1]})(x)$ and $c(x)=(\mathbb{W}1_{[0,1]})(x)$. Since $\|1_{[0,1]}\|=1$ we get
\begin{equation}
\label{eq:thm2.2a}
\|c_N-c \| = \|(\mathbb{W}_N-\mathbb{W})1_{[0,1]}\| \le \vertiii{\mathbb{W}_N-\mathbb{W}} \leq \rho(N).
\end{equation}

\textit{For eigenvector centrality:}
Let $\{ \lambda_k, \varphi_k\}_{k\ge 1}$, $\{\lambda_k^{(N)},  \varphi_k^{(N)}\}_{k\ge 1}$ be the ordered eigenvalues and eigenfunctions of $\mathbb{W}$ and $\mathbb{W}_N$, respectively. Note that $|{\lambda}_1^{(N)}-\lambda_1|\le \rho(N)$ since
${{\lambda}^{(N)}_1=\vertiii{\mathbb{W}_N}\leq \vertiii{\mathbb{W}}+\vertiii{\mathbb{W}_N-\mathbb{W}}\leq \lambda_1+\rho(N)}$
and 
$\lambda_1=\vertiii{\mathbb{W}}\leq \vertiii{\mathbb{W_N}}+\vertiii{\mathbb{W}_N-\mathbb{W}}\leq {\lambda}^{(N)}_1+\rho(N). $
Furthermore, since by Assumption \ref{bound} we have that $\lambda_1>\lambda_2$,  there exists a large enough $\bar N$ such that for all $N>\bar N$ it holds that $ \lambda_1^{(N)}>\lambda_2$ and $|\lambda_1-\lambda_2|>|{\lambda}^{(N)}_1-\lambda_1|$.
Therefore, by Lemma~8 in Appendix B, we obtain 
\begin{equation}
\label{DKbound}
\|{\varphi}_1^{(N)}-\varphi_1\|\leq \frac{\sqrt{2} \, \vertiii{\mathbb{W}_N-\mathbb{W}} }{|\lambda_1-\lambda_2|-|{\lambda}^{(N)}_1-\lambda_1|}.
\end{equation}
From the facts that $\lambda_1\neq \lambda_2$ (by Assumption \ref{bound}), $|{\lambda}^{(N)}_1-\lambda_1|\leq \rho(N)$, and 
$\vertiii{\mathbb{W}_N-\mathbb{W}} \leq \rho(N)$, it follows that \eqref{DKbound} implies that for $N>\bar{N}$ 
\begin{equation}
\label{tehm2.2b}
\|{\varphi}^{(N)}_1-\varphi_1\|\leq \frac{\sqrt{2}\rho(N)}{|\lambda_1-\lambda_2|-\rho(N)}=O\big(\rho(N)\big) .
\end{equation}

\textit{For Katz centrality:}  Take any value of $\alpha<{1}/{\vertiii{\mathbb{W}}}$, so that $\mathbb{M}_\alpha=\mathbb{I}-\alpha \mathbb{W}$ is invertible and 
$c(x)=\big(\mathbb{M}_\alpha^{-1}1_{[0,1]}\big)(x)$
is well defined.
Since $\vertiii{\mathbb{W}_N-\mathbb{W}} \rightarrow 0$ as $N\rightarrow \infty$, there exists $N_\alpha>0$ such that  $\alpha<{1}/{\vertiii{\mathbb{W}_N}}$ for all $N>N_\alpha$. This implies that for any $N>N_\alpha$,  $[\mathbb{M}_N]_\alpha:=\mathbb{I}-\alpha \mathbb{W}_N$ is invertible and
$c_N(x)=\big([{\mathbb{M}_N}]_\alpha^{-1}1_{[0,1]}\big)(x)$
is well defined. Note that $\vertiii{\mathbb{W}_N-\mathbb{W}}  \leq \rho(N)$ implies
$
 \vertiii{[\mathbb{M}_N]_\alpha-\mathbb{M}_\alpha} =O( \rho(N)).
$
We now prove that 
\begin{equation} \label{eq:m_inv} \quad \vertiii{[\mathbb{M}_N]_\alpha^{-1}-\mathbb{M}_\alpha^{-1}}  = O\big(\rho(N)\big). \end{equation}
To this end, note that $\leb$ is a Hilbert  space, the inverse operator $\mathbb{M}_\alpha^{-1}$ is bounded and  for $N$ large enough it holds $ \vertiii{[\mathbb{M}_N]_\alpha-\mathbb{M}_\alpha}  <1/\vertiii{\mathbb{M}_\alpha^{-1}}$, since $ \vertiii{[\mathbb{M}_N]_\alpha-\mathbb{M}_\alpha} \rightarrow 0$. 
It then follows by \cite[Theorem 2.3.5 ]{atkinson2009theoretical} with  $L:=\mathbb{M}_\alpha, M:=[\mathbb{M}_N]_\alpha$ that
\begin{align*}
\vertiii{[\mathbb{M}_N]_\alpha^{-1}-\mathbb{M}_\alpha^{-1}} \le\! \frac{\vertiii{ \mathbb{M}_\alpha^{-1}}^2 \vertiii{[\mathbb{M}_N]_\alpha-\mathbb{M}_\alpha}}{1\!-\!\vertiii{ \mathbb{M}_\alpha^{-1}} \vertiii{[\mathbb{M}_N]_\alpha-\mathbb{M}_\alpha}}\!=\! O(\rho(N)), 
\end{align*}
thus proving \eqref{eq:m_inv}. 
Finally, since $\|1_{[0,1]}\|=1,$ 
\begin{align*}
    \|c_N-c\|&=\|[{\mathbb{M}_N}]_\alpha^{-1}1_{[0,1]}-\mathbb{M}_\alpha^{-1}1_{[0,1]} \| \\&\le  \vertiii{[\mathbb{M}_N]_\alpha^{-1}-\mathbb{M}_\alpha^{-1}}  
    =O\big(\rho(N)\big).
\end{align*}

\textit{For PageRank centrality:} 
Consider $\beta \in (0,1)$ such that $\mathbb L_\beta$ is invertible and $c^{\mathrm{pr}}(x)$ is well defined. 
Similar to the argument used to show \eqref{eq:m_inv} in the proof for Katz centrality, it suffices to show that $\vertiii{[\mathbb L_N]_\beta-\mathbb L_\beta}=O(\rho(N))$. To show this  note that under Assumption \ref{ass:degree} it holds 
\begin{equation}\label{key_step}
\begin{aligned}
c^d(x)&=\int_0^1 W(x,y)dy \ge \eta, \\
c^d_N(x)&=\int_0^1 W_N(x,y)dy \ge \eta.
\end{aligned}
\end{equation}
Hence $(c^\textup{d}(x))^\dagger=(c^\textup{d}(x))^{-1}\le \frac{1}{\eta}$ and $(c_N^\textup{d}(x))^\dagger=(c_N^\textup{d}(x))^{-1}\le \frac{1}{\eta}$.
For any $f\in\leb$ such that $\|f\|=1$,
\begin{align}\label{eq:page_rank1}
&\| [\mathbb L_N]_\beta f-\mathbb L_\beta f\| = \, \beta \, \| \mathbb W_N\left( f\cdot(c_N^\textup{d})^{-1}\right) - \mathbb W\left( f\cdot(c^\textup{d})^{-1}\right) \| \nonumber \\
&\le \!  \beta \| \!(\mathbb W_N-\mathbb W)\!\!\left( f\cdot\!(c^\textup{d})^{-1}\right)\!\|\!+\! \beta\|\mathbb W_N\!\!\left( f\cdot\!(c_N^\textup{d})^{-1}\!\!\!- f\cdot\!(c^\textup{d})^{-1}\!\right)\! \|\!  \nonumber\\
&\le \, \beta \, \vertiii{ \mathbb W_N-\mathbb W}\| (c^\textup{d})^{-1}\|\!+\beta \vertiii{\mathbb W_N}\| (c_N^\textup{d})^{-1}\!\!- \!(c^\textup{d})^{-1} \|  \nonumber\\
&\le  \, \beta \rho(N) \| (c^\textup{d})^{-1}\| +2\beta \vertiii{\mathbb W}\| (c_N^\textup{d})^{-1}- (c^\textup{d})^{-1} \|, 
\end{align}
where we used that for $N$ large enough $\vertiii{\mathbb W_N}\le 2\vertiii{\mathbb W}$.
In \eqref{eq:page_rank1}, the notation $(c^\textup{d})^{-1}$ is used to denote the function that takes values $(c^\textup{d}(x))^{-1}$ and similarly for $(c_N^\textup{d})^{-1}$.
Observe that equation \eqref{key_step} implies
\begin{equation}\label{eq:page_rank2}
\| (c^\textup{d})^{-1}\| \le \frac{1}{\eta}
\end{equation}
and
 \begin{align}\label{eq:page_rank3}
&\| (c_N^\textup{d})^{-1}- (c^\textup{d})^{-1} \|^2=\int_0^1 \left((c_N^\textup{d}(y))^{-1}- (c^\textup{d}(y))^{-1}\right)^2dy\nonumber
\\&=\int_0^1 \left( \frac{c_N^\textup{d}(y)- c^\textup{d}(y)}{c_N^\textup{d}(y)c^\textup{d}(y)}\right)^2dy \le \frac{1}{\eta^4} \| c_N^\textup{d}- c^\textup{d} \|^2.
\end{align}
Combining \eqref{eq:thm2.2a}, \eqref{eq:page_rank1}, \eqref{eq:page_rank2} and \eqref{eq:page_rank3} yields the desired result
\begin{equation}\begin{aligned}
&\| \mathbb [L_N]_\beta f-\mathbb L_\beta f\| \le \frac{\beta}{\eta} \left[1   +\frac{2}{\eta} \vertiii{\mathbb W} \right]\rho(N) =O(\rho(N)).
\end{aligned}
\end{equation}

 3) It suffices to mimic the argument made above for $\|c_N-c\|$ adapting it for the case $\| \hat{c}_N - c\|$ by making use of \eqref{eq:convergenceSN}. The proof is omitted to avoid redundancy.

 4) By Theorem \ref{Thm1} we have that  $\vertiii{\kappa_N^{-1}\mathbb{S}_N-\mathbb{W}}\rightarrow 0$ almost surely. 
 This means that the set of realizations $\{\tilde{\mathbb{S}}_N\}_{N=1}^\infty$ of $\{\mathbb{S}_N\}_{N=1}^\infty$ for which $\vert\kern-0.25ex\vert\kern-0.25ex\vert{\kappa_N^{-1}\tilde{\mathbb{S}}_N-\mathbb{W}}\vert\kern-0.25ex\vert\kern-0.25ex\vert\rightarrow 0$ has probability one. 
 For each of these realizations it can be proven (exactly as in part 2)\footnote{In part 2(b) the rate of convergence of  $\mathbb{W}_N$ to $\mathbb{W}$ was used. 
 Nonetheless, the same statement holds under the less stringent condition $\vertiii{\mathbb{W}_N-\mathbb{W}}\rightarrow 0$, since this is sufficient to prove that $\lambda_1^{(N)} \rightarrow \lambda_1$.} that 
 $$  \lim_{N\rightarrow \infty} \|\tilde c_N - c\|=0,  $$
 where $\tilde c_N(x)$ is the deterministic sequence of centrality measures associated with the realization $\{\tilde{\mathbb{S}}_N\}_{N=1}^\infty$. 
 Consequently, $ \textup{Pr}[ \lim_{N\rightarrow \infty} \|\hat c_N - c\|=0] =1$ and   ${\lim_{N\rightarrow \infty} \|\hat c_N - c\|=0}$ almost surely.
\end{proof}
To sum up, Theorem \ref{Thm2} shows that, on the one hand, the centrality functions of the finite-rank operators  $\mathbb{W}_N$ and  $\kappa_N^{-1}\mathbb{S}_N$ can be computed by simple interpolation of the centrality vectors of the corresponding finite-size graphs with adjacency matrices $\mathbf{P}^{(N)}$ and  $\kappa_N^{-1}\mathbf{S}^{(N)}$ (suitably rescaled). On the other hand, such centrality functions $c_N(x)$ and $\hat c_N(x)$ become better approximations of the centrality function $c(x)$ of the graphon $W$ as $N$ increases.
As alluded  above, the importance of this result derives from the fact that it establishes that the centrality functions here introduced are the appropriate limits of the finite centrality measures, thus validating the presented framework. We finally note that as immediate corollary of the previous theorem we get the following robustness result for the centrality measures of different realizations.
\begin{corollary}
Consider two  graphs $\bbS^{N_1}$ and $\bbS^{N_2}$ sampled from a graphon $W$ satisfying Assumptions~\ref{lipschitz} and~\ref{bound}. 
Assume without loss of generality that $N_1\le N_2$ and let $\bbc_{S^{(N_i)}}\in\R^N$ be the centrality of the graphs with rescaled adjacency matrices $\bar{\mathbf{S}}^{(N_i)}:=\frac{1}{N_i\kappa_{N_i}}\mathbf{S}^{(N_i)}$, for $i\in\{1,2\}$.
Then for $N$ sufficiently large, with probability at least $1-4\delta$
\begin{align*}
    \left\|{\mathbf{1}_{N_1}(x)}^{T} \bbc_{\bar{\mathbf{S}}^{(N_1)}}\right. &-\left. {\mathbf{1}_{N_2}(x)}^{T} \bbc_{\bar{\mathbf{S}}^{(N_2)}}\right\| \\
                                                                                                                             &\leq  2C' \left(\sqrt{\frac{4\kappa_{N_1}^{-1}\log(2N_1/\delta)}{N_1}}+\rho(N_1)\right)
\end{align*}
for some constant $C'$ and $\rho(N)$, $\delta$ defined as in Theorem \ref{Thm1}.
\end{corollary}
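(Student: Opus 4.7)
The plan is to route through the graphon centrality function $c(x)$ via the triangle inequality and then apply part 3) of Theorem~\ref{Thm2} to each piece. By part 1) of Theorem~\ref{Thm2}, the step functions
$\hat c_{N_i}(x) := \mathbf{1}_{N_i}(x)^T \bbc_{\bar{\mathbf{S}}^{(N_i)}}$ are exactly the centrality functions associated with the operators $\kappa_{N_i}^{-1}\mathbb{S}_{N_i}$ for $i\in\{1,2\}$, so both live in $\leb$ and the $L^2$ norm in the statement is well-defined regardless of the fact that they have different block sizes.

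The core estimate is the triangle inequality
\begin{equation*}
\|\hat c_{N_1}-\hat c_{N_2}\| \le \|\hat c_{N_1}-c\| + \|\hat c_{N_2}-c\|.
\end{equation*}
Part 3) of Theorem~\ref{Thm2} gives, for each $i$ separately and for $N_i$ sufficiently large, that with probability at least $1-2\delta$
\begin{equation*}
\|\hat c_{N_i}-c\| \le C'\left(\sqrt{\tfrac{4\kappa_{N_i}^{-1}\log(2N_i/\delta)}{N_i}}+\rho(N_i)\right).
\end{equation*}
A union bound over $i\in\{1,2\}$ ensures both estimates hold simultaneously with probability at least $1-4\delta$.

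The final step is to observe that the right-hand side of the Theorem~\ref{Thm2}(3) bound is monotonically decreasing in $N$ (both $\rho(N)$ and the $\sqrt{\kappa_N^{-1}\log(2N/\delta)/N}$ term shrink as $N$ grows, under the sparsity regime considered). Hence, since $N_1\le N_2$, the $N_1$ term dominates, and summing the two copies gives the factor $2C'$ and the claimed expression in terms of $N_1$ alone.

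The only subtlety I foresee is verifying that the bound in Theorem~\ref{Thm2}(3) is indeed monotone in $N$ under the assumed sparsity scaling $\kappa_N = 1/N^\tau$ with $\tau \in [0,1)$: here $\kappa_N^{-1}/N = N^{\tau-1}\to 0$ and $\rho(N)\to 0$, so for $N$ large enough the bound is decreasing, which is exactly the ``$N$ sufficiently large'' hypothesis. Everything else is a direct invocation of previously proven results, so no new technical obstacle arises.
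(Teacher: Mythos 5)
Your proof is correct and matches the paper's intent: the paper states this result without proof as an ``immediate corollary'' of Theorem~\ref{Thm2}, and the evident argument is exactly yours --- triangle inequality through the graphon centrality $c$, part 3) of Theorem~\ref{Thm2} applied to each term, a union bound giving $1-4\delta$, and monotonicity of the error bound for large $N$ so that the $N_1$ term dominates. Your attention to the monotonicity subtlety is the only non-trivial step and you handle it correctly.
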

\begin{figure}[tb!]
	\begin{center}
        \includegraphics{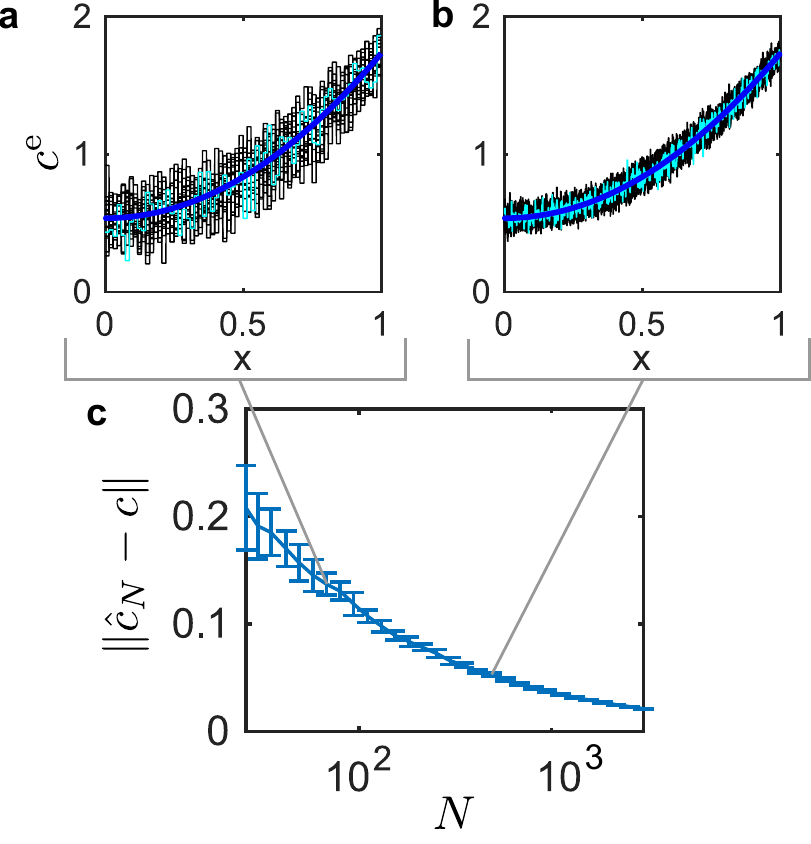}
	\end{center}
	\vspace{-0.3cm}
    \caption{Convergence of the eigenvector centrality function for the FR graphon in section~\ref{Sss:fr_example}. \textbf{(a-b)} Eigenvector centrality functions computed from a sampled graphon with deterministic latent variables (black; one realization shown in cyan for visualization purposes) and the eigenvector centrality function of the continuous graphon (blue). The examples shown correspond to a resolution of (a) $N=68$ grid points, and (b) $N=489$ grid points.
        In each case 20 realizations were drawn from the discretized graphon $\bbP^{(N)}$.
    \textbf{(c)} Convergence of the error $\|\hat{c}_N - c \|$ as a function of the number of grid points $N$, corresponding to the number of nodes in the sampled graph.
            For each point we plot the sample mean $\pm$ one standard deviation. }
	\label{fig:convergence1}
	\vspace{-0.3cm}
\end{figure}

\begin{figure}[tb!]
	\begin{center}
        \includegraphics{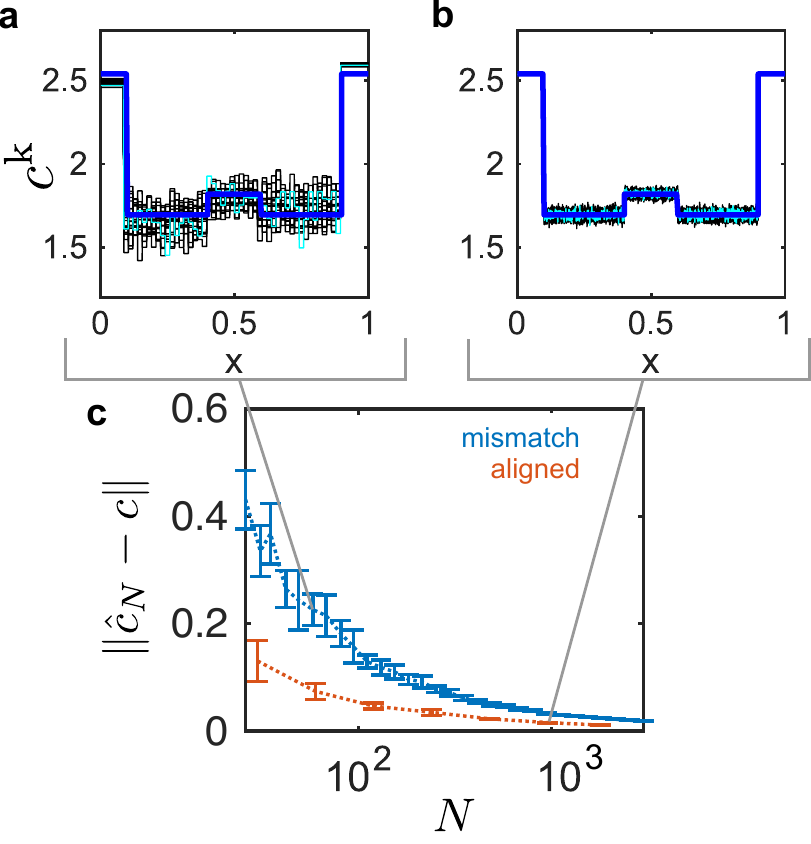}
	\end{center}
	\vspace{-0.3cm}
	\caption{Convergence of the Katz centrality function for the SBM graphon in section~\ref{Sss:sbm_example}. \textbf{(a-b)} Katz centrality functions computed from a sampled graphon with deterministic latent variables (black; one realization shown in cyan for visualization purposes) and the Katz centrality function of the continuous graphon (blue). The examples shown correspond to a resolution of (a) $N=58$ grid points, and (b) $N=960$ grid points.
        In each case 20 realizations were drawn from the discretized graphon $\bbP^{(N)}$.
    \textbf{(c)}~Convergence of the error $\|\hat{c}_N - c \|$ as a function of the number of grid points $N$, corresponding to the number of nodes in the sampled graph. 
        For each point we plot the sample mean $\pm$ one standard deviation.
    For visualization purposes we connect the data-points (red) in which the grid approximation was aligned with the piece-wise constant changes of the SBM (i.e. each grid point is within one Lipschitz block of the SBM graphon -- c.f. Fig.~\ref{fig:sets}). For this graphon, this happens for all $N$ that are divisible by 10.
Likewise, we connected those data points where there was a mismatch between the sampling grid and the SBM graphon structure (blue).}
	\label{fig:convergence2}
	\vspace{-0.3cm}
\end{figure}

To check our analytical results, we performed numerical experiments as illustrated in Figs.~\ref{fig:convergence1} and \ref{fig:convergence2}.
In Fig.~\ref{fig:convergence1}, we consider again the finite-rank graphon from our example in Section~\ref{Sss:fr_example} and assess the convergence of the eigenvector centrality function $\hat{c}_N^\text{e}$ from the sampled networks (with deterministic latent variables and $\kappa_N=1$), to the true underlying graphon centrality measure $c^\text{e}$.
As this graphon is smooth we have $K=0$, i.e., there is only a single Lipschitz block, and we observe a smooth decay of the error when increasing the number of grid points $N$, corresponding to the number of nodes in the sampled graph.

For the stochastic block model graphon $W_\text{SBM}$ from our example in Section~\ref{Sss:sbm_example}, however, we have $K=4$ and thus 25 Lipschitz blocks, which are delimited by discontinuous jumps in the value of the graphon.
The effect of these jumps is clearly noticeable when assessing the convergence of the centrality measures, as illustrated in Fig.~\ref{fig:convergence2} for the example of Katz centrality.
If the deterministic sampling grid of the discretized graphon $\mathbb W_N$ is aligned with the structure of the stochastic block model $\mathbb W_\text{SBM}$, there is no mismatch introduced by the sampling procedure and thus the approximation error of the centrality measure $c_N$ is smaller (also in the sampled version $\hat{c}_N$).
Stated differently, if the $\frac{1}{N}$-grid is exactly aligned with the Lipschitz blocks of the underlying graphon, we are effectively in a situation in which the area $\mathcal A_N^c$ is zero, which is analogous to the case of $K=0$ (see Fig.~\ref{fig:sets}).
In contrast, if there is a misalignment between the Lipschitz blocks and the $\frac{1}{N}$-grid, then additional errors are introduced leading to an overall slower convergence, which is consistent with our results above.

\section{Discussion and Future Work}\label{S:discussion}

In many applications of centrality-based network analysis, the system of interest is subject to uncertainty.
In this context, a desirable trait for a centrality measure is that the relative importance of agents should be impervious to random fluctuations contained in a particular realization of a network.
In this paper, we formalized this intuition by extending the notion of centrality to graphons.
More precisely, we proposed a departure from the traditional concept of centrality measures applied to deterministic graphs in favor of a graphon-based, probabilistic interpretation of centralities.
Thus, we
1) introduced suitable definitions of centrality measures for graphons,
2) showed how such measures can be computed for specific classes of graphons,
3) proved that the standard  centrality measures defined for graphs of finite size converge to our newly defined graphon centralities, and
4) bound the distance between graphon centrality function and centrality measures over sampled graphs.

The results presented here constitute a first step towards a systematic analysis of centralities in graphons and several questions remain unanswered.
In particular, we see two main challenges that need to be addressed to widen the scope of  applications of our methods. 
First, in most practical scenarios the graphon will need to be estimated from (finite) data. 
The validity (error terms) of the centrality scores will accordingly be contingent on the errors made in this estimation~\cite{airoldietal2013,gaoetal2015}. It would therefore be very interesting to explore how to best levarage existing graphon estimation results in order to estimate our proposed graphon centrality functions. 
Second, the parameter $\kappa_N$ allows for the analysis of sparse networks as introduced in \cite{Bickel2009,bickel2011method,borgs2015private, gao2016optimal,kloppetal2017} but not for  networks with asymptotically bounded degrees~\cite{veitch2015class}. 
An extension of our results to the analysis of networks with finite degrees is thus of future interest.

Additionally, there are a number of other generalizations that are worth investigating.

First, the generalization of centralities for graphons beyond the four cases studied in this paper. 
In particular, the extension to centralities that do not rely directly on spectral graph properties, such as closeness and betweenness centralities, appears to be a challenging task. 
Indeed, a suitable notion of path for infinite-size networks needs to be defined first and bounds on the possible fluctuations of such a notion of path would need to be derived.
Enriching the class of graphon centrality measures would also contribute to the characterization of the relative robustness of certain classes of centralities, which would allow us to better assess their utility for the analysis of uncertain networks in practice.

Second, the identification of classes of graphons (others than the ones here discussed) for which explicit and efficient formulas of the centrality functions can be derived. 

Third, the determination of whether the convergence rates provided in Theorem~\ref{Thm2} are also optimal.
A related question in this context is to derive convergence results for the exact ordering of the nodes.
This is in particular relevant for applications where we would like to know by how much the ranking of an individual node might have changed as a result of the uncertainty of the network.
One possible avenue to tackle this kind of question would be to start investigating $\ell_\infty$-norm bounds~\cite{Fan2016} for centralities, which enable us to control the maximal fluctations of each individual entry of the centrality measures.

Finally, the extension of the centrality definitions from regular graphons to more complex objects, such as asymmetric networks or time-varying graphons~\cite{crane2016,pensky2016}. 

\appendices
\section*{Appendix A : Omitted proofs}

\subsection*{Proof of Proposition~\ref{Prop:centrality_sb}}
 
\begin{proof}

This proof is a consequence of Proposition \ref{Prop:centrality_fr}. We notice that we can specialize the formula therein to the case of block stochastic models to obtain the relations:  $\bbh=\bbE_{\textup{SBM}}\mathbf{1}, \bbQ=\bbQ_{\textup{SBM}}, \bbE=\bbE_{\textup{SBM}}, \bbE_{\textup{norm}}=\bbE_{\textup{SBM}} \bbD_{\bbE}^{-1}, \bbh_\textup{nor}=\bbE_{\textup{SBM}} \bbD_{\bbE}^{-1} \mathbf{1} $. 
To see that these equivalencies are true one can check that, for instance:
\begin{align*}
[ \bbh_\textup{nor}]_i&=\textstyle \sum_{j=1}^m \frac{P_{ij} [\bbQ_{\textup{SBM}}]_{jj}}{\sum_{k=1}^m  [\bbQ_{\textup{SBM}}]_{kk} P_{kj}}
= \sum_{j=1}^m \frac{P_{ij} [\bbQ_{\textup{SBM}}]_{jj}}{\sum_{k=1}^m  [\bbQ_{\textup{SBM}}]_{kk} P_{jk}}\\&= \textstyle \sum_{j=1}^m \frac{[\bbE_{\textup{SBM}} ]_{ij}}{[\bbD_{\bbE}]_{jj}}= \textstyle \sum_{j=1}^m [\bbE_{\textup{SBM}} \bbD_{\bbE}^{-1} ]_{ij}.
\end{align*}
From the above equivalences, the formulas for $\degfun(x), \eigfun(x) $ follow immediately. For $\katfun_\alpha(x)$ we obtain
\begin{align*}
\katfun_\alpha(x)&=\textstyle 1+\alpha {\mathbf{1}(x)}^{T}  \left(\sum_{k=0}^\infty \alpha^k  \bbE_{\textup{SBM}}^{k} \right)\bbE_{\textup{SBM}} \mathbf{1}\\
&=\textstyle 1+ {\mathbf{1}(x)}^{T}  \left(\sum_{k=1}^\infty \alpha^k  \bbE_{\textup{SBM}}^{k} \right) \mathbf{1}\\
&=\textstyle  {\mathbf{1}(x)}^{T}  \left(\sum_{k=0}^\infty \alpha^k  \bbE_{\textup{SBM}}^{k} \right) \mathbf{1}={\mathbf{1}(x)}^{T}  \left(1- \alpha  \bbE_{\textup{SBM}} \right)^{-1} \mathbf{1}.
\end{align*}
Finally, $\prfun_\beta(x)$ can be proven similarly.
\end{proof}
 
 \subsection*{Proof of Lemma \ref{lem:finitesum}}
 \begin{proof} 

Assume that $\bbv$ is an eigenvector of $\bbE$ such that $\bbE \bbv = \lambda \bbv$ with $\lambda\neq 0$. We now show that this implies that $\vf(x) = \sum_{j=1}^m v_j g_j(x)$ is an eigenfunction of $\mathbb{W}$ with eigenvalue $\lambda$. From an explicit computation of $(\mathbb{W}\vf)(x)$ we have that
\begin{align*}
(\mathbb{W}\vf)(x) &\textstyle= \sum_{i=1}^m g_i(x) \int_0^1 h_i(y) \sum_{j=1}^m v_j g_j(y) \mathrm{d}y \\&\textstyle= \sum_{i=1}^m g_i(x) \sum_{j=1}^m  v_j \int_0^1 h_i(y)   g_j(y) \mathrm{d}y.
\end{align*}
Recalling the definition of $\bbE$ from \eqref{E:def_matrix_Q_E_FR}, it follows that
\begin{equation*}
(\mathbb{W}\vf)(x)\textstyle\! = \! \sum_{i=1}^m g_i(x) \sum_{j=1}^m  v_j E_{ij}\! = \! \sum_{i=1}^m g_i(x) \lambda v_i \!=\!  \lambda \vf(x),
\end{equation*}
where we used the fact that $\bbv$ is an eigenvector of $\bbE$ for the second equality. 

In order to show the converse statement, let us assume that $\vf$ is an eigenfunction of $\mathbb{W}$ with associated eigenvalue $\lambda \neq 0$. Then, we may write that
    \begin{equation*}
        (\mathbb{W}\vf)(x) \textstyle= \sum_{i=1}^m g_i(x) \int_0^1 h_i(y) \vf(y) \mathrm{d}y = \lambda \vf(x),
    \end{equation*}
    from where it follows that
    \begin{equation}\label{E:proof_finitesum_020}
        \vf(x)\textstyle\! =\! \lambda^{-1} \!\sum_{i=1}^m g_i(x)\!  \int_0^1 h_i(y) \vf(y) \mathrm{d}y \! =\! \sum_{i=1}^m g_i(x)  v_i,
    \end{equation}
    where we have implicitly defined $v_i := \lambda^{-1} \int_0^1 h_i(y) \vf(y) \mathrm{d}y$. Substituting \eqref{E:proof_finitesum_020} into this definition yields, for all $i=1, \ldots, m$,
    \begin{equation*}
        \lambda v_i\textstyle = \int_0^1 h_i(y) \sum_{j=1}^m g_j(y)  v_j \mathrm{d}y = \sum_{j=1}^m v_j  E_{ij} = [\bbE \bbv]_i.
    \end{equation*}
    By writing the above equality in vector form we get that $\bbE \bbv = \lambda \bbv$, thus completing the proof.
\end{proof}

\subsection*{Proof of Proposition~\ref{Prop:centrality_fr}}
\begin{proof}
	The proof for degree centrality follows readily from \eqref{E:def_degree_centrality_graphons}, i.e. $\degfun(x) \! = \! \int_0^1 \bbg(x)^T \bbh(y) \mathrm{d}y = \bbg(x)^T \int_0^1 \bbh(y) \mathrm{d}y = \bbg(x)^T \bbh$. Based on Lemma~\ref{lem:finitesum}, for $\eigfun$ it is sufficient to prove that $\|\eigfun\|=1$. To this end, note that
	\begin{align*}
	\|\eigfun\|^2&=\textstyle \frac{1}{\vv_1{^{T}} \bbQ \vv_1} \int_0^1 (\bbg(x)^{T} \vv_1)^2 \mathrm{d}x \\&\textstyle= \frac{1}{\vv_1{^{T}} \bbQ \vv_1} \int_0^1(\vv_1^T \bbg(x)) (\bbg(x)^{T} \vv_1) \mathrm{d}x\\
	 &\textstyle=\frac{1}{\vv_1{^{T}} \bbQ \vv_1} \vv_1^T \int_0^1 \bbg(x)\bbg(x)^{T}  \mathrm{d}x  \vv_1 =\frac{\vv_1{^{T}} \bbQ \vv_1}{\vv_1{^{T}} \bbQ \vv_1} =1.
	\end{align*}

For Katz centrality, we first prove by induction that 
\begin{equation}\label{E:proof_prop_fr_010}
\textstyle(\mathbb{W}^k_{\mathrm{FR}}f)(x) = \int_0^1 \bbg(x)^T \bbE^{k-1} \bbh(y) f(y) \mathrm{d}y,
\end{equation}
for all finite-rank operators $\mathbb{W}_{\mathrm{FR}}$. The equality holds trivially for $k=1$. Now suppose that it holds for $k-1$, we can then compute
\begin{align*}
&\textstyle(\mathbb{W}^k_{\mathrm{FR}}f)(x)  = (\mathbb{W}_{\mathrm{FR}} \mathbb{W}^{k-1}_{\mathrm{FR}} f)(x)\\&\textstyle = 
\int_0^1 \bbg(x)^T \bbh(z) \int_0^1 \bbg(z)^T \bbE^{k-2} \bbh(y) f(y) \,\mathrm{d}y \,\mathrm{d}z\\
&\textstyle = \int_0^1 \bbg(x)^T \left( \int_0^1 \bbh(z) \bbg(z)^T \,\mathrm{d}z \right) \bbE^{k-2} \bbh(y) f(y) \,\mathrm{d}y 
\\&\textstyle = \int_0^1 \bbg(x)^T \bbE^{k-1} \bbh(y) f(y) \,\mathrm{d}y,
\end{align*}
where we used Definition~\ref{D:def_matrices_Q_E} for the last equality. We leverage \eqref{E:proof_prop_fr_010} to compute $\katfun_\alpha$ using the expression in Remark~\ref{R:katz_centrality},
\begin{align*}
&\textstyle\katfun_\alpha(x)  = 1 + \sum_{k=1}^\infty\alpha^k(\mathbb{W}^k 1_{[0,1]})(x) \\&\textstyle= 1 + \sum_{k=1}^\infty\alpha^k \int_0^1 \bbg(x)^T \bbE^{k-1} \bbh(y) \mathrm{d}y \\&\textstyle= 1 + \sum_{k=1}^\infty\alpha^k \bbg(x)^T \bbE^{k-1} \bbh  = 1 + \alpha \sum_{k=0}^\infty\alpha^k \bbg(x)^T \bbE^{k} \bbh \\&\textstyle= \!1\! +\! \alpha \bbg(x)^T \!\!\left( \sum_{k=0}^\infty \alpha^k  \bbE^{k}\! \right) \!\bbh \!= \!1 \!+\! \alpha \bbg(x)^T\!\! \left( \bbI - \alpha \bbE \right)^{-1}\!\bbh,
\end{align*}
as we wanted to show.

Finally, the methodology to prove the result for PageRank is similar to the one used for Katz, thus we sketch the proof to avoid redundancy. 
First, we recall the definition of $\mathbb{G}$ from the proof of Proposition~\ref{Prop:centrality_sb} and use induction to show that for every finite-rank graphon $(\mathbb{G}^k \mathbf{1}_{[0,1]})(x) = \bbg(x)^T \bbE_\mathrm{nor}^{k-1} \bbh_\mathrm{nor}$ for all integer $k \geq 1$. 
We then compute the inverse in the definition of PageRank~\eqref{E:def_pagerank_centrality_graphons} via an infinite sum as done above for Katz but using the derived expression for $(\mathbb{G}^k \mathbf{1}_{[0,1]})(x)$.
\end{proof}

\bibliographystyle{IEEEtran}
\bibliography{library_graphons.bib}

\begin{IEEEbiography}[{\includegraphics[width=1in,height=1.25in,clip,keepaspectratio]{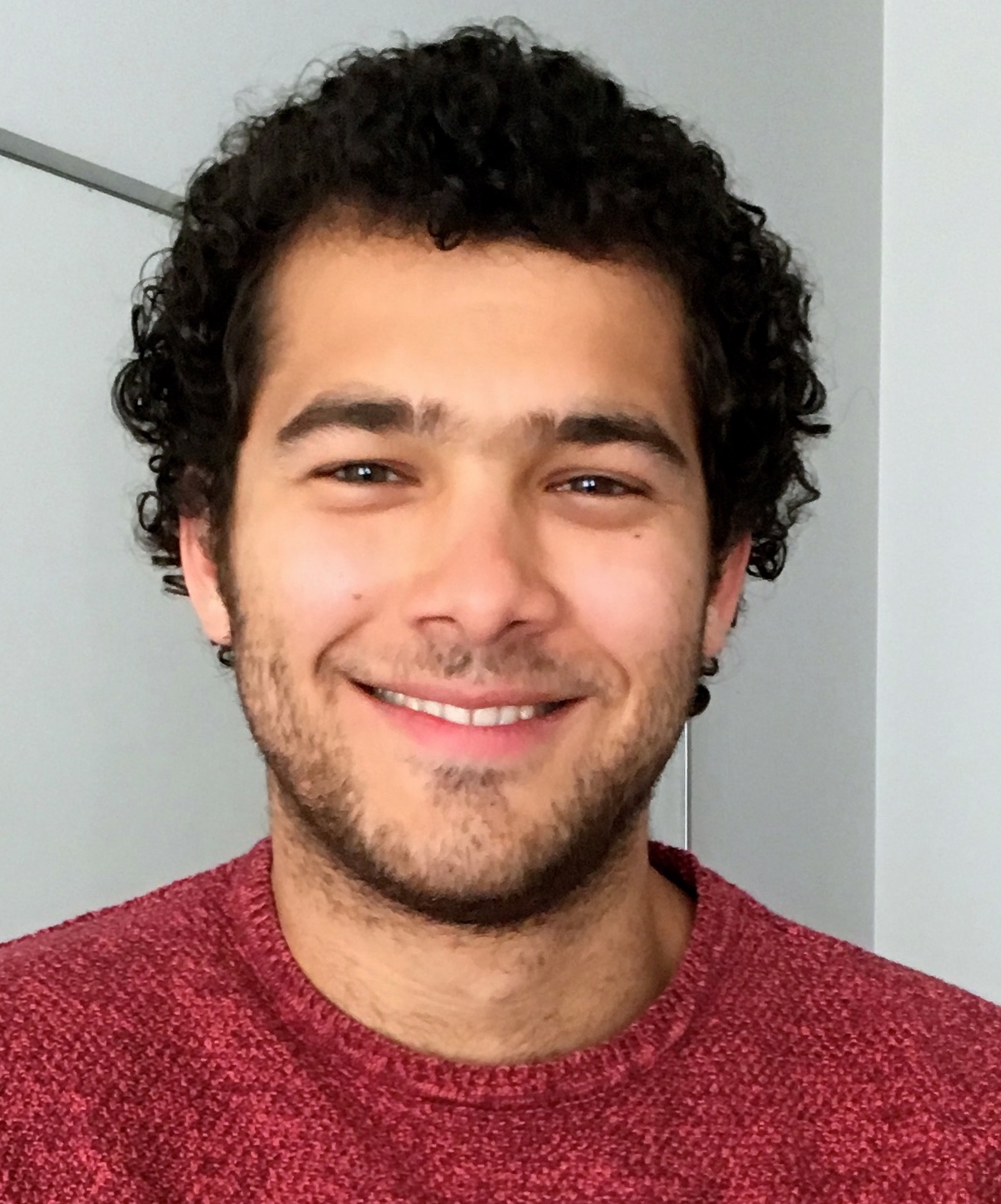}}]{Marco Avella-Medina} holds a B.A. degree in Economics (2009), a M.Sc. degree in Statistics (2011) and a Ph.D. in Statistics (2016) from the University of Geneva, Switzerland. From August 2016 to June 2018 he was a postdoctoral researcher with the Statistics and Data Science Center, and the Sloan School of Management at the Massachusetts Institute of Technology. Since July 2018, he is an Assistant Professor in the Department of Statistics at Columbia University. His research interests include robust statistics, high-dimensional statistics  and  statistical machine learning.
\end{IEEEbiography}
\begin{IEEEbiography}[{\includegraphics[width=1in,height=1.25in,clip,keepaspectratio]{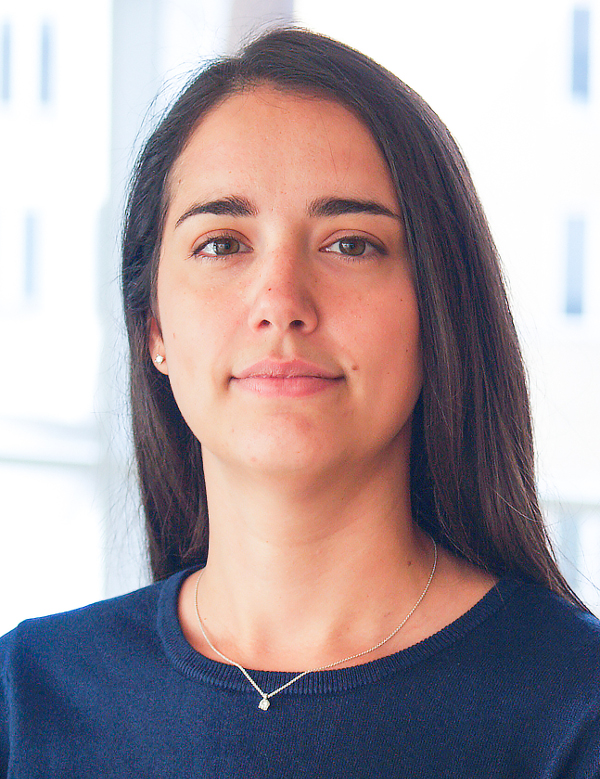}}]{Francesca Parise}
was born in Verona, Italy, in 1988. She received the B.Sc. and M.Sc. degrees (cum Laude) in Information and Automation Engineering from the University of Padova, Italy, in 2010 and 2012, respectively. She conducted her master thesis research at Imperial College London, UK, in 2012. She graduated from the Galilean School of Excellence, University of Padova, Italy, in 2013. She defended her PhD at the Automatic Control Laboratory, ETH Zurich, Switzerland in 2016 and she is currently a Postdoctoral researcher at the Laboratory for Information and Decision Systems, M.I.T., USA.
Her research focuses on identification, analysis and control of complex systems, with application to distributed multi-agent networks, game theory and systems biology.
\end{IEEEbiography}
\begin{IEEEbiography}[{\includegraphics[width=1in,height=1.25in,clip,keepaspectratio]{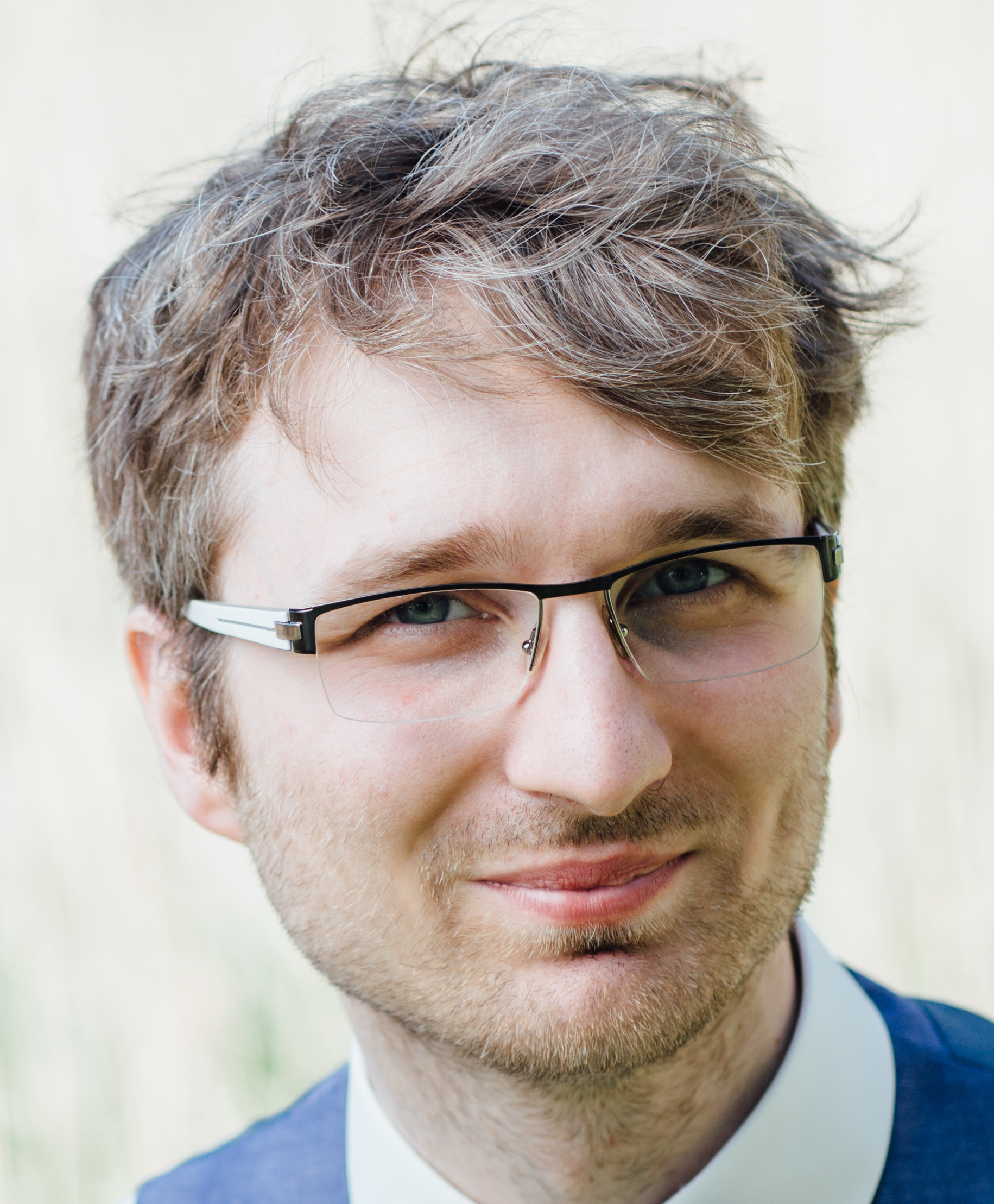}}]{Michael T. Schaub} obtained a B.Sc. in Electrical Engineering and Information Technology from ETH Zurich (2009), an M.Sc. in Biomedical Engineering  (2010) from Imperial College London, and a Ph.D. in Applied Mathematics (2014) from Imperial College. 
    After a Postdoctoral stay at the Universit\'e catholique de Louvain and the University of Namur (Belgium), he has been a Postdoctoral Researcher at the Institute of Data, Systems and Society (IDSS) at the Massachusetts Institute of Technology since November 2016. Presently, he is a Marie-Sklodowska Curie Fellow at IDSS and the Department of Engineering Science, University of Oxford, UK.
    He is broadly interested in interdisciplinary applications of applied mathematics in engineering, social and biological systems.
    His research interest include in particular network theory, data science, machine learning, and dynamical systems.
\end{IEEEbiography}
\begin{IEEEbiography}[{\includegraphics[width=1in,height=1.25in,clip,keepaspectratio]{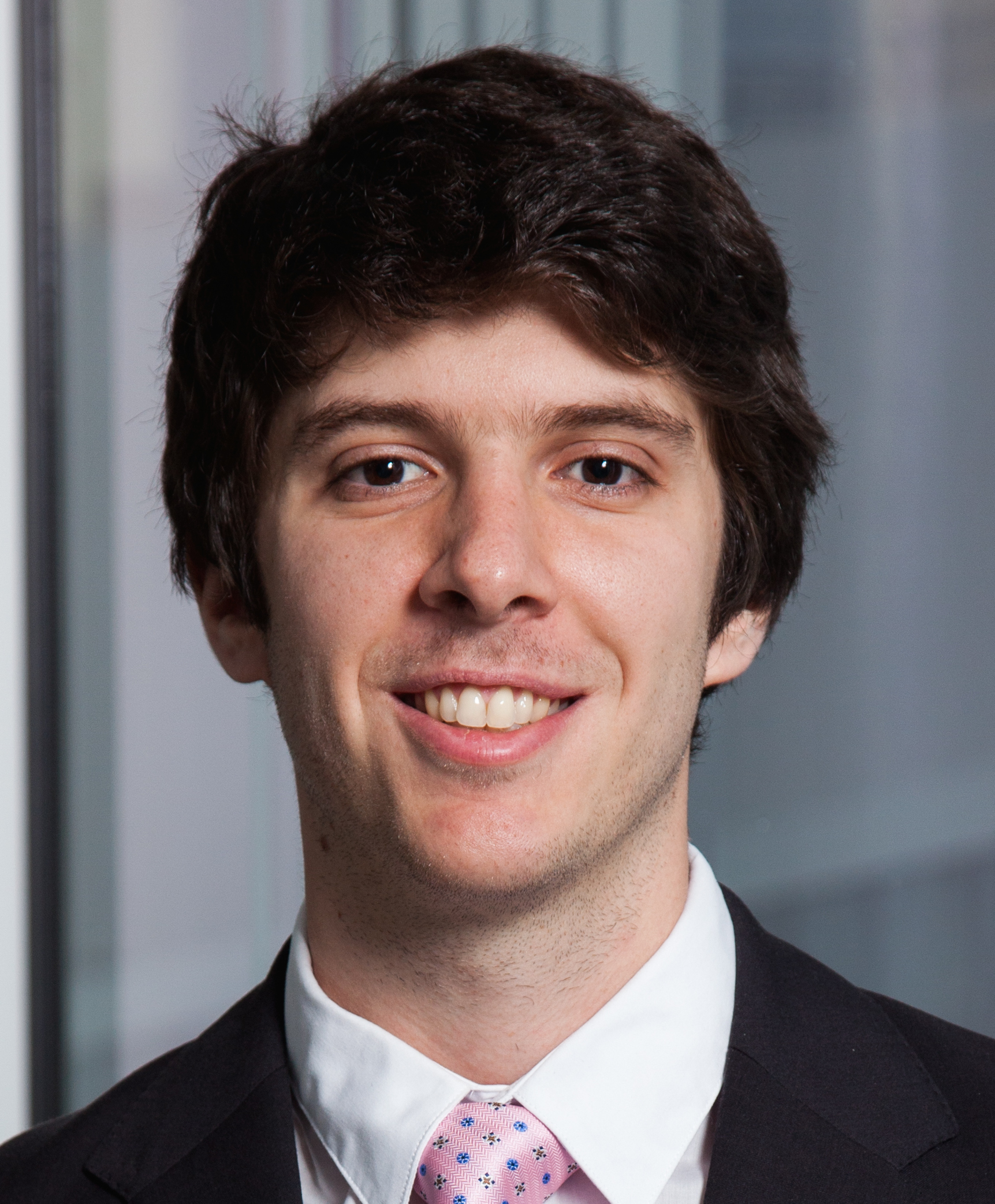}}]{Santiago Segarra} (M'16) received the B.Sc. degree in industrial engineering with highest honors (Valedictorian) from the Instituto Tecnol\'ogico de Buenos Aires (ITBA), Argentina, in 2011, the M.Sc. in electrical engineering from the University of Pennsylvania (Penn), Philadelphia, in 2014 and the Ph.D. degree in electrical and systems engineering from Penn in 2016. From September 2016 to June 2018 he was a postdoctoral research associate with the Institute for Data, Systems, and Society at the Massachusetts Institute of Technology. Since July 2018, Dr. Segarra is an Assistant Professor in the Department of Electrical and Computer Engineering at Rice University. His research interests include network theory, data analysis, machine learning, and graph signal processing. Dr. Segarra received the ITBA's 2011 Best Undergraduate Thesis Award in industrial engineering, the 2011 Outstanding Graduate Award granted by the National Academy of Engineering of Argentina, the 2017 Penn’s Joseph and Rosaline Wolf Award for Best Doctoral Dissertation in electrical and systems engineering as well as four best conference paper awards.
\end{IEEEbiography}

\cleardoublepage
\section*{Appendix B: Additional results}

\subsection*{Invariance of centrality measures under permutations}
Just as the topology of a graph is invariant with respect to relabelings or permutations of its nodes, graphons are defined only up to measure preserving transformations.
We show in the next lemma that the linear operator $\mathbb{W}^{\pi}$ associated with any such `permutation' $\pi$ (formalized via a measure preserving transformation) of a graphon $W$ shares the same eigenvalues of $\mathbb{W}$ and `permuted' eigenfunctions.

\setcounter{lemma}{5}
\begin{lemma}\label{measurepreserving}
    Consider the  graphon $ W^{\pi}(x,y):=W(\pi(x),\pi(y))$ obtained by transforming $W$ using the measure preserving function $\pi:[0,1]\to[0,1]$.
    Let $\mathbb{W}$ and $\mathbb{W}^{\pi}$ be the associated linear integral operators.
    If $(\lambda,\vf)$  is an eigenvalue-eigenfunction pair of $\mathbb{W}$, then $(\lambda,\vf \circ \pi)$  is an eigenvalue-eigenfunction pair of~$\mathbb{W}^{\pi}$.
\end{lemma}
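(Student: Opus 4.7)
The plan is to verify the eigenvalue equation $(\mathbb{W}^{\pi}(\vf \circ \pi))(x) = \lambda (\vf \circ \pi)(x)$ directly by unfolding the definition of $\mathbb{W}^{\pi}$ and then reducing to the defining equation $(\mathbb{W}\vf)(x) = \lambda \vf(x)$ by means of a change of variables driven by the measure-preserving property of $\pi$.

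First I would write out, by the definition of $\mathbb{W}^{\pi}$ and of $W^{\pi}$,
\begin{align*}
(\mathbb{W}^{\pi}(\vf \circ \pi))(x)
&= \int_0^1 W^{\pi}(x,y)\, \vf(\pi(y))\, \mathrm{d}y \\
&= \int_0^1 W(\pi(x), \pi(y))\, \vf(\pi(y))\, \mathrm{d}y.
\end{align*}
For fixed $x$, the integrand is of the form $h(\pi(y))$ where $h(z) := W(\pi(x), z)\, \vf(z)$ is integrable on $[0,1]$ (since $W$ is bounded and $\vf \in \leb$). The key step is to invoke the defining property of a measure-preserving map, namely that $\int_0^1 h(\pi(y))\, \mathrm{d}y = \int_0^1 h(z)\, \mathrm{d}z$ for every integrable $h$, which is just the pushforward identity for the Lebesgue measure on $[0,1]$.

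Applying this, the integral collapses to $\int_0^1 W(\pi(x), z)\, \vf(z)\, \mathrm{d}z = (\mathbb{W}\vf)(\pi(x))$, and since $(\mathbb{W}\vf)(\pi(x)) = \lambda\, \vf(\pi(x)) = \lambda (\vf \circ \pi)(x)$ by hypothesis, we are done. I would also briefly note that $\vf \circ \pi$ is nonzero in $\leb$ because $\pi$ is measure-preserving, so $\|\vf \circ \pi\|^2 = \int_0^1 \vf(\pi(y))^2\, \mathrm{d}y = \int_0^1 \vf(z)^2\, \mathrm{d}z = \|\vf\|^2 \ne 0$, so $(\lambda, \vf \circ \pi)$ is a genuine eigenpair and not the trivial zero eigenfunction.

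I do not expect a real obstacle here: the only subtlety is that $\pi$ need not be invertible, so one should avoid writing a formal substitution $z = \pi(y)$ with a Jacobian, and instead invoke the pushforward identity that characterizes measure-preserving maps directly. With that caveat noted, the argument is a one-line chain of equalities.
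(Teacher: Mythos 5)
Your proof is correct and follows essentially the same route as the paper: unfold the definition of $\mathbb{W}^{\pi}$, use the measure-preserving property of $\pi$ to replace $\int_0^1 h(\pi(y))\,\mathrm{d}y$ by $\int_0^1 h(z)\,\mathrm{d}z$, and reduce to the eigenvalue equation for $\mathbb{W}$ evaluated at $\pi(x)$. The paper justifies the key substitution by appeal to the ergodic theorem, whereas you invoke the pushforward identity directly (and also add the worthwhile observation that $\|\vf\circ\pi\|=\|\vf\|\neq 0$, which the paper omits); these are cosmetic differences only.
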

\begin{proof}
From a direct computation we obtain that
\begin{align*}
&\textstyle(\mathbb{W}^{\pi}(\vf\circ\pi))(x)=\textstyle\int_0^1 W^{\pi}(x,y) \vf(\pi(y)) \mathrm{d}y \\&=\textstyle\int_0^1 {W}(\pi(x),\pi(y)) \vf(\pi(y)) \mathrm{d}y =\textstyle\int_0^1 {W}(\pi(x),y) \vf(y) \mathrm{d}y \\&=\textstyle  (\mathbb{W}\vf)(\pi(x))= \lambda \vf(\pi(x)).
\end{align*}
The third equality uses the fact that $\pi$ is a measure preserving transformation and the ergodic theorem \cite[Ch.~8]{dudley2002}.
\end{proof}
Lemma~\ref{measurepreserving} complements the discussion at the end of Section~\ref{Ss:prelim_graphons} by showing the effect of measure preserving transformations on the spectral properties of the graphon.

\subsection*{Discussion on related  graphon convergence results }
 We introduce some additional definitions in order to compare our work with previous results on graphon convergence. In particular, let us start by introducing the \emph{cut norm} which is typically  used for the statement of graphon convergence results. For a graphon $W$ in the graphon space $\mathcal W$, the cut norm is denoted by $\|W\|_{\Box}$ and is defined as
$$\|W\|_{\Box}:=\sup_{U,V}\bigg|\int_U\int_V W(x,y)\mathrm{d}x\mathrm{d}y\bigg|, $$
where $U$ and $V$ are measurable subsets of $[0,1]$. The \emph{cut metric} between two graphons $W,W'\in\mathcal W$ is
$$d_{\Box}(W,W'):=\inf_{\phi\in\Pi_{[0,1]}}\|W^{\phi}-W'\|_{\Box} ,$$
 where $W^{\phi}(x,y):=W(\phi(x,),\phi(y))$ and $\Pi_{[0,1]}$ is the class of measure preserving permutations $\phi:[0,1]\mapsto[0,1]$. Intuitively, the function $\phi$ performs a node relabeling to find the best match between $W$ and $W'$. Because of such relabeling, $d_{\Box}(W,W')$ is not a well defined metric in $\mathcal W$ since we might have that $d_{\square}(W,W')=0$ even if $W\neq W'$. To avoid such a problem, we define the space $\mathcal W $ as the space where we identify graphons up to measure preserving transformations, so that $d_{\Box}$ is a well defined metric in $\mathcal W$. It can be shown that the metric space $(W,d_{\Box})$ is complete \cite{Lovasz2012}.
 The following lemma is instrumental for our comparison to previous work, as it establishes the equivalence between the $L_{p,q}$ norms and the cut norm. Recall that for any $p,q\geq 1$, the  $L_{p,q}$ operator norm is defined as $\vertiii{\mathbb W}_{p,q}:=\sup_{f\in L_p,\|f\|_p=1}\|\mathbb W f\|_q$.
\begin{lemma}
\label{lem:normequivalence}
\cite{Janson2013} For any $W\in\mathbb W$ and all $p,1\in[1,\infty]$
$$\|W\|_{\Box}\leq\vertiii{W}_{p,q}\leq\sqrt{2}(4\|W\|_{\Box})^{\min(1-1/p,1/q)}. $$
In particular, for $p=q=2$ we get
$$|W\|_{\Box}\leq\vertiii{\mathbb{W}}\leq\sqrt{8\|W\|_{\Box}}. $$
\end{lemma}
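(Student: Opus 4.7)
The lemma gives a two-sided comparison between the cut norm $\|W\|_\Box$ and the $L_{p,q}$ operator norm of the integral operator $\mathbb{W}$. I would treat the two inequalities by separate arguments, since they rely on different techniques.

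\emph{Lower bound $\|W\|_\Box \leq \vertiii{W}_{p,q}$.} The plan is to test the operator on indicator functions. For any measurable $U, V \subseteq [0,1]$, I would identify
\[
\int_U \int_V W(x,y)\,\mathrm{d}y\,\mathrm{d}x = \langle 1_U, \mathbb{W} 1_V\rangle
\]
and apply H\"older's inequality with the exponent $q'$ conjugate to $q$:
\[
|\langle 1_U, \mathbb{W} 1_V\rangle| \leq \|1_U\|_{q'}\,\|\mathbb{W} 1_V\|_q \leq \|1_U\|_{q'}\,\vertiii{W}_{p,q}\,\|1_V\|_p \leq \vertiii{W}_{p,q},
\]
using $\|1_U\|_{q'} = |U|^{1/q'} \leq 1$ and $\|1_V\|_p = |V|^{1/p} \leq 1$. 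Taking the supremum over $U, V$ gives the claim.

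\emph{Upper bound $\vertiii{W}_{p,q} \leq \sqrt{2}(4\|W\|_\Box)^{\min(1-1/p,1/q)}$.} By duality, $\vertiii{W}_{p,q} = \sup_{\|f\|_p=\|g\|_{q'}=1} |\langle g, \mathbb{W} f\rangle|$. I would decompose $f = f^+ - f^-$ and $g = g^+ - g^-$ into positive and negative parts, and on each of the four non-negative pairings apply the layer-cake identity $h(x) = \int_0^\infty 1_{\{h > t\}}(x)\,\mathrm{d}t$, yielding
\[
\langle g^\sigma, \mathbb{W} f^\tau\rangle = \int_0^\infty\!\! \int_0^\infty \langle 1_{\{g^\sigma > s\}}, \mathbb{W} 1_{\{f^\tau > t\}}\rangle\,\mathrm{d}t\,\mathrm{d}s.
\]
Each inner pairing satisfies simultaneously the cut-norm bound $|\langle 1_A, \mathbb{W} 1_B\rangle| \leq \|W\|_\Box$ and, thanks to $0 \leq W \leq 1$, the pointwise bound $|\langle 1_A, \mathbb{W} 1_B\rangle| \leq \mu(A)\mu(B)$. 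Interpolating geometrically with a parameter $\theta \in [0,1]$ yields $|\langle 1_A, \mathbb{W} 1_B\rangle| \leq \|W\|_\Box^{1-\theta}\,\mu(A)^\theta \mu(B)^\theta$. Combining this with the Markov estimates $\mu(\{g^\sigma > s\}) \leq s^{-q'}$ and $\mu(\{f^\tau > t\}) \leq t^{-p}$ (and the trivial bound $\mu(\cdot) \leq 1$), the double integral can be controlled after splitting at suitable thresholds $s_0, t_0$ chosen to balance the two regimes.

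The principal technical obstacle will be propagating the interpolation parameter $\theta$ through the double integration cleanly, so that the convergence constraints force $\theta \geq \max(1/p, 1/q') = \max(1/p, 1-1/q)$; the optimal choice saturates this, yielding the exponent $1-\theta = \min(1-1/p, 1/q)$ on $\|W\|_\Box$, together with the constant $4^{\min(1-1/p,1/q)}$ coming from the threshold optimization. The factor $\sqrt{2}$ absorbs the sum over the four sign pairings $(\sigma, \tau)\in\{+,-\}^2$. In the symmetric graphon case $p = q = 2$ relevant for the paper one has $\theta = 1/2$, and the bound collapses to $\vertiii{\mathbb{W}} \leq \sqrt{2}(4\|W\|_\Box)^{1/2} = \sqrt{8\|W\|_\Box}$, so this simpler special case suffices for all downstream applications.
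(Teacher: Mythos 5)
First, a point of comparison: the paper does not prove this lemma at all --- it is quoted directly from Janson's monograph --- so your attempt must stand against the standard proof there. Your lower bound is correct and is exactly that standard argument. The upper bound, however, has a genuine gap that is not merely a matter of ``propagating $\theta$ cleanly'': the intermediate quantity your plan reduces everything to, namely $I:=\int_0^\infty\int_0^\infty\min\bigl(\|W\|_{\Box},\,\mu(\{g^\sigma>s\})\,\mu(\{f^\tau>t\})\bigr)\,\mathrm{d}s\,\mathrm{d}t$, does \emph{not} tend to $0$ as $\|W\|_{\Box}\to 0$, so no choice of interpolation parameter or thresholds can extract the claimed bound from it. Concretely, for $p=q=2$ take $f=g=\mu_0\sum_{k\ge 0}1_{[0,2^{-k}]}$ with $\mu_0^2\sum_{j\ge1}j^2 2^{-j}=1$ (essentially $f(x)\propto\log(1/x)$, a unit vector of $L^2$). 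Then $\mu(\{f>t\})=2^{-\lfloor t/\mu_0\rfloor}$ and $I\ge \mu_0^2\sum_{k+l\le \log_2(1/\|W\|_{\Box})}2^{-(k+l)}\to 4\mu_0^2>0$, a constant, whereas the lemma asserts $\langle f,\mathbb{W}f\rangle\le\sqrt{8\|W\|_{\Box}}\to 0$. The loss occurs at the very first step: bounding each $\langle 1_{A_s},\mathbb{W}1_{B_t}\rangle$ \emph{separately} by $\min(\|W\|_{\Box},\mu(A_s)\mu(B_t))$ discards the fact that the level sets are nested and that a single kernel of small cut norm cannot saturate all of these bounds simultaneously. (The softer variant with a single geometric exponent $\theta$ and Markov tails has the same defect in disguise: the $t$- and $s$-integrals converge only for $\theta>\max(1/p,1/q')$, so you only ever reach exponents strictly below the critical value $\min(1-1/p,1/q)$, with constants blowing up at the endpoint.)

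The actual proof is an operator-level interpolation, not a level-set computation. One first establishes the endpoint bound $\vertiii{W}_{\infty,1}\le 4\|W\|_{\Box}$ --- this is where the four sign pairings enter, and they account for the factor $4$, not for the $\sqrt{2}$ as you suggest --- together with the trivial bound $\vertiii{W}_{1,\infty}\le\|W\|_{L^\infty}\le 1$, and then applies the Riesz--Thorin theorem between the exponent pairs $(\infty,1)$ and $(1,\infty)$; the $\sqrt{2}$ is the cost of passing from complex to real scalars. This gives the result on the dual line $1/p+1/q=1$ (in particular $p=q=2$), and the general case follows from the monotonicity of $L^p$ norms on a probability space, which is precisely what produces the exponent $\min(1-1/p,1/q)$. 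If you want a self-contained argument you must at minimum replace your pointwise level-set bound by the operator statement $\vertiii{W}_{\infty,1}\le 4\|W\|_{\Box}$ \emph{before} interpolating.
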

 It follows from Lemma~\ref{lem:normequivalence} that  convergence of the graphon operator can  be deduced from previous works establishing  graphon convergence in cut norm. For example one can use the results of~\cite{Borgs2012, Szegedy2011} combined with Lemma~\ref{lem:normequivalence} to easily conclude that the graphon operator converges in operator norm. However taking this approach one typically does not obtain rates of convergence of the sampled graph's operator to the graphon operator.  A convergence rate is instead provided in~\cite[Lemma 10.16]{Lovasz2012} for general graphons. We next show that, for graphons satisfying Assumption \ref{lipschitz} with $K=0$, the result in~\cite[Lemma 10.16]{Lovasz2012} leads to  a slower rate of convergence than the one provided in Theorem~\ref{Thm1}.   
 More precisely, combining~\cite[Lemma 10.16]{Lovasz2012} and Lemma~\ref{lem:normequivalence},  we get that with probability at least $1-\exp(-\frac{N}{2\log N})$ it holds  
 \begin{equation}\label{eq:lov}\vertiii{\mathbb{S}_N-\mathbb{W}}\leq \sqrt{176}/(\log N)^{1/4}.\end{equation} 
 By defining $\delta=\exp(-\frac{N}{2\log N})$, the bound provided in Eq.  \eqref{eq:convergenceSN} (for  $\kappa_N=1$ and $K=0$) leads to 
 \begin{equation}\vertiii{\mathbb{S}_N-\mathbb{W}}\leq \mathcal{O}\left(1/(\log N)^{1/2}\right),\end{equation} 
 thus proving faster convergence.
Finally, we note that  the bounds provided in Theorem \ref{Thm1} are  not only tighter but also more flexible. In fact,  by  introducing the parameter $\delta$ we establish a trade off between sharper error bounds and the probability that such bounds hold, as typically done in concentration inequality results.
 
\subsection*{A useful variant of the Davis-Kahan theorem}

The following technical lemma is used to prove the convergence of the eigenvector centrality for graphons and is a consequence of the Davis-Kahan $\sin\theta$ theorem for compact operators in Hilbert space \cite{davisandkahan1970}.

\begin{lemma}  \label{lemma:dk_simple} 
Consider two linear integral operators $\mathbb{L}$ and $\hat{\mathbb{L}}$, with ordered eigenvalues $\{\lambda_k\}_{k\ge 1}$, $\{\hat\lambda_k\}_{k\ge 1}$. Let $\hat{\varphi}_1,\varphi_1$ be the eigenfunctions associated with the dominant eigenvalues $\hat \lambda_1$ and  $ \lambda_1$ (normalized to norm one) and suppose that $|\lambda_1-\lambda_2|>|\hat{\lambda}_1-\lambda_1|.$
Then 
\begin{equation}
\textstyle \|\hat{\varphi}_1-\varphi_1\|\leq \frac{\sqrt{2} \vertiii{\hat{\mathbb{L}}-\mathbb{L}} }{|\lambda_1-\lambda_2|-|\hat{\lambda}_1-\lambda_1|}.
\end{equation}
\end{lemma}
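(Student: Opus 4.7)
The plan is to invoke the Davis--Kahan $\sin\theta$ theorem for compact self-adjoint operators on a Hilbert space (available in \cite{davisandkahan1970}) and then translate the resulting bound on the canonical angle between the one-dimensional eigenspaces into a bound on the $L^2$ distance of the normalized eigenfunctions. Concretely, I would consider the one-dimensional invariant subspace of $\hat{\mathbb{L}}$ spanned by $\hat{\varphi}_1$, associated with the spectral set $\{\hat{\lambda}_1\}$, and the closed invariant subspace of $\mathbb{L}$ orthogonal to $\varphi_1$, whose spectrum is $\{\lambda_k\}_{k\ge 2}$. Applying Davis--Kahan to this pair of subspaces yields
\begin{equation*}
\sin\theta \;\le\; \frac{\vertiii{\hat{\mathbb{L}}-\mathbb{L}}}{\delta},\qquad \delta\;=\;\mathrm{dist}\bigl(\{\hat{\lambda}_1\},\,\{\lambda_k\}_{k\ge 2}\bigr),
\end{equation*}
where $\theta\in[0,\pi/2]$ denotes the canonical angle between $\mathrm{span}\{\hat{\varphi}_1\}$ and $\mathrm{span}\{\varphi_1\}$.

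Next, I would bound $\delta$ from below using the triangle inequality and the fact that, among $\{\lambda_k\}_{k\ge 2}$, the closest eigenvalue to $\lambda_1$ is $\lambda_2$. For every $k\ge 2$,
\begin{equation*}
|\hat{\lambda}_1-\lambda_k|\;\ge\;|\lambda_1-\lambda_k|-|\hat{\lambda}_1-\lambda_1|\;\ge\;|\lambda_1-\lambda_2|-|\hat{\lambda}_1-\lambda_1|,
\end{equation*}
which is strictly positive by the hypothesis $|\lambda_1-\lambda_2|>|\hat{\lambda}_1-\lambda_1|$. Hence $\delta \ge |\lambda_1-\lambda_2|-|\hat{\lambda}_1-\lambda_1|$.

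Finally, I would convert the angle bound into the claimed norm bound. After choosing the sign of $\hat{\varphi}_1$ so that $\langle\hat{\varphi}_1,\varphi_1\rangle\ge 0$, the identity $\|\hat{\varphi}_1-\varphi_1\|^2 = 2(1-\cos\theta) = 4\sin^2(\theta/2)$, together with the elementary inequality $\sin(\theta/2)\le \sin\theta/\sqrt{2}$ valid on $[0,\pi/2]$, gives $\|\hat{\varphi}_1-\varphi_1\|\le\sqrt{2}\sin\theta$. Combining this with the Davis--Kahan bound and the gap estimate yields the stated inequality.

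The main obstacle I anticipate is pinning down the precise variant of Davis--Kahan to cite: one needs the formulation for compact (possibly infinite-rank) self-adjoint operators on a Hilbert space, with the operator norm $\vertiii{\hat{\mathbb{L}}-\mathbb{L}}$ appearing in the numerator (rather than $\|(\hat{\mathbb{L}}-\mathbb{L})\varphi_1\|$), and with the gap measured between $\hat{\lambda}_1$ and the remaining spectrum of $\mathbb{L}$. Once this is isolated, the remaining steps are elementary triangle and trigonometric estimates.
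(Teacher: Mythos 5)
Your proposal is correct and follows essentially the same route the paper takes: the paper presents this lemma as a direct consequence of the Davis--Kahan $\sin\theta$ theorem for compact self-adjoint operators on a Hilbert space (deferring the details to an external reference), which is exactly your argument of bounding the canonical angle by $\vertiii{\hat{\mathbb{L}}-\mathbb{L}}/\delta$ with $\delta\ge|\lambda_1-\lambda_2|-|\hat\lambda_1-\lambda_1|$ and then converting $\sin\theta$ into $\|\hat{\varphi}_1-\varphi_1\|$ via the elementary factor $\sqrt{2}$. Your remark that the sign of $\hat{\varphi}_1$ must be chosen so that $\langle\hat{\varphi}_1,\varphi_1\rangle\ge 0$ is a correct and necessary refinement that the lemma statement leaves implicit.
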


The proof can be found in~\cite{teamgraphon}.

\subsection*{Concentration of uniform order statistics}
The goal of this subsection is to derive a uniform deviation bound for order statistics sampled from a standard uniform distribution, as detailed in  Proposition \ref{concentration_all} in the main text. This result is  required in the proof of Theorem \ref{Thm1}. Although it is  intuitive to expect subgaussian deviations for  uniform order statistics, we could not find the desired statement explicitly in the literature and believe it could be of interest on its own right. From a technical point of view, the key ingredient in our argument is to use the exponential Efron-Stein inequality derived in \cite{boucheronandthomas2012}.

Let $U_1,\dots,U_N\sim Unif(0,1)$ and define their correspondent order statistics $U_{(1)}\leq U_{(2)}\leq \dots\leq U_{(N)}$ and spacings $\Delta_i=U_{(i)}-U_{(i-1)}$ for $i=1,\dots,N+1$ with the convention $U_{(i)}=0$ and $U_{(N+1)}=1$. It is shown in \cite{devroye1986}  that 
\begin{mylist}
\item[-] Each $U_{(i)}$ is  distributed according to $Beta(i,N+1-i)$ and thus has mean $\frac{i}{N+1};$
\item[-] The joint survival function of the spacings is
$$\mathbb P\big(\Delta_1>s_1,\dots,\Delta_{N+1}>s_{N+1}\Big)=\left(1-\sum_{i=1}^{N+1}s_i\right)^{N}_{+}.$$
Consequently the spacings are identically (but not independently) distributed with  cumulative distribution  $F_{\Delta}(s)= 1-(1-s)^{N}$ and  marginal density
$f_{\Delta}(s)=N(1-s)^{N-1}. $
\end{mylist}

The following lemma is a key intermediate step in the derivation of  concentration inequalities for order statistics drawn from a uniform distribution.

\begin{lemma}
\label{twosided}
For any $\lambda\geq 0$ 
it holds  $$\log\mathbb{E}e^{\lambda|U_{(i)}-\mathbb{E}U_{(i)}|}\leq\lambda\frac{N}{2}\mathbb{E}[\Delta_i(e^{\lambda\Delta_i}-1)]. $$
\end{lemma}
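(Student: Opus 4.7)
The plan is to apply the exponential Efron--Stein methodology of Boucheron and Thomas (2012), regarding the order statistic $Z := U_{(i)} = f(U_1,\dots,U_N)$ as a symmetric function of the i.i.d.\ uniform sample. Throughout, for each $j$ let $Z_j'$ denote the value of $f$ when $U_j$ is replaced by an independent copy $U_j'$ (with the auxiliary spacings and order statistics after the swap denoted with primes).

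To convert the two-sided quantity $|U_{(i)}-\mathbb{E} U_{(i)}|$ into quantities amenable to the one-sided exponential Efron--Stein inequality, I would use the elementary bound $e^{\lambda|x|}\leq e^{\lambda x}+e^{-\lambda x}$ and control $\log\mathbb{E} e^{\pm\lambda(Z-\mathbb{E} Z)}$ separately. The relevant one-sided bound from Boucheron--Thomas takes the form
\begin{equation*}
\log \mathbb{E} e^{\lambda(Z-\mathbb{E} Z)} \;\leq\; \lambda\,\sum_{j=1}^{N}\mathbb{E}\bigl[(Z-Z_j')_+\bigl(e^{\lambda(Z-Z_j')_+}-1\bigr)\bigr],
\end{equation*}
and symmetrically for $\mathbb{E} e^{-\lambda(Z-\mathbb{E} Z)}$ with $(Z-Z_j')_-$, so the task reduces to controlling the Efron--Stein sums on the right.

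Next I would carry out a rank-based case analysis of the swap. Replacing $U_j$ by $U_j'$ can change $U_{(i)}$ only if $U_j$ and $U_j'$ straddle the $i$-th order statistic. A short argument shows that in the upward case ($U_j\leq U_{(i)}$, $U_j'>U_{(i)}$) the displacement $(Z_j'-Z)_+$ is at most $\Delta_{i+1}$, and in the downward case $(Z-Z_j')_+$ is at most $\Delta_i$. Integrating over $U_j'\sim\mathrm{Unif}(0,1)$ conditionally on the remaining sample, the probability that such a straddle occurs is proportional to the relevant spacing, which produces expressions of the form $\Delta_\ast\bigl(e^{\lambda\Delta_\ast}-1\bigr)$. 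Summing over $j$ and using the exchangeability of $U_1,\dots,U_N$ (and hence the identical distribution of consecutive spacings) collapses the sum of $N$ terms into an $N$-fold single-coordinate expectation involving only $\Delta_i$ (and $\Delta_{i+1}$, which has the same law). Combining the two tails through $e^{\lambda|x|}\leq e^{\lambda x}+e^{-\lambda x}$ and averaging gives the factor $\tfrac{N}{2}$ and yields the stated bound
\begin{equation*}
\log\mathbb{E} e^{\lambda|U_{(i)}-\mathbb{E} U_{(i)}|}\;\leq\;\lambda\,\frac{N}{2}\,\mathbb{E}\bigl[\Delta_i(e^{\lambda\Delta_i}-1)\bigr].
\end{equation*}

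The main obstacle is the bookkeeping in the swap analysis: one has to verify that the contributions of $\Delta_i$ and $\Delta_{i+1}$ are interchangeable (via the common law of spacings) and that the interaction between the indicator of ``a swap actually moves $U_{(i)}$'' and the exponential weight $e^{\lambda(Z-Z_j')}$ telescopes cleanly enough to produce the single product $\Delta_i(e^{\lambda\Delta_i}-1)$ rather than a more complicated functional of several spacings. Getting the sharp constant $\tfrac{N}{2}$ instead of $N$ requires carefully combining the two one-sided inequalities and exploiting the symmetry between the ``upward'' and ``downward'' swap cases.
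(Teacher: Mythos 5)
Your high-level plan --- split $|U_{(i)}-\mathbb{E}U_{(i)}|$ into two one-sided moment generating functions via $e^{\lambda|x|}\le e^{\lambda x}+e^{-\lambda x}$ and control each with the Boucheron--Thomas machinery --- is the same skeleton as the paper's proof, but the execution has a genuine gap in exactly the place you flag as ``the main obstacle.'' The inequality you quote, $\log\mathbb{E}e^{\lambda(Z-\mathbb{E}Z)}\le\lambda\sum_j\mathbb{E}[(Z-Z_j')_+(e^{\lambda(Z-Z_j')_+}-1)]$, is not the form of the exponential Efron--Stein bound: what Boucheron--Thomas actually provide is an entropy (modified log-Sobolev) inequality of the type $\mathrm{Ent}[e^{\lambda X_{(k)}}]\le k\,\mathbb{E}[e^{\lambda X_{(k+1)}}\psi(\lambda(X_{(k)}-X_{(k+1)}))]$, which must then be integrated via a Herbst-type argument to reach the moment generating function. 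That integration is precisely where the factor $\tfrac12$ comes from, so each one-sided bound \emph{individually} carries it: the paper obtains $\log\mathbb{E}e^{\lambda(U_{(i)}-\mathbb{E}U_{(i)})}\le\lambda\tfrac{N-i+1}{2}\mathbb{E}[\Delta_i(e^{\lambda\Delta_i}-1)]$ and $\log\mathbb{E}e^{\lambda(\mathbb{E}U_{(i)}-U_{(i)})}\le\lambda\tfrac{i}{2}\mathbb{E}[\Delta_i(e^{\lambda\Delta_i}-1)]$, each already below the target $\lambda\tfrac{N}{2}\mathbb{E}[\Delta_i(e^{\lambda\Delta_i}-1)]$. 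Your plan to recover the $\tfrac{N}{2}$ by ``averaging'' the upward and downward swap cases cannot work: after applying $e^{\lambda|x|}\le e^{\lambda x}+e^{-\lambda x}$ you must bound each term separately, and a sum of exponentials does not yield a geometric mean in the exponent.

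Two further differences from the paper's route are worth noting. First, the paper does not redo the swap/rank bookkeeping at all: it observes that the uniform distribution has increasing hazard rate $h(x)=1/(1-x)$ and directly invokes Theorem~2.9 of Boucheron--Thomas, only adding the remark that the restriction $i\ge N/2+1$ stated there is never used in its proof. Second, for the lower tail the paper avoids a second Efron--Stein computation entirely: it uses the reflection symmetry of the Beta law, $U_{(i)}\sim 1-U_{(N-i+1)}$, together with the fact that all spacings $\Delta_1,\dots,\Delta_{N+1}$ share the same marginal distribution, to reduce the lower tail to the already-proved upper-tail bound with constant $\tfrac{i}{2}$. If you want to salvage your approach, you should either cite the order-statistics theorem directly (checking the hazard-rate hypothesis, as the paper does) or carry out the full entropy-plus-Herbst argument; the generic swap analysis as you sketch it will at best give the constant $N$ rather than $\tfrac{N}{2}$, and the interchangeability of $\Delta_i$ and $\Delta_{i+1}$ that you invoke is exactly the spacing-exchangeability fact the paper uses, but in the service of the reflection trick rather than a second coupling argument.
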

\begin{proof}
We show this result by first proving
\begin{enumerate}
\item[(a)] $\log\mathbb{E}e^{\lambda(U_{(i)}-\mathbb{E}U_{(i)})}\leq\lambda\frac{N-i+1}{2}\mathbb{E}[\Delta_i(e^{\lambda\Delta_i}-1)]$;
\item[(b)] $\log\mathbb{E}e^{\lambda(\mathbb{E}U_{(i)}-U_{(i)})}\leq\lambda\frac{i}{2}\mathbb{E}[\Delta_i(e^{\lambda\Delta_i}-1)] $.
\end{enumerate}
To this end, note that the hazard rate of a uniform distribution is increasing since it has the form $h(x)=\frac{1}{1-x}$. Therefore applying Theorem 2.9 of \cite{boucheronandthomas2012} shows (a). Note that therein the result is proven only for $i\ge N/2+1$ (equivalently in the notation of \cite{boucheronandthomas2012} $k:=N+1-i\le N/2$) but such condition on $i$ is never used in the proof. Indeed, in order to prove claim (2.1) of Theorem 2.9 in~\cite{boucheronandthomas2012} one only needs to show that
\begin{equation}
\label{prop2.3}
\mbox{Ent}[e^{\lambda X_{(k)}}]\leq k \mathbb{E}[e^{\lambda X_{(k+1)}}\psi(\lambda(X_{(k)}-X_{(k+1)}))] 
\end{equation}
for $1\leq k\leq N$, where $\mbox{Ent}[Y]=\mathbb{E}[Y\log Y]-\mathbb{E}[Y]\log\mathbb{E}[Y]$ is the entropy of a non-negative random variable $Y$. This follows easily from the arguments of Proposition 2.3 in \cite{boucheronandthomas2012}. Note that the authors only consider $k:=N+1-i\le N/2$ in this proposition because for $k>N/2$ the bound can be improved.

Let us now turn to the proof of $(b)$. Note that the beta distribution is reflection symmetric i.e.  if $X\sim Beta(\alpha,\beta)$ then $1-X\sim Beta(\beta,\alpha)$ for $\alpha,\beta>0$. Therefore $U_{(i)}\sim 1-U_{(N-i+1)}$ and $\mathbb{E}U_{(i)}-U_{(i)}\sim U_{(N-i+1)}-\mathbb{E}U_{(N-i+1)}$. Hence  by $(a)$ we have that 
\begin{align*}
&\log\mathbb{E}e^{\lambda(\mathbb{E}U_{(i)}-U_{(i)})} =\log\mathbb{E}e^{\lambda(U_{(N-i+1)}-\mathbb{E}U_{(N-i+1)})}\\
&\textstyle \leq \lambda\frac{N-(N-i+1)+1}{2}\mathbb{E}[\Delta_{N-i+1}(e^{\lambda\Delta_{N-k+1}}-1)] \\
 &\textstyle =\lambda\frac{i}{2}\mathbb{E}[\Delta_i(e^{\lambda\Delta_i}-1)],
\end{align*}
where in the last step we  used the fact that the spacings $\Delta_1,\dots,\Delta_N$ have the same marginal distribution. 
Finally, the statement of the lemma is an immediate consequence of (a) and (b).
\end{proof}

\begin{lemma}
 \label{concentration_abs}
Suppose that $N>5$ and $\delta_i\in(e^{-N/5},e^{-1})$  then, for $i=1,\dots,N$, with probability at least $1-\delta_i$
 $$\Big|U_{(i)}-\frac{i}{N+1}\Big|\leq \sqrt{\frac{8\log(1/\delta_i)}{(N+1)}}.$$
 \end{lemma}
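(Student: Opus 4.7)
The plan is to run a Chernoff argument on top of the MGF bound provided by Lemma~\ref{twosided}. Since that lemma controls the MGF of the \emph{absolute} deviation directly, Markov's inequality gives, for every $\lambda \geq 0$,
$$\mathbb{P}\big(|U_{(i)} - \mathbb{E}U_{(i)}| \geq t\big) \leq e^{-\lambda t}\,\mathbb{E}e^{\lambda|U_{(i)}-\mathbb{E}U_{(i)}|} \leq \exp\!\Big(-\lambda t + \tfrac{\lambda N}{2}\,\mathbb{E}[\Delta_i(e^{\lambda\Delta_i}-1)]\Big),$$
with no loss of a factor of $2$ for the two-sided event. So the core task reduces to bounding the quantity $\mathbb{E}[\Delta_i(e^{\lambda\Delta_i}-1)]$ and optimizing over $\lambda$.

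To control $\mathbb{E}[\Delta_i(e^{\lambda\Delta_i}-1)]$, I would use the Taylor expansion $e^{\lambda\Delta_i}-1 = \sum_{k\geq 1}(\lambda\Delta_i)^k/k!$ together with the closed-form Beta$(1,N)$ moments $\mathbb{E}[\Delta_i^{k+1}] = (k+1)!/\prod_{j=1}^{k+1}(N+j)$ recalled just before the lemma. The result is a geometric-type series
$$\mathbb{E}[\Delta_i(e^{\lambda\Delta_i}-1)] = \sum_{k=1}^{\infty}\frac{\lambda^{k}(k+1)}{\prod_{j=1}^{k+1}(N+j)},$$
which I would upper-bound using $\prod_{j=1}^{k+1}(N+j) \geq (N+1)^{k+1}$ and the identity $\sum_{k\geq 1}(k+1)x^{k} = x(2-x)/(1-x)^{2}$. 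This yields a closed form of order $\lambda/(N+1)^{2}$ for $\lambda$ bounded away from $N+1$, so that the full Chernoff exponent has the subGaussian shape $-\lambda t + C\lambda^{2}/(N+1)$.

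Optimizing over $\lambda$ in that quadratic gives $\mathbb{P}(|U_{(i)}-\mathbb{E}U_{(i)}|\geq t)\leq \exp\!\big(-(N+1)t^{2}/8\big)$, from which setting the right-hand side equal to $\delta_i$ produces exactly the claimed bound $t = \sqrt{8\log(1/\delta_i)/(N+1)}$. The role of the hypotheses is then to legitimize the range of $\lambda$ used in the optimization: the lower bound $\delta_i > e^{-N/5}$ keeps the optimizer $\lambda^{\star} = (N+1)t/4$ inside the region where the series bound is sharp (roughly $\lambda^{\star} \lesssim N/2$), the condition $N>5$ makes the polynomial inequalities in $N$ that arise from the series clean, and the upper bound $\delta_i < e^{-1}$ ensures $\log(1/\delta_i)\geq 1$ so that lower-order terms do not overtake the leading subGaussian term.

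The main technical obstacle is extracting the precise constant $8$: a crude bound using $\mathbb{E}[\Delta_i^{k}]\leq k!/(N+1)^{k}$ gives the correct $(N+1)$-scaling but a suboptimal numerical constant, and matching $\sqrt{8}$ requires carefully exploiting the tighter product formula $\prod_{j=1}^{k+1}(N+j)$ (which contains one extra factor larger than $N+1$) and tracking its interaction with the $\tfrac{N}{2}$ prefactor from Lemma~\ref{twosided}. Once that calculation is in place, the rest of the argument is routine Chernoff optimization, and the uniform bound of Proposition~\ref{concentration_all} then follows by a union bound over $i\in\{1,\dots,N\}$ with $\delta_i = \delta/N$.
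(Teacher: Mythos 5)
Your proposal follows the paper's proof essentially step for step: Chernoff's inequality applied to the one-sided MGF bound on the absolute deviation from Lemma~\ref{twosided}, the Taylor expansion of $e^{\lambda\Delta_i}-1$ combined with the closed-form Beta moments of the spacings, the geometric-series identity yielding an exponent of the form $2\lambda^2/(N+1)$ valid for $\lambda\lesssim 0.35N$, the optimizer $\lambda^\star=t(N+1)/4$, and the observation that $\delta_i>e^{-N/5}$ is exactly what keeps $\lambda^\star$ in the admissible range. The only cosmetic difference is that you lower-bound the product of factorial ratios by $(N+1)^{k+1}$ where the paper uses $N^k(N+1)$; both give the same constant $4$ and hence the same $\sqrt{8}$, so the approach is correct and matches the paper.
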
 
 \begin{proof}
We note that  by Chernoff's inequality
 \begin{equation}
 \label{markov}
 \mathbb P\Big(\Big|U_{(i)}-\frac{i}{N+1}\Big|>t\Big)\leq  \frac{\mathbb E[e^{\lambda|U_{(i)}-\frac{i}{N+1}|}]}{e^{\lambda t}}
 \end{equation}
 and from Lemma \ref{twosided} we see that
 \begin{equation}
 \label{boucheronthomas}
  \mathbb E [e^{\lambda|U_{(i)}-\frac{i}{N+1}| }] \leq e^{\lambda \frac{N}{2} \mathbb E[\Delta_i(e^{\lambda\Delta_i}-1)]}.
 \end{equation}
From the marginal density of the spacings we get 
$$\mathbb E[\Delta_i^k]\!=\!N \int_0^1 \!\! (s_i)^k (1-s_i)^{N-1} ds_i\! = \!N \mathcal{B}(k+1,N)\!= \! \frac{k! N!}{(N+k)!},$$
where we used the definition of the beta function $\mathcal{B}(x,y)= \int_0^1 t^{x-1} (1-t)^{y-1} dt =\frac{(x-1)! (y-1)!}{(x+y-1)!}$ for integers $x,y$.
Then 
\begin{align}
\label{crudebound}
 &\textstyle \mathbb{E}[\Delta_i(e^{\lambda\Delta_i}-1)]= \mathbb{E}[\Delta_i( \sum_{k=0}^\infty \frac{(\lambda \Delta_i)^k}{k!} -1)]\nonumber \\
 &\textstyle=  \sum_{k=1}^\infty \frac{\lambda^k}{k!} \mathbb{E}[\Delta_i^{k+1}]\nonumber
  = \sum_{k=1}^\infty \frac{\lambda^k(k+1) N!  }{(N+k+1)!} \nonumber\\
  &\textstyle\le   \sum_{k=1}^\infty \frac{\lambda^k(k+1) N!  }{N^{k} (N+1) N!} \nonumber=  \frac{1}{N+1} \sum_{k=1}^\infty \left(\frac{\lambda}{N}\right)^k(k+1)  \nonumber\\
  &\textstyle=  \frac{1}{N+1} \left[ \sum_{k=0}^\infty \left(\frac{\lambda}{N}\right)^k(k+1) -1 \right]\nonumber
         = \frac{1}{N+1} \left[ \frac{1}{(1-\frac{\lambda}{N})^2} -1 \right]\nonumber\\
    &\textstyle=  \frac{1}{N+1} \left[ \frac{1-(1-\frac{\lambda}{N})^2}{(1-\frac{\lambda}{N})^2} \right]\nonumber
                  =  \frac{1}{N+1} \left[ \frac{1-1-(\frac{\lambda}{N})^2+2\frac{\lambda}{N}}{(1-\frac{\lambda}{N})^2} \right]\nonumber\\
                  & \textstyle=  \frac{\lambda}{N(N+1)} \left[ \frac{2-\frac{\lambda}{N}}{(1-\frac{\lambda}{N})^2} \right]\nonumber\\
 \end{align}
 where we used $\sum_{k=0}^\infty \alpha^k (k+1)=\frac{1}{(1-\alpha)^2}$ (obtained by differentiating the geometric sum for $\frac \lambda N<1$).
 If we set $\lambda<0.35 N$ then  $$ \mathbb{E}[\Delta_i(e^{\lambda\Delta_i}-1)]\le  \frac{4\lambda}{N(N+1)}$$ since  $y<  \frac{7 - \sqrt{ 49 - 32}}{8} \approx 0.36$ implies $ \left[ \frac{2-y}{(1-y)^2} \right]<4$.
 Combining \eqref{markov}, \eqref{boucheronthomas} and \eqref{crudebound} yields
$$\mathbb P\Big(\Big|U_{(i)}-\frac{i}{N+1}\Big|>t\Big)\leq e^{\frac{2 \lambda^2}{N+1}}e^{-\lambda t}.$$
Minimizing over $\lambda$  leads to the choice $\lambda=\frac{t(N+1)}{4}$ and thus
$$\mathbb P\Big(\Big|U_{(i)}-\frac{i}{N+1}\Big|>t\Big)\leq  \exp\bigg(-\frac{t^2(N+1)}{8}\bigg).$$
The proof is concluded if we select $t=\sqrt{\frac{8\log(1/\delta_i)}{(N+1)}}$. Note that for this choice
$$\lambda\!=\! \frac{t(N+1)}{4}\!= \!\sqrt{\frac{8\log(1/\delta_i)}{(N+1)}} \frac{(N+1)}{4} \!=\! \sqrt{\frac{\log(1/\delta_i)(N+1)}{2} }.$$
We need to verify that   $\lambda<0.35N$ or  equivalently that $2(0.35)^2N^2 - \log(1/\delta_i)N - \log(1/\delta_i) >0 $. A sufficient condition is
$N >  \frac{\log(1/\delta_i) +\sqrt{ \log(1/\delta_i)^2 +8  \log(1/\delta_i) (0.35)^2}}{4 (0.35)^2}=:\bar N$.
Note that  
$ \bar N< \frac{ 1+\sqrt{ 1 +8   (0.35)^2}}{4 (0.35)^2} \log(1/\delta_i)<5\log(1/\delta_i)$, since $\log(1/\delta_i) >1$ for $\delta_i<e^{-1}$. Hence a simpler sufficient condition 
 is $N>5\log(1/\delta_i)$.

 \end{proof}

 \textbf{Proof of Proposition \ref{concentration_all}:}\\

 From Lemma \ref{concentration_abs} we known that for each $i=1,\dots,N$ if we set $\delta_i=\frac\delta N$ then with probability at least $1-\frac{\delta}{N}$ it holds
 $G_i:=\Big|U_{(i)}-\frac{i}{N+1}\Big|\leq \sqrt{\frac{8\log(N/\delta)}{(N+1)}}=:t.$
It then follows from the union bound that
$
\mathbb{P}\Big(\bigcap_{1\leq i\leq N}\{G_i\leq t\} \Big)=\Big(1-\mathbb{P}\Big( \bigcup_{1\leq i\leq N}\{G_i> t\} \Big) \Big)
  \geq 1- \sum_{i=1}^N \mathbb{P}\Big( G_i> t \Big)
\geq 1- \sum_{i=1}^N \frac{\delta}{N}=1-\delta.
$

\subsection*{Proof of Lemma \ref{lemma:max_degree}: A lower bound on the maximum expected degree}

Using the definition of $W_N$ and the reverse triangle inequality yields
\begin{align}
&\frac1NC^d_N=\frac1N\max_i\left( \sum_{j=1}^N P^{(N)}_{ij}   \right)=\frac1N\max_i\left( \sum_{j=1}^N W(u_i,u_j) \right  ) \notag\\
&=\max_{x\in[0,1]}\left(\int_0^1 W_N(x,y)dy\right) \ge \max_{x\in \mathcal C_N^c}\left(\int_0^1 W_N(x,y)dy\right) \notag \\
&\ge \max_{x\in \mathcal C_N^c}\left(\int_0^1 W(x,y)dy - \int_0^1 |D(x,y)|dy\right) \label{two_terms},
\end{align}
where $\mathcal C_N:=\{x\in[0,1]\mid \exists k\in\{1,\ldots,K\} \mbox{ s.t. } |x-\alpha_k|\le d_N \}$ is the subset of points in $[0,1]$ that are up to $d_N$ close to a discontinuity.
Note that  for any $x\in \mathcal C_N^c$, with probability $1-\delta'$ (see  part 1 of Theorem \ref{Thm1})
\begin{align*}
&\int_0^1 |D(x,y)|dy=\int_{C_N^c} |D(x,y)|dy +\int_{C_N} |D(x,y)|dy \\
&\le 2Ld_N+\textup{Area}(C_N) =2Ld_N+2Kd_N.
\end{align*}
Substituting in \eqref{two_terms} we get
$$\frac1NC^d_N \ge \max_{x\in \mathcal C_N^c}\left(\int_0^1 W(x,y)dy \right) -2(L+K)d_N.$$
Finally, note that Assumption \ref{lipschitz} implies that the degree $c^d(x)$ is piece-wise Lipschitz continuous, that is,  for any $k\in\{1,\ldots,K+1\}$ and any $x,x'\in \mathcal{I}_k$ it holds $|c^d(x)-c^d(x')|\le L|x-x'|.$ If $\Delta^{(\alpha)}_{\textup{MIN}}>2d_N$ this implies that
$$|\max_{x\in \mathcal C_N^c}\left(\int_0^1 W(x,y)dy \right)-C^d| \le Ld_N,$$
since there must be at least one point in  $\mathcal C_N^c$ which belongs to the same Lipschitz block as $\argmax c^d(x)$ and has distance $d_N$ from it. Overall, we have proven
\begin{align*}
\textstyle C^d_N \ge NC^d -N(3L+2K)d_N \ge \log\left(\frac{2N}{\delta} \right) \ge \frac49 \log\left(\frac{2N}{\delta} \right).
\end{align*}

\section*{Appendix C: Mathematical background}

For completeness, we provide a self-contained review of the mathematical tools required in the proofs of our results. The subsection on bounded linear operators is a condensed overview of concepts detailed in, e.g., \cite{conway2013, sauvigny2012b, deimling1985}. The subsection on perturbation theory introduces concepts necessary for a formal statement of the $\sin \theta$ theorem of \cite{davisandkahan1970} in the case of compact operators.

\subsection*{Bounded linear operators in Hilbert space}

Let us start by introducing some basic notions regarding linear operators in metric spaces.

\begin{definition}
Let $\mathcal{X}$, $\mathcal{Y}$ be normed linear spaces and let $\mathbb{L}:\mathcal{X}\to\mathcal{Y}$ be a linear operator.
\begin{mylist}
\item[(a)] $\mathbb{L}$ is continuous at a point $f\in \mathcal{X}$ if $f_n\to f$ in $\mathcal{X}$ implies $\mathbb{L}f_n\to \mathbb{L}f$ in $\mathcal{Y}$.
\item[(b)] $\mathbb{L}$ is continuous if it is continuous at every point, i.e. if $f_n\to f$ in $\mathcal{X}$ implies $\mathbb{L}f_n\to \mathbb{L}f$ in $\mathcal{Y}$ for every $f$.
\item[(c)] $\mathbb{L}$ is bounded if there exists a finite $M\geq 0$ such that, for all $f\in \mathcal{X}$,
$$\|\mathbb{L} f\|\leq M\|f\|. $$
Note that $\|L f\|$ is the norm of $L f$ in $\mathcal{Y}$, while $\|f\|$ is the norm of $f$ in $\mathcal{X}$.
\item[(d)] The operator norm of $\mathbb{L}$ is $\vertiii{\mathbb{L}}:=\sup_{\|f\|=1}\|\mathbb{L}f\|.$
\item[(e)] We let $\mathcal{B}(\mathcal{X,Y})$ denote the set of all bounded linear operators mapping $\mathcal{X}$ into $\mathcal{Y}$, that is
$$ \mathcal{B}(\mathcal{X,Y})=\{\mathbb{L}:\mathcal{X}\to\mathcal{Y}|~\mathbb{L}\mbox{ is bounded and linear}\}.$$
If $\mathcal{X}=\mathcal{Y}$ we write $\mathcal{B}(\mathcal{X})=\mathcal{B}(\mathcal{X,X})$
\end{mylist}
\end{definition}
The following Proposition shows that (a), (b) and (c) are equivalent.
\begin{proposition}
\label{equivboundcont}
 Let $\mathbb{L}:\mathcal{X}\to\mathcal{Y}$ be a linear operator. Then the following conditions are equivalent. 
 \begin{mylist}
 \item[(a)] $\mathbb{L}$ is continuous at every point of $\mathcal{X}$.
 \item[(b)] $\mathbb{L}$ is continuous at $0\in \mathcal{X}$.
 \item[(c)] $\|\mathbb{L}f\|$ is bounded on the unit ball $\{f\in \mathcal{X}; \|f\|\leq 1\}$.
\end{mylist}
 \end{proposition}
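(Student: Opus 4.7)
The plan is to establish the three-way equivalence by closing the chain of implications (a) $\Rightarrow$ (b) $\Rightarrow$ (c) $\Rightarrow$ (a). The first implication, (a) $\Rightarrow$ (b), is immediate from the definitions: continuity at every point of $\mathcal{X}$ includes continuity at the particular point $0 \in \mathcal{X}$.

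For (b) $\Rightarrow$ (c), I would invoke the $\epsilon$-$\delta$ characterization of continuity at $0$. Since $\mathbb{L}$ is linear we have $\mathbb{L}(0) = 0$, so continuity at $0$ with $\epsilon = 1$ produces some $\delta > 0$ such that $\|f\| \leq \delta$ implies $\|\mathbb{L}f\| \leq 1$. For an arbitrary $f$ in the unit ball, the rescaled vector $\delta f$ satisfies $\|\delta f\| \leq \delta$, and linearity gives $\|\mathbb{L}(\delta f)\| = \delta \|\mathbb{L}f\| \leq 1$, yielding the uniform bound $\|\mathbb{L}f\| \leq 1/\delta$ on the unit ball, as required.

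For (c) $\Rightarrow$ (a), I would first upgrade the bound on the unit ball to a Lipschitz-type bound on all of $\mathcal{X}$. Setting $M := \sup_{\|f\| \leq 1} \|\mathbb{L}f\| < \infty$, for any nonzero $g \in \mathcal{X}$ the vector $g / \|g\|$ lies in the unit ball, so linearity yields $\|\mathbb{L}g\| \leq M\|g\|$ (the inequality holds trivially for $g = 0$). Continuity at an arbitrary point $f \in \mathcal{X}$ then follows: whenever $f_n \to f$, linearity gives $\|\mathbb{L}f_n - \mathbb{L}f\| = \|\mathbb{L}(f_n - f)\| \leq M\|f_n - f\| \to 0$.

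There is no real obstacle here; this is a standard elementary result. The only subtle step is (b) $\Rightarrow$ (c), where one leverages the homogeneity of the norm and the linearity of $\mathbb{L}$ to convert a purely local continuity hypothesis at the origin into a uniform quantitative bound over the unit ball.
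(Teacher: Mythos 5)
Your proof is correct and is the standard argument; the paper itself states this proposition without proof, as part of a background review citing standard references, so there is no in-paper proof to diverge from. The only point worth flagging is that the paper's definition of continuity is sequential ($f_n\to f$ implies $\mathbb{L}f_n\to\mathbb{L}f$), so your $\epsilon$--$\delta$ step in (b) $\Rightarrow$ (c) implicitly uses the equivalence of sequential and $\epsilon$--$\delta$ continuity in normed spaces --- which is harmless, but could alternatively be avoided by arguing by contradiction (if $\|\mathbb{L}f_n\|\geq n^2$ on the unit ball, then $f_n/n\to 0$ while $\|\mathbb{L}(f_n/n)\|\geq n\not\to 0$).
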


Let us now focus on linear operators acting on Hilbert spaces.
\begin{proposition} (Adjoint)
Let $\mathbb{L}\in\mathcal{B}(\mathcal{X,Y})$, where $\mathcal{X}$ and $\mathcal{Y}$ are Hilbert spaces. Then there exists a unique bounded linear map $\mathbb{L}^
{*}:\mathcal{Y}\to \mathcal{X}$ such that
$$ \langle \mathbb{L}x,y \rangle=\langle x,\mathbb{L}^*y \rangle \mbox{ for all } x\in \mathcal{X},~ y\in \mathcal{Y}.$$
\end{proposition}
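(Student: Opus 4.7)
The plan is to prove existence via the Riesz representation theorem, and then deduce linearity, boundedness, and uniqueness from the defining identity. First, I would fix an arbitrary $y \in \mathcal{Y}$ and consider the map $\phi_y : \mathcal{X} \to \mathbb{R}$ (or $\mathbb{C}$) defined by $\phi_y(x) := \langle \mathbb{L}x, y \rangle$. Linearity of $\phi_y$ in $x$ is immediate from linearity of $\mathbb{L}$ and of the inner product in the first slot, while Cauchy--Schwarz combined with boundedness of $\mathbb{L}$ yields
$$|\phi_y(x)| \leq \|\mathbb{L}x\|\,\|y\| \leq \vertiii{\mathbb{L}}\,\|y\|\,\|x\|,$$
so $\phi_y$ is a bounded linear functional on the Hilbert space $\mathcal{X}$ with norm at most $\vertiii{\mathbb{L}}\,\|y\|$. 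The Riesz representation theorem then supplies a unique element $z_y \in \mathcal{X}$ with $\phi_y(x) = \langle x, z_y\rangle$ for all $x$, and I would define $\mathbb{L}^{*} y := z_y$.

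Next, I would check the three required properties of $\mathbb{L}^*$. For linearity, for any $y_1,y_2 \in \mathcal{Y}$ and scalars $\alpha,\beta$, the identity
$$\langle x, \mathbb{L}^*(\alpha y_1 + \beta y_2)\rangle = \langle \mathbb{L}x, \alpha y_1 + \beta y_2\rangle = \langle x, \alpha \mathbb{L}^*y_1 + \beta \mathbb{L}^* y_2\rangle$$
holds for every $x \in \mathcal{X}$, and picking $x = \mathbb{L}^*(\alpha y_1 + \beta y_2) - \alpha \mathbb{L}^*y_1 - \beta \mathbb{L}^* y_2$ forces linearity. For boundedness, the trick is the self-duality estimate
$$\|\mathbb{L}^* y\|^2 = \langle \mathbb{L}^* y, \mathbb{L}^* y\rangle = \langle \mathbb{L}\mathbb{L}^* y, y\rangle \leq \vertiii{\mathbb{L}}\,\|\mathbb{L}^* y\|\,\|y\|,$$
which yields $\vertiii{\mathbb{L}^*} \leq \vertiii{\mathbb{L}}$. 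Uniqueness is immediate: if $\mathbb{M}$ is any other such operator, then $\langle x, \mathbb{L}^*y - \mathbb{M}y\rangle = 0$ for every $x$, so taking $x = \mathbb{L}^*y - \mathbb{M}y$ gives $\mathbb{L}^* y = \mathbb{M} y$ for each $y$.

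There is no real obstacle here; the only subtlety worth noting is the implicit use of completeness of $\mathcal{X}$ through the Riesz representation theorem, which is where the Hilbert space hypothesis on $\mathcal{X}$ (not just $\mathcal{Y}$) is essential. In the complex case one must also remember that the inner product is conjugate-linear in the second slot, so the computation in the linearity step actually gives $\bar\alpha, \bar\beta$ on both sides and they cancel consistently; this is a bookkeeping point rather than a mathematical difficulty.
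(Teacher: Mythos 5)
Your proof is correct. The paper itself gives no proof of this proposition --- it appears in Appendix C as part of a background review, with the reader referred to standard texts (Conway, Sauvigny) --- and your argument via the Riesz representation theorem, followed by the standard verifications of linearity, the bound $\vertiii{\mathbb{L}^*}\leq\vertiii{\mathbb{L}}$, and uniqueness, is precisely the canonical proof those references give; your remarks about where completeness of $\mathcal{X}$ enters and about conjugate-linearity in the complex case are also accurate.
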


\begin{definition}
Let $\mathcal{X}$ be a Hilbert space and $\mathbb{L}\in\mathcal{B}(\mathcal{X})$.
\begin{mylist}
\item[(a)] $\mathbb{L}$ is self-adjoint if 
$\mathbb{L}=\mathbb{L}^*$ i.e. $\langle \mathbb{L}x,y\rangle=\langle x,\mathbb{L}y\rangle$ for all $x,y\in \mathcal{X}$.
\item[(b)] $\mathbb{L}$ is compact if it maps the unit ball in $\mathcal{X}$ to a set with compact closure.
\end{mylist}

\end{definition}

 We are now ready to state the spectral theorem for compact operators.
\begin{theorem} (Spectral theorem, \cite[Theorem 2, Chapter 8 \S 7]{sauvigny2012b})
Let the $\mathbb{\mathbb{L}}:\mathcal{X}\to\mathcal{X}$ be a compact self-adjoint operator on the Hilbert space $\mathcal{X}$ satisfying $\mathbb{L}\neq 0$. Then we have a finite or countably infinite system of orthonormal elements $\{\vf_k\}_{k\geq 1}$ in $\mathcal{X}$ such that 
\begin{mylist}
\item[(a)] The elements $\vf_k$ are eigenfunctions associated with the eigenvalues $\lambda_k\in\mathbb{R}$, i.e.
$$\mathbb{L}\vf_k=\lambda_k\vf_k, ~~~k=1,2,\dots $$
If the set of nonzero eigenvalues is infinite, then $0$ is the unique accumulation point.
\item[(b)] The operator $\mathbb{L}$ has the representation 
$$\mathbb{L}=\sum_{k\geq 1}\lambda_k\varphi_k\varphi_k^* ~~\mbox{ i.e. }~~\mathbb{L} f=\sum_{k\geq 1}\lambda_k\langle \vf_k,f\rangle\vf_k ~~~\mbox{ for all  } f\in\mathcal{X}.$$
\end{mylist}
\label{thm:spec}
\end{theorem}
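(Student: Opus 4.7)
The plan is to construct the eigenpairs one at a time via a variational principle, exploiting the interplay between compactness (to extract limits) and self-adjointness (to obtain invariant orthogonal complements), and then to deduce the spectral series from Bessel's inequality together with operator-norm bounds on the restrictions. The argument is the standard Hilbert--Schmidt/Rayleigh quotient strategy, structured in three steps.

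\textbf{Step 1 (first eigenpair).} I would set $\lambda_1^+ := \sup_{\|f\|=1} \langle \mathbb{L}f,f\rangle$ and $\lambda_1^- := \inf_{\|f\|=1} \langle \mathbb{L}f,f\rangle$, and recall that for bounded self-adjoint operators $\vertiii{\mathbb{L}} = \max(|\lambda_1^+|,|\lambda_1^-|)$, which is strictly positive because $\mathbb{L}\neq 0$. Pick $\lambda_1\in\{\lambda_1^+,\lambda_1^-\}$ of maximal modulus and a sequence $\{f_n\}$ with $\|f_n\|=1$ and $\langle \mathbb{L}f_n,f_n\rangle \to \lambda_1$. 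Since the unit ball in $\mathcal{X}$ is weakly sequentially compact, I would extract $f_{n_k}\rightharpoonup \vf_1$, and then compactness of $\mathbb{L}$ upgrades this to $\mathbb{L}f_{n_k}\to \mathbb{L}\vf_1$ in norm. Writing $\langle \mathbb{L}f_{n_k},f_{n_k}\rangle - \langle \mathbb{L}\vf_1,\vf_1\rangle = \langle \mathbb{L}(f_{n_k}-\vf_1),f_{n_k}\rangle + \langle \mathbb{L}\vf_1, f_{n_k}-\vf_1\rangle$ and passing to the limit gives $\langle \mathbb{L}\vf_1,\vf_1\rangle = \lambda_1$. Weak lower semicontinuity of the norm gives $\|\vf_1\|\leq 1$, and if the inequality were strict, rescaling would violate maximality; thus $\|\vf_1\|=1$. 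Finally, a first-order optimality condition on the unit sphere (perturbing $\vf_1$ by $t h$ with $h\perp \vf_1$ and differentiating the Rayleigh quotient at $t=0$) forces $\langle \mathbb{L}\vf_1, h\rangle = 0$ for every $h\in\vf_1^\perp$, i.e., $\mathbb{L}\vf_1 = \lambda_1 \vf_1$.

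\textbf{Step 2 (induction on invariant subspaces).} By self-adjointness, for any $g\perp \vf_1$ we have $\langle \mathbb{L}g,\vf_1\rangle = \langle g,\mathbb{L}\vf_1\rangle = \lambda_1\langle g,\vf_1\rangle = 0$, so $\mathcal{X}_1 := \vf_1^\perp$ is invariant under $\mathbb{L}$. The restriction $\mathbb{L}|_{\mathcal{X}_1}$ is again compact and self-adjoint on the Hilbert space $\mathcal{X}_1$. If it is zero, the construction terminates with finitely many eigenpairs; otherwise I repeat Step 1 on $\mathcal{X}_1$ to obtain $(\lambda_2,\vf_2)$ with $|\lambda_2|=\vertiii{\mathbb{L}|_{\mathcal{X}_1}}\leq |\lambda_1|$, and continue recursively.

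\textbf{Step 3 (decay of eigenvalues and expansion).} If the procedure does not terminate, I obtain an orthonormal system $\{\vf_k\}$ with $|\lambda_1|\geq |\lambda_2|\geq\cdots>0$. To show $\lambda_k\to 0$, I note that $\{\vf_k\}$ is bounded, so $\{\mathbb{L}\vf_k\}$ has a convergent subsequence by compactness; but orthonormality gives $\|\mathbb{L}\vf_k - \mathbb{L}\vf_j\|^2 = \lambda_k^2 + \lambda_j^2$, so convergence forces $\lambda_k\to 0$, with $0$ as the unique accumulation point. For the representation in (b), given $f\in\mathcal{X}$ set $R_n f := f - \sum_{k=1}^n \langle f,\vf_k\rangle \vf_k \in \mathcal{X}_n := \mathrm{span}\{\vf_1,\dots,\vf_n\}^\perp$; Bessel's inequality gives $\|R_n f\|\leq \|f\|$, and the inductive construction gives $\vertiii{\mathbb{L}|_{\mathcal{X}_n}}=|\lambda_{n+1}|$, so
\[
\Bigl\|\mathbb{L}f - \sum_{k=1}^n \lambda_k \langle f,\vf_k\rangle \vf_k\Bigr\| = \|\mathbb{L}R_n f\| \leq |\lambda_{n+1}|\,\|f\| \longrightarrow 0,
\]
which is exactly the claimed series expansion.

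The main obstacle is Step 1: establishing that the Rayleigh-quotient supremum is attained (rather than merely approached) and that a maximizer is automatically an eigenfunction. Attainment hinges on the decisive use of compactness --- the ingredient that promotes a weak limit of a maximizing sequence to a norm limit after application of $\mathbb{L}$ --- and the eigenequation then follows from a routine constrained-optimization argument on the unit sphere. Once this step is secured, Step 2 is a clean invariance computation and Step 3 relies only on compactness (for decay) and Bessel (for the expansion), both of which are mechanical.
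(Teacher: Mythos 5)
Your argument is correct. Note, however, that the paper does not prove this statement at all: it is quoted verbatim in Appendix C as classical background, with the proof delegated to the cited reference \cite[Theorem 2, Chapter 8 \S 7]{sauvigny2012b}. What you have written is the standard variational (Rayleigh-quotient) proof of the spectral theorem for compact self-adjoint operators, and all three steps are sound: the attainment of the extremal Rayleigh value via weak compactness of the unit ball upgraded by compactness of $\mathbb{L}$, the invariance of successive orthogonal complements via self-adjointness, the decay $\lambda_k \to 0$ from $\|\mathbb{L}\vf_k - \mathbb{L}\vf_j\|^2 = \lambda_k^2 + \lambda_j^2$, and the expansion from $\vertiii{\mathbb{L}|_{\mathcal{X}_n}} = |\lambda_{n+1}|$ together with Bessel. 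The only point worth tightening is the last clause of part (a): your construction shows that the \emph{constructed} sequence $\lambda_k$ tends to $0$, whereas the statement concerns the set of \emph{all} nonzero eigenvalues. This follows immediately once (b) is in hand --- any eigenfunction $\psi$ with $\mathbb{L}\psi = \mu\psi$, $\mu \neq 0$, orthogonal to every $\vf_k$ would satisfy $\mu\psi = \sum_k \lambda_k \langle \vf_k, \psi\rangle \vf_k = 0$, a contradiction, so every nonzero eigenvalue appears in your list --- but the sentence deserves to be said.
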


The following useful result shows that linear integral operators are compact.
\begin{proposition}
\label{compactness} (\cite[Chapter 2, Proposition 4.7]{conway2013})
If $K\in L^2([0,1]^2)$, then
$(\mathbb{K}f)(x)=\int_0^1 K(x,y)f(y)\mathrm{d}y $
is a compact operator. 
\end{proposition}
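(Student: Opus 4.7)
The plan is to exhibit $\mathbb{K}$ as an operator-norm limit of finite-rank operators, from which compactness follows since finite-rank operators are trivially compact and the set of compact operators is closed in $\mathcal{B}(\leb)$ under the operator-norm topology. Equivalently, this is the standard argument that every integral operator with $L^2$ kernel is Hilbert--Schmidt, hence compact. I choose this route because it only requires the elementary Cauchy--Schwarz inequality and the separability of $\leb$, and it avoids appealing to heavier results such as Arzel\`a--Ascoli.

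First, I would fix an orthonormal basis $\{e_n\}_{n \ge 1}$ of $\leb$. Then the tensor products $\{e_n \otimes e_m\}_{n,m \ge 1}$, defined by $(e_n \otimes e_m)(x,y) := e_n(x) e_m(y)$, form an orthonormal basis of $L^2([0,1]^2)$, so the kernel admits the expansion $K(x,y) = \sum_{n,m \ge 1} c_{nm} e_n(x) e_m(y)$ in $L^2$-sense, with $\sum_{n,m} |c_{nm}|^2 = \|K\|_{L^2([0,1]^2)}^2 < \infty$. I then introduce the truncated kernels $K_N(x,y) := \sum_{n,m \le N} c_{nm} e_n(x) e_m(y)$ and the associated integral operators $\mathbb{K}_N$. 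Writing $(\mathbb{K}_N f)(x) = \sum_{n \le N} e_n(x) \sum_{m \le N} c_{nm} \langle e_m, f \rangle$ makes it manifest that the range of $\mathbb{K}_N$ is contained in $\mathrm{span}\{e_1,\ldots,e_N\}$, so each $\mathbb{K}_N$ has rank at most $N$ and is therefore compact.

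The only quantitative step is to control $\vertiii{\mathbb{K}-\mathbb{K}_N}$ by the $L^2$ norm of the difference of kernels. For any $f \in \leb$ with $\|f\|=1$, the Cauchy--Schwarz inequality yields
$$\|(\mathbb{K}-\mathbb{K}_N)f\|^2 = \int_0^1 \left| \int_0^1 (K-K_N)(x,y) f(y)\,\mathrm{d}y \right|^2 \mathrm{d}x \le \|K-K_N\|_{L^2([0,1]^2)}^2 \, \|f\|^2,$$
so that $\vertiii{\mathbb{K}-\mathbb{K}_N} \le \|K-K_N\|_{L^2([0,1]^2)}$. Since $\|K-K_N\|_{L^2([0,1]^2)}^2 = \sum_{\max(n,m)>N} |c_{nm}|^2 \to 0$ by Parseval, $\mathbb{K}$ is the operator-norm limit of the compact operators $\mathbb{K}_N$, and hence compact.

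The main obstacle is genuinely just the Cauchy--Schwarz step above, which is routine; the slightly subtler conceptual point is that closedness of the compact operators in the operator-norm topology is what transfers compactness from the finite-rank approximants to the limit $\mathbb{K}$. This follows from the standard diagonal argument applied to sequences in the unit ball, and I would simply invoke it as a known fact from operator theory rather than reprove it.
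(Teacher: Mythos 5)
Your proof is correct and is essentially the argument behind the citation: the paper offers no proof of its own but refers to \cite[Ch.~2, Prop.~4.7]{conway2013}, whose proof is exactly this approximation of the kernel by finite sums of tensor products, the Cauchy--Schwarz bound $\vertiii{\mathbb{K}-\mathbb{K}_N}\le \|K-K_N\|_{L^2([0,1]^2)}$, and closedness of the compact operators under operator-norm limits. Nothing is missing.
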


We conclude this subsection with a generalization of the Perron-Frobenius theorem to linear operators in Hilbert space. Let us  first introduce some additional notions used in the statement of the result. A closed convex set $K\subset\mathcal{X}$ is called a \emph{cone} if $\lambda K\subset K$ for all $\lambda\geq 0$ and $K\cap (-K)=\{0\}$. If the set $\{u-v: u,v\in K\}$ is dense in $\mathcal{X}$, then $K$ is called a \emph{total cone}.

\begin{theorem}(Krein-Rutman theorem, \cite[Theorem 19.2]{deimling1985}) \label{thm:kr}
Let $\mathcal{X}$ be a Hilbert space, $K\subset\mathcal{X}$ a total cone and $\mathbb{L}:\mathcal{X}\to\mathcal{X}$ a compact linear operator that is positive (i.e. $\mathbb{L}(K)\subset K$) with positive spectral radius $r(\mathbb{L})$. Then $r(\mathbb{L})$ is an eigenvalue with an eigenvector $\varphi\in K\setminus\{0\}:\mathbb{L}\varphi=r(\mathbb{L})\varphi$. 
\end{theorem}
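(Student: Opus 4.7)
The plan is to prove this classical result via a two-step argument: first show that $r(\mathbb{L})$ itself is an eigenvalue of $\mathbb{L}$, then use positivity to produce an eigenvector lying in the cone $K$.

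For the first step, I would invoke the Riesz–Schauder theory of compact operators: the spectrum $\sigma(\mathbb{L})$ consists of $0$ together with an at-most countable set of eigenvalues, whose only possible accumulation point is $0$, and every nonzero point of $\sigma(\mathbb{L})$ is an eigenvalue of finite multiplicity. Combined with the general fact that for any bounded operator on a Banach space the spectral radius lies in the spectrum (a consequence of Gelfand's formula $r(\mathbb{L})=\lim_n\vertiii{\mathbb{L}^n}^{1/n}$ and closedness/compactness of $\sigma(\mathbb{L})$), one concludes that $r(\mathbb{L})>0$ is an eigenvalue. The remaining task is to select an eigenvector in $K$.

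For the second step, I would exploit positivity through the resolvent. For any $\lambda>r(\mathbb{L})$ the Neumann series $R_\lambda=(\lambda\mathbb{I}-\mathbb{L})^{-1}=\sum_{n\ge 0}\lambda^{-n-1}\mathbb{L}^n$ converges in operator norm, and since $\mathbb{L}(K)\subset K$ and $K$ is a closed convex cone, each partial sum maps $K$ into $K$; closedness of $K$ therefore gives $R_\lambda(K)\subset K$. Pick $u_0\in K\setminus\{0\}$ (which exists because $K$ is a total cone) and set $v_\lambda:=R_\lambda u_0\in K$. The key claim is that $\|v_\lambda\|\to\infty$ as $\lambda\downarrow r(\mathbb{L})$: if not, a bounded subnet together with compactness of $\mathbb{L}$ yields a limit $v^*\in K$ with $(r(\mathbb{L})\mathbb{I}-\mathbb{L})v^*=u_0$, which — by choosing $u_0$ suitably outside $\mathrm{Range}(r(\mathbb{L})\mathbb{I}-\mathbb{L})$, a proper closed subspace by the Fredholm alternative applied at the eigenvalue $r(\mathbb{L})$ and totality of $K$ — leads to a contradiction. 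Normalizing $\varphi_\lambda:=v_\lambda/\|v_\lambda\|\in K$ one then has $(\lambda\mathbb{I}-\mathbb{L})\varphi_\lambda = u_0/\|v_\lambda\|\to 0$. Compactness of $\mathbb{L}$ allows extraction of a subsequence $\lambda_n\downarrow r(\mathbb{L})$ with $\mathbb{L}\varphi_{\lambda_n}\to w$, hence $\varphi_{\lambda_n}=\lambda_n^{-1}(\mathbb{L}\varphi_{\lambda_n}+u_0/\|v_{\lambda_n}\|)\to \varphi:=w/r(\mathbb{L})$. Since $K$ is closed, $\varphi\in K$, $\|\varphi\|=1$, and passing to the limit gives $\mathbb{L}\varphi=r(\mathbb{L})\varphi$.

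The main obstacle is the choice of $u_0$ in the contradiction above: one must use totality of $K$ (density of $K-K$) together with the finite codimension of $\mathrm{Range}(r(\mathbb{L})\mathbb{I}-\mathbb{L})$ to ensure that some admissible $u_0\in K\setminus\{0\}$ is not in this range; the straightforward choice may fail. A technically cleaner alternative, which I would present as a backup, is to bypass the contradiction by perturbing $\mathbb{L}$ to $\mathbb{L}_\epsilon:=\mathbb{L}+\epsilon\mathbb{P}$ for a well-chosen finite-rank strictly positive operator $\mathbb{P}$: then a finite-dimensional Perron–Frobenius or Schauder fixed-point argument on a compact convex subset of $K$ intersected with the unit sphere produces an eigenvector $\varphi_\epsilon\in K$ with eigenvalue $\lambda_\epsilon\ge r(\mathbb{L})$, and a final compactness-and-continuity passage as $\epsilon\downarrow 0$ yields the desired Perron eigenvector for $\mathbb{L}$ with eigenvalue $r(\mathbb{L})$.
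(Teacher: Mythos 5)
First, a point of comparison: the paper does not prove this statement at all --- it is quoted as Theorem 19.2 of Deimling's book and used as a black box (to justify part 4 of Lemma~\ref{lem:graphonoperator}) --- so your proposal has to stand on its own. It does not, because of a genuine gap in your first step. The ``general fact'' you invoke, that for any bounded operator the spectral radius lies in the spectrum, is false in the form you need it: Gelfand's formula together with compactness of $\sigma(\mathbb{L})$ only yields a spectral value $\mu$ with $|\mu|=r(\mathbb{L})$, not that the positive real number $r(\mathbb{L})$ itself belongs to $\sigma(\mathbb{L})$ (a rotation by $\pi/2$ on a two-dimensional invariant subspace, extended by zero, is compact with spectral radius $1$ and spectrum $\{0,i,-i\}$). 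The assertion that $r(\mathbb{L})\in\sigma(\mathbb{L})$ is exactly where positivity must enter and is essentially the heart of the Krein--Rutman theorem. The gap propagates into your second step and makes it circular: your blow-up argument for $\|R_\lambda u_0\|$ as $\lambda\downarrow r(\mathbb{L})$ rests on the Fredholm alternative ``applied at the eigenvalue $r(\mathbb{L})$'' to make $\mathrm{Range}\bigl(r(\mathbb{L})\mathbb{I}-\mathbb{L}\bigr)$ a proper closed subspace; but if $r(\mathbb{L})$ is not already known to be an eigenvalue, then $r(\mathbb{L})\mathbb{I}-\mathbb{L}$ is invertible, $R_\lambda u_0$ converges to $\bigl(r(\mathbb{L})\mathbb{I}-\mathbb{L}\bigr)^{-1}u_0$, and no contradiction arises.

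The parts of your argument that do work are the positivity of the resolvent $R_\lambda(K)\subset K$, the use of totality to find $u_0\in K$ outside a proper closed subspace (if $K\subset\mathrm{Range}$ then $K-K\subset\mathrm{Range}$, which is dense and closed, a contradiction), and the compactness extraction of the normalized eigenvector once blow-up is granted. What is missing is an argument, independent of step 1, that $\sup_{\lambda\downarrow r(\mathbb{L})}\|R_\lambda u_0\|=\infty$ for some $u_0\in K$; the standard route is a Pringsheim-type argument on the power series $\sum_n\lambda^{-n-1}\mathbb{L}^n u_0$ with coefficients in $K$, tested against a positive functional --- this is where the cone structure genuinely enters. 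Your fallback via $\mathbb{L}+\epsilon\mathbb{P}$ is closer in spirit to Deimling's actual proof (approximation plus a fixed-point/degree argument), but as sketched it also has holes: $K$ intersected with the unit sphere is not convex, so Schauder's theorem does not apply to it directly, and the inequality $\lambda_\epsilon\ge r(\mathbb{L})$ (monotonicity of the spectral radius under positive perturbations) is itself nontrivial for cones without a lattice structure. As it stands, the proposal does not constitute a proof.
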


\subsection*{Perturbation theory for compact self-adjoint operators}

The natural definition of the angle between two nonzero vectors $\varphi$ and $\tilde{\varphi}$ in a Hilbert space $\mathcal{X}$ is the number
\begin{equation}\label{eq:angle}
\Theta(\varphi,\tilde{\varphi})=\cos^{-1}\bigg(\frac{\langle \varphi,\tilde{\varphi}\rangle}{\|\varphi\|\|\tilde{\varphi}\|}\bigg).
\end{equation}
Note that the above concept is well defined because of Cauchy-Schwartz inequality.
Consider now the two subspaces spanned by the two nonzero vectors $\varphi$ and $\tilde{\varphi}$, that is
 $[\varphi]:=\varphi \varphi^*\mathcal{X}$ and $[\tilde{\varphi}]:=\tilde{\varphi}\tilde{\varphi}^*\mathcal{X}$ .  One can extend  \eqref{eq:angle} to define an angle between the two subspaces  $[\varphi]$ and $[\tilde{\varphi}]$ as
$$\Theta([\varphi],[\tilde{\varphi}]):=\inf_{u,v}\Big\{\Theta(u,v);u\in[\varphi], v\in[\tilde{\varphi}]\Big\} .$$
More generally, one can extend this definition of angle to subspaces spanned by eigenfunctions. This will be particularly useful in situations where we are interested in a compact self-adjoint operator $\mathbb{L}$ but we only have access to a modified operator $\widetilde{\mathbb{L}}=\mathbb{L}+\mathbb{H}$. Indeed, in this case one way to measure how close these operators are is to measure the angle between subspaces spanned by their eigenfunctions. Let us introduce some notation in order to formalize this. We write the subspace (eigenspace) spanned by the eigenfunctions $\{\varphi_k\}_{k=1}^m$ of $\mathbb{L}$ by $[\mathbb{E}_0]:=[\begin{matrix} \varphi_1 & \varphi_2 &\dots & \varphi_m \end{matrix}]$. We denote the projector of $[\mathbb{E}_0]$ by $\mathbb{P}_0=\mathbb{E}_0\mathbb{E}_0^*=\sum_{k=1}^m\varphi_k\varphi_k^*$ and its complementary projector by $\mathbb{P}_1=\mathbb{E}_1\mathbb{E}_1^*$. Now any vector $x\in\mathcal{X}$ can be written as 
$$x=\begin{pmatrix}\mathbb{E}_0 & \mathbb{E}_1\end{pmatrix}\begin{pmatrix} x_0 \\ x_1 \end{pmatrix}=\mathbb{E}_0x_0+\mathbb{E}_1x_1 ,$$
where $x_0=\mathbb{E}_0^*x$ and $x_1=\mathbb{E}_1^*x$. We therefore say that $x$ is represented by $\begin{pmatrix} x_0 \\ x_1 \end{pmatrix}$.  The corresponding notation for an operator $\mathbb{L}:\mathcal{X}\to\mathcal{X}$ is 
 $$\mathbb{L}=\begin{pmatrix}\mathbb{E}_0 &\mathbb{E}_1\end{pmatrix}\begin{pmatrix} \mathbb{L}_0 & 0 \\ 0 & \mathbb{L}_1 \end{pmatrix} \begin{pmatrix}\mathbb{E}_0^* \\ \mathbb{E}_1^*\end{pmatrix}=\mathbb{E}_0\mathbb{L}_0\mathbb{E}_0^*+\mathbb{E}_1\mathbb{L}_1\mathbb{E}_1^*, $$  
where $\mathbb{LE}_0=\mathbb{E}_0\mathbb{L}_0$ and $\mathbb{LE}_1=\mathbb{E}_1\mathbb{L}_1$.
Similarly, we can consider the eigenspace $[\mathbb{F}_0]$ spanned by the eigenfunctions $\{\tilde{\varphi}_k\}_{k=1}^m$ of $\mathbb{L+H}$ and write 
$$ \widetilde{\mathbb{L}}=\mathbb{L+H}=\mathbb{F}_0\widetilde{\mathbb{L}}_0 \mathbb{F}_0^*+\mathbb{F}_1\widetilde{\mathbb{L}}_1\mathbb{F}_1^*.$$  
The problem of measuring the closeness between the eigenspaces  $[\mathbb{E}_0]=\mathbb{P}_0\mathcal{X}$ and  $[\mathbb{F}_0]=\widetilde{\mathbb{P}}_0\mathcal{X}$ can be tackled by looking at the angle between these subspaces. To do so,  we can define a diagonal operator $\Theta_0$ using the principal angles between $\mathbb{E}_0$ and $\mathbb{F}_0$, i.e., $\begin{pmatrix} \cos^{-1}(s_1)&\dots&\cos^{-1}(s_m)\end{pmatrix}$ where $s_1\geq \dots\geq s_m$ are the singular values of $\mathbb{E}_0^*\mathbb{F}_0$, or equivalently the square-root of the nonzero eigenvalues of $\mathbb{E}_0^*\mathbb{F}_0\mathbb{F}_0^*\mathbb{E}_0$. Then, writing $\mathbb{S}:=\diag\begin{pmatrix} \cos^{-1}(s_1)&\dots&\cos^{-1}(s_m)\end{pmatrix}$, we can define $\Theta_0=\Theta(\mathbb{E}_0,\mathbb{F}_0)$ as
 $$\textstyle \Theta_0=\cos^{-1}(\mathbb{S}) ~\mbox{ i.e. }~\Theta_0 f=\sum_{k=1}^ m\cos^{-1}(s_k)\langle \phi_k,f\rangle\phi_k $$  for all $f\in\mathcal{X}$ and any basis $\{\phi\}_{k=1}^\infty.$
 We are now ready to state the Davis-Kahan $\sin \theta$ theorem.

\begin{theorem} (Davis-Kahan $\sin\theta$ theorem  \cite{davisandkahan1970})
Let $\mathbb{L}$ and $\widetilde{\mathbb{L}}=\mathbb{L+H}$ be two self-adjoint operators acting on the Hilbert space $\mathcal{X}$ such that $\mathbb{L}=\mathbb{E}_0\mathbb{L}_0\mathbb{E}_0^*+\mathbb{E}_1\mathbb{L}_1\mathbb{E}_1^*$ and $\mathbb{L+H}=\mathbb{F}_0\widetilde{\mathbb{L}}_0 \mathbb{F}_0^*+\mathbb{F}_1\widetilde{\mathbb{L}}_1\mathbb{F}_1^*$  with $[\mathbb{E}_0,\mathbb{E}_1]$ and $[\mathbb{F}_0,\mathbb{F}_1]$ orthogonal. If the eigenvalues of $\mathbb{L}_0$ are contained in an interval $(a,b)$, and the eigenvalues of $\widetilde{\mathbb{L}}_1$ are excluded from the interval $(a-\delta,b+\delta)$ for some $\delta>0$, then
 $$ \|\sin\Theta(\mathbb{E}_0,\mathbb{F}_0)\|=\|\mathbb{F}_1^*\mathbb{E}_0\| \leq  \frac{\|\mathbb{F}_1^*\mathbb{H}\mathbb{E}_0\|}{\delta} $$
for any unitarily invariant operator norm $\|\cdot\|$.
\label{thm:dk}
\end{theorem}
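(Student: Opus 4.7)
My plan is to reduce the theorem to a Sylvester-type operator equation $\widetilde{\mathbb{L}}_1 X - X\mathbb{L}_0 = Y$, solve it using the separation of the spectra of $\widetilde{\mathbb{L}}_1$ and $\mathbb{L}_0$, and then transfer the resulting bound into the given unitarily invariant norm $\|\cdot\|$.

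The first step is to justify the equality $\|\sin\Theta(\mathbb{E}_0,\mathbb{F}_0)\|=\|\mathbb{F}_1^*\mathbb{E}_0\|$. Because $[\mathbb{F}_0,\mathbb{F}_1]$ is orthogonal, $\mathbb{F}_0\mathbb{F}_0^*+\mathbb{F}_1\mathbb{F}_1^*=\mathbb{I}$; sandwiching this identity between $\mathbb{E}_0^*$ and $\mathbb{E}_0$ and using $\mathbb{E}_0^*\mathbb{E}_0=\mathbb{I}$ yields
\begin{equation*}
\mathbb{E}_0^*\mathbb{F}_1\mathbb{F}_1^*\mathbb{E}_0 \;=\; \mathbb{I}-\mathbb{E}_0^*\mathbb{F}_0\mathbb{F}_0^*\mathbb{E}_0 \;=\; \mathbb{I}-\mathbb{S}^2,
\end{equation*}
where $\mathbb{S}$ is the diagonal operator of singular values $s_k$ of $\mathbb{E}_0^*\mathbb{F}_0$. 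Hence the singular values of $\mathbb{F}_1^*\mathbb{E}_0$ are precisely $\sqrt{1-s_k^2}$, which by definition are the diagonal entries of $\sin\Theta_0$. Unitary invariance of $\|\cdot\|$ depends only on the singular value sequence, so the two norms coincide.

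The second step is to derive the Sylvester equation. Starting from $\widetilde{\mathbb{L}}\mathbb{E}_0=\mathbb{L}\mathbb{E}_0+\mathbb{H}\mathbb{E}_0=\mathbb{E}_0\mathbb{L}_0+\mathbb{H}\mathbb{E}_0$ and using the intertwining $\mathbb{F}_1^*\widetilde{\mathbb{L}}=\widetilde{\mathbb{L}}_1\mathbb{F}_1^*$ (obtained by taking the adjoint of $\widetilde{\mathbb{L}}\mathbb{F}_1=\mathbb{F}_1\widetilde{\mathbb{L}}_1$ together with self-adjointness of $\widetilde{\mathbb{L}}_1$), left-multiplication by $\mathbb{F}_1^*$ gives
\begin{equation*}
\widetilde{\mathbb{L}}_1 (\mathbb{F}_1^*\mathbb{E}_0)-(\mathbb{F}_1^*\mathbb{E}_0)\mathbb{L}_0 \;=\; \mathbb{F}_1^*\mathbb{H}\mathbb{E}_0.
\end{equation*}
Setting $X:=\mathbb{F}_1^*\mathbb{E}_0$ and $Y:=\mathbb{F}_1^*\mathbb{H}\mathbb{E}_0$ reduces the task to bounding the solution of $TX:=\widetilde{\mathbb{L}}_1 X - X\mathbb{L}_0 = Y$ by $\|Y\|/\delta$. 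Expanding in the spectral resolutions $\mathbb{L}_0=\sum_i\lambda_i\phi_i\phi_i^*$ and $\widetilde{\mathbb{L}}_1=\sum_j\mu_j\psi_j\psi_j^*$ (or the corresponding projection-valued measures in the continuous case), the entries $X_{ji}=\langle\psi_j,X\phi_i\rangle$ satisfy the scalar equations $(\mu_j-\lambda_i)X_{ji}=Y_{ji}$; by hypothesis $\lambda_i\in(a,b)$ and $\mu_j\notin(a-\delta,b+\delta)$, so $|\mu_j-\lambda_i|\ge\delta$ uniformly and $T^{-1}$ acts as a double operator integral with bounded symbol $\phi(\mu,\lambda)=1/(\mu-\lambda)$ of uniform norm at most $1/\delta$.

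The main obstacle is the final transfer: proving that $T^{-1}$ is contractive by the factor $\|\phi\|_\infty\le 1/\delta$ in an \emph{arbitrary} unitarily invariant norm, not merely in Hilbert--Schmidt where it is immediate from $\|X\|_{HS}^2=\sum_{i,j}|X_{ji}|^2\le \delta^{-2}\sum_{i,j}|Y_{ji}|^2$. To handle a general unitarily invariant norm I would invoke the Birman--Solomyak double-operator-integral calculus, which guarantees that for commuting spectral measures $E_{\widetilde{\mathbb{L}}_1}$, $E_{\mathbb{L}_0}$ the map $Y\mapsto \iint\phi(\mu,\lambda)\,dE_{\widetilde{\mathbb{L}}_1}(\mu)\,Y\,dE_{\mathbb{L}_0}(\lambda)$ is bounded on every norm ideal by $\|\phi\|_\infty$; alternatively, one can discretize the spectra, reduce to block-diagonal matrix Sylvester problems where the bound follows from the Ky Fan dominance characterization of unitarily invariant norms, and then pass to the limit. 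Combining this with step one yields $\|\sin\Theta(\mathbb{E}_0,\mathbb{F}_0)\|=\|X\|\le\|Y\|/\delta=\|\mathbb{F}_1^*\mathbb{H}\mathbb{E}_0\|/\delta$, completing the proof.
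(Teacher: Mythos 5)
The paper itself does not prove this statement: it is quoted as background from Davis and Kahan's 1970 paper in Appendix C, so there is no in-house proof to compare against. Judged on its own terms, your reduction follows the classical route and its first two steps are sound: the identity $\mathbb{E}_0^*\mathbb{F}_1\mathbb{F}_1^*\mathbb{E}_0=\mathbb{I}-\mathbb{S}^2$ correctly identifies the singular values of $\mathbb{F}_1^*\mathbb{E}_0$ with $\sin(\cos^{-1}(s_k))$, and the intertwining relations do yield the Sylvester equation $\widetilde{\mathbb{L}}_1X-X\mathbb{L}_0=\mathbb{F}_1^*\mathbb{H}\mathbb{E}_0$ for $X=\mathbb{F}_1^*\mathbb{E}_0$.

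The gap is in your third step. You justify the bound $\|X\|\le\|Y\|/\delta$ in an arbitrary unitarily invariant norm by asserting that a double operator integral with bounded symbol is bounded on every norm ideal by $\|\phi\|_\infty$. That is false in general: the Schur-multiplier norm of a symbol can strictly exceed its sup norm. More pointedly, the only information your argument extracts from the hypotheses is $|\mu_j-\lambda_i|\ge\delta$, and under mere $\delta$-separation of the two spectra the optimal constant in the conclusion is $\pi/(2\delta)$, not $1/\delta$ (the Bhatia--Davis--McIntosh theorem; the factor $\pi/2$ is attained). To get the constant $1$ claimed here you must use the specific geometry of the hypothesis, namely that $\mathrm{spec}(\mathbb{L}_0)$ lies in an interval while $\mathrm{spec}(\widetilde{\mathbb{L}}_1)$ avoids its $\delta$-dilation. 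The standard repair: translate both operators by $c=(a+b)/2$ and set $r=(b-a)/2$, so that $\|\mathbb{L}_0-c\mathbb{I}\|_{\mathrm{op}}\le r$ while $\widetilde{\mathbb{L}}_1-c\mathbb{I}$ is invertible with $\|(\widetilde{\mathbb{L}}_1-c\mathbb{I})^{-1}\|_{\mathrm{op}}\le 1/(r+\delta)$; the Sylvester equation is unchanged by the common shift, and iterating $X=(\widetilde{\mathbb{L}}_1-c\mathbb{I})^{-1}Y+(\widetilde{\mathbb{L}}_1-c\mathbb{I})^{-1}X(\mathbb{L}_0-c\mathbb{I})$ gives the norm-convergent series $X=\sum_{k\ge0}(\widetilde{\mathbb{L}}_1-c\mathbb{I})^{-(k+1)}Y(\mathbb{L}_0-c\mathbb{I})^{k}$. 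Since every unitarily invariant norm satisfies $\|AYB\|\le\|A\|_{\mathrm{op}}\,\|Y\|\,\|B\|_{\mathrm{op}}$, this sums to $\|X\|\le\frac{1}{r+\delta}\sum_{k\ge0}\left(\frac{r}{r+\delta}\right)^{k}\|Y\|=\|Y\|/\delta$. With that argument substituted for your double-operator-integral appeal, the outline becomes a complete proof.
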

Note that the above theorem holds even for non-compact operators. Indeed,  one might consider more general orthogonal subspaces defined through their projectors, which in turn might not be written as countable sums of the product of the elements of an orthogonal basis \cite{davisandkahan1970}. 

\end{document}